\renewcommand{\>}{\rangle}
\newcommand{\R}{\mathbb{R}}
\newcommand{\C}{\mathbb{C}}
\renewcommand{\H}{\mathbb{H}}
\newtheorem{theorem}{Theorem}
\newtheorem{prop}{Proposition}
\newtheorem{remark}{Remark}
\newtheorem{corollary}{Corollary}
\newcommand{\adjoint}{\dagger} % use \dagger for physics, * for applied math
\DeclareMathOperator*{\trace}{tr}
\DeclareMathOperator*{\base}{base}
\DeclareMathOperator{\mineig}{eig_{min}}
\newcommand{\APOVM}[2]{A_#1^{(#2)}} % we can choose the formatting here...
\newcommand{\Ail}{\APOVM{k}{\ell}}   %  May 11 2025, let's use k not i. Save i for flat/linear indexing
\newcommand{\fPOVM}[2]{f_#1^{(#2)}} % we can choose the formatting here...
\newcommand{\fil}{\fPOVM{k}{\ell}}
\newcommand{\nPOVM}{M} % number of different POVMs. Let's use M or L (earlier we used K)
\newcommand{\nEachPOVM}{m} % size of each POVM
\newcommand{\nTotalPOVM}{\widetilde{m}} % product of the above, mM, 
\newcommand{\boldf}{\mathbf{f}}
\newcommand{\likelihood}{\mathcal{L}}
\newcommand{\f}{J}
\newcommand{\fvec}{j} % vectorized version
\renewcommand{\vec}[1]{\boldsymbol{#1}} % curly
\newcommand{\mat}[1]{\vec{#1}}
\newcommand{\U}{\mat{U}} 
\newcommand{\uu}{\vec{u}}  % don't use \u since it's used for accents
\renewcommand{\Vec}{\text{vec}}
\DeclareMathOperator{\hvec}{hvec} % h for Hermitian
\DeclareMathOperator{\hmat}{hmat}
\DeclareMathOperator{\rvec}{vec} % r for regular; synonym with \Vec
\DeclareMathOperator{\rmat}{mat} % synonhym with \Mat
\DeclareMathOperator{\rank}{rank}
\begin{document}
% \preprint{APS/123-QED}

\title{Rigorous Maximum Likelihood Estimation for Quantum States
}% 

\author{K. Aditi}
\email{email: kuchibhotla.aditi@colorado.edu}
% \altaffiliation[Also at ]{}%Lines break automatically or can be forced with \\
\author{Stephen Becker}%
\email{email: stephen.becker@colorado.edu}
\affiliation{%
University of Colorado Boulder, Boulder, CO, USA
}%

%\collaboration{CLEO Collaboration}%\noaffiliation

%\date{\today}% It is always \today, today,
             %  but any date may be explicitly specified

\begin{abstract}
Existing quantum state tomography methods are limited in scalability due to their high computation and memory demands, making them impractical for recovery of large quantum states. In this work, we address these limitations by reformulating the maximum likelihood estimation (MLE) problem using the Burer–Monteiro factorization, resulting in a non-convex but low-rank parameterization of the density matrix. We derive a fully unconstrained formulation by analytically eliminating the trace-one and positive semidefinite constraints, thereby avoiding the need for projection steps during optimization. Furthermore, we determine the Lagrange multiplier associated with the unit-trace constraint \emph{a priori}, reducing computational overhead. The resulting formulation is amenable to scalable first-order optimization, and we demonstrate its tractability using limited-memory BFGS (L-BFGS). Importantly, we also propose a low-memory version of the above algorithm to fully recover certain large quantum states with Pauli-based POVM measurements. Our low-memory algorithm avoids explicitly forming any density matrix, and does not require the density matrix to have a matrix product state (MPS) or other tensor structure. For a fixed number of measurements and fixed rank, our algorithm requires just $\mathcal{O}(d\log d)$ complexity per iteration to recover a $d \times d$ density matrix. Additionally, we derive a useful error bound that can be used to give a rigorous termination criteria. We numerically demonstrate that our method is competitive with state-of-the-art algorithms for moderately sized problems, and then demonstrate that our method can solve a $20$ qubit problem on a laptop in under 5 hours.
\end{abstract}

%\keywords{Suggested keywords}%Use showkeys class option if keyword
                              %display desired
\maketitle

%\tableofcontents

\section{\label{sec:intro}Introduction} 
Many emerging quantum technologies such as 
%Quantum information has emerged as a promising paradigm, enabling groundbreaking advancements in
quantum computing, quantum communication, quantum sensing, quantum machine learning and quantum cryptography rely on preparing suitable quantum states. A basic requirement is to 
%, and a wide range of other critical applications. To enable these applications, %suitable quantum states need to be prepared. More importantly, we need to
verify that the prepared state is indeed what was intended %to create 
and whether the output state of a quantum circuit is what is theoretically expected. Quantum State Tomography (QST) \cite{cramer2010efficient, Flammia_2012, PhysRevLett.111.020401, Guta_2020} is the process of inferring the state of a quantum system and is a fundamental task in quantum information. 
QST is the most comprehensive estimation technique, providing maximal information about an unknown quantum state. In situations where no prior structural assumptions can be made about the states or processes, QST is indispensable. It allows for the complete reconstruction of the state and facilitates the identification of deviations from the intended target state due to imperfect implementations. In particular, QST is a \emph{diagnostic} tool for experimental quantum platforms~\cite{eisert2020quantum}, in contrast to some methods which can only \emph{certify accuracy}.
%
%. As such, QST serves as the ultimate diagnostic tool for benchmarking and improving quantum experimental platforms \cite{eisert2020quantum}. 
It enables the detection of systematic errors, such as coherent phase drifts or correlated noise across qubits which cannot be identified 
via other means, 
%from expectation values of a small set of observables,
%making it indispensable
and thus has found use 
in high-precision settings \cite{PhysRevLett.111.120403},
and is sometimes referred to as the ``gold standard"~\cite{PRXQuantum.3.040310} for characterizing quantum systems.
% For
% It also excels at characterizing 
% non-linear properties like entanglement across bipartition of a system %, there is no better method known than full QST
% \cite{PRXQuantum.3.040310}. 
% For these reasons, and
Thus despite the considerable resource demands of QST, it remains worthwhile to develop algorithms that extend its applicability to larger system sizes. %, pushing the limits of what can be achieved with current hardware capabilities.
These efforts in QST are complementary to approaches that do partial tomography~\cite{PermutationallyInvariant_Toth2010}, directly estimate the fidelity with a target state~\cite{PhysRevLett.106.230501}, or perform a benchmarking, certification or verification task~\cite{pallister2018optimal,buadescu2019quantum,Huang2020,RandomizedBenchmarking_Dankert2009}.

% in certification and benchmarking and partial tomography.
% Quantum State Verification (QSV)~\cite{pallister2018optimal} and Quantum State Certification~\cite{buadescu2019quantum}
%   also compare with classical shadows~\cite{Huang2020}
% randomized benchmarking~\cite{RandomizedBenchmarking_Dankert2009}
%   DFE 

The major difficulty in QST is the exponential growth in the size of the Hilbert space dimension, $d = 2^n$, as the number of qubits $n$ grows. Any quantum system of importance that could solve a real problem, otherwise difficult to solve classically, would require hundreds to thousands of qubits. For instance, even our current Noisy Intermediate-Scale Quantum (NISQ) era is characterized by quantum processors containing up to $1000$ qubits. Given the large dimensionality, there are two challenges: an experimental challenge of collecting $\nTotalPOVM=\mathcal{O}(d^2)$ measurements needed to determine all the degrees of freedom in the $d\times d$ state matrix, and a computational challenge to compute and store the state matrix.

Addressing these challenges usually requires some structural assumptions on the target state, such as assuming it is pure (or nearly pure) or more generally low-rank~\cite{Guta_2020, JMLR:v16:koltchinskii15a, KUENG201788} or has a matrix product state (MPS) structure~\cite{cramer2010efficient, PhysRevLett.111.020401, Baumgratz_2013, lanyon2017efficient, PRXQuantum.4.040345}. There is a long line of work 
showing that these assumptions 
% making progress
can lead to a reduced demand for measurements, with $\nTotalPOVM=\mathcal{O}(n)$ for MPS and $\nTotalPOVM=\mathcal{O}(d)$ for low-rank assumptions, as well as reduced computational time. %, and there is a long line of work making progress.  
Our contribution is to present an algorithm that solves the traditional maximum likelihood estimation (MLE) formulation~\cite{vrehavcek2001iterative,QuantumMLE_Hradil1997} for a fixed measurement size $\nTotalPOVM$ in $\mathcal{O}(\nTotalPOVM d)$ flops (per iteration) and  $\mathcal{O}(\nTotalPOVM d)$ memory which, whenever $\nTotalPOVM <d^2$ and the target state is nearly pure, improves upon existing state-of-the-art schemes that require $\mathcal{O}(d^3)$ flops and memory.  We reformulate the convex MLE problem as a non-convex problem to exploit structure and remove constraints, but show how we can still recover provable {\em a posteriori} guarantees as if we had solved the original convex formulation.

Specifically, we assume the target state is low-rank or at least well-approximated by a low-rank state, and use the Burer-Monteiro (BM)~\cite{burer2005local} technique to parameterize the positive semidefinite density matrix as $\mat{U}\mat{U}^\dagger$, thus automatically enforcing the positive semi-definiteness (PSD) constraint, where $\mat{U}$ is a rectangular matrix of rank much lower than the dimension of the density matrix. This procedure is inherently more scalable as it eliminates the need for any eigenvalue decomposition, which would scale as $\mathcal{O}(d^3)$. Such BM approaches for QST have been proposed before, but either under unrealistic assumptions on the measurements~\cite{becker2013randomized} and/or without provable guarantees, and usually only for the least-squares objective which does not properly consider statistical uncertainty from Born's rule.

%In contrast, we show that the Lagrangian associated with the non-convex maximum-likelihood estimation (MLE) with the BM reformulation can be made unconstrained where the dual variable associated with the unit-norm constraint can be determined {\em a priori}. We also show that the optimization error of the proposed non-convex MLE objective can be bounded by a computationally tractable estimate. We solve the non-convex optimization problem using first-order methods such as L-BFGS and accelerated gradient descent. Our experimental results confirm that the optimization error is indeed upper bounded for our proposed setup and that our method is comparable to the best existing state-of-the-art solvers in moderate regimes, and our method is the only applicable solver for $n \ge 18$ qubits without resorting to the MPS assumption.

% To motivate the readers to follow the technical details of this article,
We now briefly outline the contributions and provide a summary of our results. To begin with, we show that the Lagrangian associated with the non-convex maximum-likelihood estimation (MLE) with the BM reformulation can be made unconstrained where the dual variable associated with the unit-norm constraint can be determined {\em a priori}. This fact substantially reduces the difficulty of the optimization and enables a well-structured approach, as opposed to relying solely on heuristic strategies. Furthermore, we show that the optimization error of the proposed non-convex MLE objective can be upper bounded by a computationally tractable quantity that depends on the current feasible estimate and the derivative of the likelihood function evaluated at this estimate. This result provides theoretical guarantees on the quality of the solution of our approach. 

Our numerical simulations confirm that the optimization error remains consistently upper bounded in practice, aligning with the theoretical analysis. Moreover, in moderate system sizes, our method achieves performance comparable to the best existing state-of-the-art solvers. %, validating both its accuracy and robustness.
Finally, we demonstrate both analytically and computationally that our low-memory method has reduced computational complexity, scaling only log-linearly with the system size $d$, making it the only applicable solver for $n \geq 18$ qubits, enabling full recovery of quantum states with specific structure. 
% tab:complexity

\subsection{\label{sec:related}Related work}

QST has a long history and we review some of the literature, categorizing it by the assumptions made on the quantum state.
%Over the years, the problem of QST has been tackled through various approaches based on linear inversion, low-rank recovery methods, SDPs, compressed sensing, and lately with non-convex methods.
We start with methods that add no assumptions on the quantum state. 
% \subsubsection{No assumptions}
The simplest approach is solving an unconstrained least-squares (LS) problem, leading to the linear inversion method. While computationally cheap, and even cheaper for specially structured measurements~\cite{Guta_2020}, the method
can generate an infeasible state that is not a density matrix because its trace is not unity and/or the eigenvalues are negative \cite{Qi_2013}; this is often partially addressed by the heuristic of \emph{a posteriori} normalizing (which does not correspond to Euclidean projection). From a statistical point of view, an enormous number of low-noise measurements are needed in order to have an accurate reconstruction.

The projective least-squares (PLS) method analyzed in \cite{Guta_2020} is a two-step procedure starting with an unconstrained least-squares estimate, as in the linear inversion method, followed by a second step to project the LS estimate onto a set of feasible density matrices which requires an eigenvalue decomposition and hence a computational cost of $\mathcal{O}(d^3)$. Their paper provides theoretical recovery guarantees, although the simulation results go up to only $5$ qubits. Other similar multi-step approaches include~\cite{Hou_2016, PhysRevLett.108.070502}. 

Going farther, the physical constraints can be added to the least-squares problem, leading to a constrained least-squares problem, though requiring an optimization solver and being significantly more computationally costly than unconstrained least-squares. Examples of this approach include~\cite{liu2011universal}. %\textcolor{red}{[TODO]}

While the least-squares objective is common, mostly due to its simplicity and computational ease, a more principled approach is maximum likelihood estimation (MLE) which takes into account both the statistical nature of the measurements and the physical constraints. While this approach is attractive due to its statistically motivated nature and the fact that it returns physically feasible states, and hence has been explored in many works including \cite{PhysRevA.64.052312, PhysRevA.63.020101, PhysRevA.75.042108, PhysRevA.61.010304, Acharya_2019}, it is significantly more computationally expensive than any of the unconstrained or constrained least-squares approaches.

% We discussed above several techniques used for quantum tomography. Another aspect that plays a crucial role in reconstructing the state from the measurements is the objective function. The choice of the objective function influences the reconstruction method, computational efficiency, and robustness to noise. Some of the most common objective used for optimization in QST are least-squares and maximum-likelihood estimation. The least-squares (LS) objective minimizes the error between the empirical frequencies (probabilities) and the predicted probabilities generated by the unknown state using Born's rule. The LS solution requires inversion of a system formed by $m$ linear equations. The solution may be not be a feasible quantum state since it can have negative eigenvalues if there are not enough measurements. On the other hand, Maximum Likelihood Estimate (MLE) is a very popular approach in QST accounting the probabilistic nature of quantum measurements and allows the inclusion of physical constraints to obtain a valid state ~\cite{PhysRevA.64.052312, PhysRevA.63.020101, PhysRevA.75.042108, PhysRevA.61.010304}. ML estimate is the state that is most likely to produce the observed measurements in the manifold of density matrices, or in other words, the state that maximizes the likelihood function. Unlike LS, MLE produces state estimate that does not have non-negative eigenvalues, so we always have a feasible solution. 

% \subsubsection{Structural assumptions}
A second class of methods places structural assumptions on the state matrix, beyond the guaranteed physical constraints of positive semi-definiteness and unit trace.  Because a pure state is rank one, and since most target states are intended to be pure, the most common structural assumption is that the target state is low-rank or close to a low-rank matrix, e.g., a pure state contaminated with a small amount of depolarizing noise or other uncertainty.

An early work to exploit the low-rank assumption was \cite{PhysRevLett.105.150401}, inspired by results from compressed sensing (CS)~\cite{1614066} on recovering sparse vectors from under-determined linear systems. Specifically, \cite{PhysRevLett.105.150401} proposed a compressed sensing formulation using the least-squares objective and a trace penalty, showing that the method could reduce the required number of measurements from $d^2$ to $\mathcal{O}(rd \log ^2  d)$, where $r$ is the rank of the estimated density matrix. The optimization problem was solved using singular value thresholding \cite{doi:10.1137/080738970} which required an eigenvalue decomposition at every step. Later improvements included showing that random Pauli matrices satisfy the \emph{restricted isometry property} (RIP) \cite{31416} giving more theoretical justification to the method, with further theoretical analysis on error bounds and sample complexity in \cite{Flammia_2012}. If the low-rank assumption is correct, and given enough measurements, the proposed low-rank least-squares estimators 
provably perform well 
%outperformed MLE
when values of certain parameters are set correctly.

% Compressed sensing is a technique to recover sparse vectors from fewer measurements \cite{1614066}. In \cite{PhysRevLett.105.150401}, compressed sensing based algorithm is proposed to reduce the measurement complexity from $d^2$ to $\mathcal{O}(rd \log ^2  d)$, where $r$ is the rank of the estimated density matrix. A trace-norm convex minimization problem is solved using singular value thresholding \cite{doi:10.1137/080738970}. The Pauli matrices were proved to satisfy \emph{restricted isometry property (RIP)} in \cite{31416}. This property was used in \cite{Flammia_2012} in providing theoretical analysis on error bounds and sample complexity. The proposed estimators outperformed MLE when values of certain parameters are set correctly. 
% The work in \cite{kalev2015quantum} draws a connection between CS and the positivity of density matrices arguing that the inclusion of the positivity constraint in the estimators leads to a low rank solution from within the space of all possible density matrices of any rank when RIP is satisfied and such a measurement setup is also robust to noise.

% \textcolor{red}{[TODO, cite some of MPS work from Mike Wakin's group]}
There exists a line of work \cite{cramer2010efficient, PhysRevLett.111.020401, Baumgratz_2013, lanyon2017efficient, PRXQuantum.4.040345, lidiak2022quantum, Qin_2024} that focuses on solving QST for quantum states generated by one-dimensional quantum computers such as the ones based on trapped ions. These states have a compact representation, and examples include matrix product states (MPS), matrix product operators (MPO), etc. Due to their compact representation, the number of copies of the state required scales polynomially rather than exponentially. Therefore, it is possible to build QST algorithms for MPS/MPO that are scalable to large systems. For example, the work in \cite{lidiak2022quantum} reconstructs a target state well-represented by a MPO using the tensor-train cross approximation. Furthermore, a lower bound on number of measurements is established in \cite{Qin_2024} for states that can be represented by MPO under different measurement schemes. The results show that state copies only polynomial in the number of qubits are required for a stable recovery.

Other literature has focused on using the low-rank assumption in order to speed up computation time of the estimator ~\cite{PhysRevA.61.010304, PhysRevA.95.062336, kyrillidis2017provable, photonics10020116, 7029630, JMLR:v16:koltchinskii15a}.
Most of the literature on non-convex optimization in QST uses the idea of factorizing the positive semi-definite density matrix using techniques such as Cholesky decomposition, Burer-Monteiro factorization \cite{Burer2003ANP}, \cite{burer2005local}, etc. The factorization technique introduced by Burer and Monteiro naturally takes care of the PSD constraint by construction. According to this technique, the density matrix can be expressed as $\rho = \mat{U}\mat{U}^\dagger$ for $U \in \mathbb{C}^{d \times r}$ which, if $r<d$, also imposes the prior knowledge of the density matrix being low-rank. This factorization leads to significant reduction in computation as well as memory storage. If $r$ is chosen sufficiently large, it was shown in \cite{burer2005local} that the BM factorization does not introduce any extraneous local minima despite being a non-convex problem, though in practice $r$ is usually set much smaller than this bound.
%where $\rho, U$ belong to the space of bounded operators $\mathcal{B(H)}$ on the $d$-dimensional Hilbert space $\mathcal{H}$. 

The method proposed in \cite{PhysRevA.95.062336} uses non-linear conjugate gradient (CG) for optimization in the factored domain (with $r=d$), then switches over to an accelerated first-order method in the original (convex) $\rho$ domain. The accelerated projected-gradient (APG) step requires computation of eigenvalues which are then projected on to the simplex, requiring $\mathcal{O}(d^3)$ flops and memory. %Moreover, the use of non-linear CG is unreliable when the line search is inaccurate.
The method outlined in \cite{kyrillidis2017provable} uses compressed sensing ideas along with the Burer-Monteiro factorization to promote a low-rank density matrix under Gaussian measurement error. The work further assumes certain regularity conditions,
and that the initial point is sufficiently close to the true solution.
% to be satisfied in order to find the global minimum with a constant step size when an initial point close to the global optimum is provided.
The work in \cite{photonics10020116} extends this idea further by proposing an accelerated version of the algorithm which provides accelerated linear convergence rate under common assumptions. In \cite{7029630}, a non-convex least-squares optimization is carried out by factorizing the density matrix using Burer-Monteiro factorization, where both $\mat{U}$ and $\mat{U}^\dagger$ are the optimization variables and updates are performed using Wirtinger derivatives. 

The recent work in \cite{PhysRevResearch.6.033034} addresses the problem of low-rank state reconstruction for full-rank target state. The so-called $P$-order
absolute state mapping combines the ideas of factorization and state projection approaches in QST to arrive at the proposed algorithm, and reconstruction is achieved through momentum-based Rprop (MRprop). They demonstrate that their proposed mapping with MRprop can successfully perform full-rank tomography of an 11-qubit mixed state under a minute on a GPU hardware. However, this is achieved under ideal conditions—without considering statistical noise in measurements and a specially structured set of positive operator-valued measurements (POVMs), and using $10^{10}$ shots per measurement.
% . Though there are some important details to be noted. They utilize $10^{10}$ measurement samples, an exceptionally large number, which ensures the accurate reconstruction of the target quantum state with high fidelity. The 11-qubit state they consider is noise-free contributing to the faster convergence of their method. 

% TODO, describe these
% \cite{RiemannianPRL2024}
% the good news is that this is not applicable to us! They do the least-squares formulation, and drop the trace and PSD constraints, and also assume the measurements have the restricted isometry property (RIP)... all very restrictive. I can write something about this in our draft to explain why it's not relevant.
% \cite{gaikwad2025gradient}
% this is also only for the least-squares formulation, so again we can ignore it! (or briefly mention it)
% We find that rank-controlled ansatzes in our stochastic mini-batch GD-QST algorithms effectively handle noisy and incomplete data sets, yielding significantly higher reconstruction fidelity than other methods. Simulations achieving full-rank seven-qubit QST in under three minutes on a standard laptop

The method proposed in \cite{RiemannianPRL2024} uses Riemannian gradient descent (RGD) to solve a least-squares formulation for low-rank quantum state tomography where the unit trace and PSD constraints are relaxed. Relaxing these constraints very likely leads to infeasible quantum states in between iterations, making it difficult to interpret performance metrics meaningfully. Further, the linear mapping obtained from Pauli measurements is also assumed to satisfy the RIP property. This assumption increases the measurement cost and is highly restrictive. In \cite{gaikwad2025gradient} a mini-batch gradient descent-based approach called GD-QST is proposed that uses three different rank controlled parameterizations, namely, Cholesky decomposition, Stiefel manifold and projective normalization to construct a valid density matrix, enabling rank-$r$  estimation of the unknown quantum state. The parameterization is optimized using a constrained least-squares loss function via vanilla gradient descent and the $Adam$ optimizer. 

Some works also study the effect of the number of POVMs and their repeated measurements on the performance of QST algorithms. The work in \cite{qin2024optimal} explores the trade-off between the number of POVMs 
$\nPOVM$ 
%$K$ 
and the number of repeated measurements (or ``shots'') $N$ required for each POVM,  
%$M$,
keeping the product $\nPOVM N$ %$KM$
constant. It was found that for recovery of low-rank states using Pauli measurements, 
it was generally better to increase $\nPOVM$ at the expense of lowering $N$; for large qubit systems, this could mean $N=1$ even.  It is in this regime where $N\approx 1$, i.e., very noisy measurements, that the MLE formulation makes much more sense than the least-squares approach.

\section{\label{sec:preliminaries}Preliminaries}
We typically use Greek letters like $\rho$ or $\sigma$ to denote density matrices of size $d\times d$ (representing possibly mixed quantum states) and capital Latin letters like $A_i$ to denote $d\times d$ elements of a POVM,
and capital bold, like $\mat{U}$, to indicate generic matrices (possibly complex and rectangular), and lowercase bold, like $\vec{u}$, to indicate vectors (possibly complex), with the non-bold version like $u_i$ to denote the $i^\text{th}$ components of $\vec{u}$.
We write $\rho^\adjoint=\bar{\rho}^\top$ to denote the Hermitian conjugate (or adjoint) of a matrix,
and denote $\H^d\subset \C^{d\times d}$ the real vector space of $d\times d$ complex Hermitian matrices.  We write $\rho\succeq \mat{0}$ to indicate that $\rho$ is Hermitian positive semi-definite.  Physical density matrices satisfy $\rho\succeq \mat{0}$ and $\trace(\rho)=1$. 
% On the space of $\C^{d\times d}$ we use the standard inner product $\langle \rho , \sigma \> = \trace(\rho^\adjoint \sigma)$. 
% % TODO: why is this real if psd?
The letters $\mat{I}$ and $\mat{0}$ indicate an identity matrix or zero matrix, respectively, of appropriate size for the context.
The fidelity between two states is defined as 
$F(\rho,\sigma) = \left(\trace \sqrt{\sqrt{\rho}\sigma \sqrt{\rho} } \right)^2
% $F(\rho,\sigma) = \left(\trace \left( \rho^{\frac12}\sigma \rho^{\frac12} \right)^{\frac12} \right)^2
\in[0,1]
$, which simplifies to $F(\rho,\sigma)=\trace(\rho \sigma)$ if either state is pure. The infidelity of $\rho$ and $\sigma$ is the non-negative number $1-F(\rho,\sigma)$, which is zero if and only if $\rho=\sigma$.
We write $\|\U\|_F$ to denote the Frobenius or Hilbert-Schmidt norm of a matrix.

% On the real vector space of 
% $\H^d$
% %$\C^{d\times d}$ 
% we use the standard inner product $\langle \rho , \sigma \>_\H = \Re[ \trace(\rho^\top \sigma) ]$ where $\Re$ indicates taking the real part (though if $\rho,\sigma\succeq \mat{0}$ then $\trace(\rho^\top \sigma)\in\R$). This is isometrically isomorphic to the vector space $\R^{d^2}$ with the usual Euclidean dot product under the mapping $\hvec: \H^d \to \R^{d^2}$ which vectorizes (in any fixed manner) the $d$ real entries on the diagonal along with $\sqrt{2}$ times the $d(d-1)/2$ real and $d(d-1)/2$ imaginary entries from the upper triangular portion of the input matrix; we define $\hmat=\hvec^{-1}$ which reshapes the real vector back into the Hermitian matrix. This vectorization is useful for deriving certain results as well as for numerical implementation of algorithms and has little consequential effect on the math since it is a unitary linear transformation.

% =====================================================================================
% =====================================================================================
\subsection{\label{sec:meas}Measurements}

Measurements are performed with respect to a positive operator-valued measurement (POVM), which is a collection of $m$ Hermitian matrices (or operators) $\{A_k\}_{k=1}^\nEachPOVM$ such that each 
$A_k\in\H^d$,
% $A_i\in\C^{d\times d}$,
$A_k\succeq \mat{0}$ and $\sum_k A_k = \mat{I}$.
Especially with ``coarse'' POVMs (i.e., $\nEachPOVM=2$), it is common to use multiple POVMs, in which case we write $\{ \{\Ail\}_{k=1}^\nEachPOVM\}_{\ell=1}^\nPOVM$ to denote the set of $\nPOVM$ POVMs, where for the sake of exposition we assume each POVM has the same size $\nEachPOVM$, though of course this is easily generalized to sets of POVMs with different sizes if needed.
We say a set of $M\ge 1$ POVMs $\{ \{\Ail \}_{k=1}^\nEachPOVM\}_{\ell=1}^\nPOVM$ is \emph{informationally complete} if for all states $\rho,\sigma$ with $\rho\neq \sigma$, there exists a POVM $\ell$ and index $k$ such that $\trace(\Ail\rho)\neq \trace(\Ail \sigma)$, or equivalently, if the span of $\{ \{\Ail \}_{k=1}^\nEachPOVM\}_{\ell=1}^\nPOVM$ is $\H^d$. %all $d\times d$ Hermitian matrices.

% TODO: binary outcomes, can we merge wlog?  I don't think so actually

A single measurement (or ``shot'') of the state $\rho$ with the POVM $\{\Ail\}_{k=1}^\nEachPOVM$ is an outcome $k\in\{1,\ldots,\nEachPOVM\}$ where the probability of observing $k$ is $\trace(\Ail \rho )$.  Given $N$ measurements on identically prepared copies of the state $\rho$, the $N$ outcomes thus follow a multinomial distribution, which reduces to a binomial distribution if $\nEachPOVM=2$.  A sufficient statistic for these $N$ measurements, with a fixed POVM indexed by some $\ell$, is the set of empirical frequencies $\{ \fil \}_{k=1}^\nEachPOVM$ where $f_k$ records the number of times outcome $k$ was observed divided by $N$.
For measurement with respect to multiple POVMs, for simplicity we assume each POVM has $N$ measurements, so we take a total of $\nPOVM N$ shots, and the vector of $\nEachPOVM \nPOVM$ real numbers  $\{\{ \fil \}_{k=1}^\nEachPOVM\}_{\ell=1}^\nPOVM$ is a sufficient statistic.
Assuming independent measurements,  as the number of shots $N$ increases $N\to\infty$, then 
$\fil \to \mathbb{E}[\fil] =  \trace(\Ail \rho)$ by the law of large numbers.

% TODO: describe Paulis as a special case, and also introduce the tetrahedral scheme...

% % In quantum tomography, the measurements are obtained as follows. 
% Given $N$ number of identically prepared copies of the unknown quantum state, %these are
% the state $\rho$ is 
% measured indirectly by a set of Hermitian operators known as POVM which satisfy these two properties: $A_i \succeq \mat{0}$ and $\sum_i A_i = I$.

Our simulations will focus on two types of POVM ensembles. The first setup is the collection of Pauli POVMs. 
The $n$-qubit Pauli matrices $W_1, \dots W_{d^2}$ are defined as the $n$-fold tensor product of the single qubit Pauli matrices $\mathcal{W} = \{I, \sigma_x, \sigma_y, \sigma_z \}$, where $W_\ell = \bigotimes_{j=1}^n w_j^{(\ell)}$ for $w_j^{(\ell)}\in\mathcal{W}$.  There are $4^n = d^2$ such matrices.  Each Pauli matrix defines a binary ($\nEachPOVM=2$) POVM defined as $\{\frac{1}{2} (I \pm W_k)\}$.  As is standard, we exclude the $n$-qubit identity matrix since its resulting POVM $\{I,0\}$ is uninformative, leaving $\nPOVM=d^2-1$ POVMs. This ensemble of Pauli POVMs is informationally complete. Since $\nPOVM\approx 4^n$ and $\nEachPOVM=2$, we refer to this as an example of a ``coarse'' POVM.

% We consider Pauli observables for measuring the unknown quantum state. The $n$-fold tensor product of Pauli matrices $w : = \{I, \sigma_x, \sigma_y, \sigma_z \}$ give rise to $n$-qubit Pauli matrices $W_1, \dots W_{d^2}$ of the form $\bigotimes w_i$, where $w_i \in w$ and $d= 2^n$. The $n$-qubit Pauli matrices are not POVMs (Positive Operator Valued Measure) since $\sum_i W_i \neq I$. Instead, we can measure the projectors associated with these Pauli matrices derived as $\{\frac{1}{2} (I \pm W_i)\}$ pertaining to two mutually exclusive outcomes. The trace of these projectors with the unknown quantum state gives the probabilities of obtaining outcomes $i$ as governed by the Born's rule. Since the projector associated with Pauli matrix $I$ is trivial, hence, we do not consider this as being part of the measurement. We distribute $N$ copies of the quantum state randomly between $(d^2 -1)$ 2-outcome POVMs. Upon measurement, we end up with $2(d^2 -1)$ empirical frequency values $f_k$, which normalized by the number of copies $N$, give us the empirical probability values $p_k$ following a multinomial distribution. Once the measurement data is obtained, a popular choice for the estimation of the unknown quantum state is Maximum Likelihood Estimation (MLE). The MLE strategy is to maximize the likelihood obtained from the empirical frequencies or minimize the negative log-likelihood.

% \textcolor{red}{TODO: add tetrahedral scheme info here}
The second setup is the collection of tetrahedral POVM. A single-qubit tetrahedral POVM consists of four elements given by $   A_k = \frac{1}{4} \left( I + \frac{\vec{e}_k \cdot (\sigma_x, \sigma_y, \sigma_z)}{\sqrt{3}} \right) $, where  $(\sigma_x, \sigma_y, \sigma_z)$  denotes the list of Pauli matrices, and  $\vec{e}_k \in \mathbb{R}^3$  are the four  vectors
$\vec{e}_1 = (1,1,1)$, $\vec{e}_2 = (-1,-1,1)$, $\vec{e}_3 = (-1,1,-1)$, and $\vec{e}_4 = (1,-1,-1)$. These vectors correspond to the directions pointing toward the centers of the faces of a regular tetrahedron inscribed in the Bloch sphere. Each  $A_k$  is a rank-1 positive operator, forming an informationally complete and symmetric measurement \cite{PhysRevA.70.052321}. For $n$-qubits, the single ($\nPOVM=1$) tetrahedral POVM consists of $\nEachPOVM=4^n$ measurements, and we refer to this as an example of a ``fine'' POVM.

We write $\nTotalPOVM=\nEachPOVM\nPOVM$ to denote the total number of POVM elements, and this is the fundamental quantity that controls the memory usage of algorithms. The statistical accuracy of the MLE will depend on $\nEachPOVM$, $\nPOVM$ and the number of shots $N$.

% =====================================================================================
% =====================================================================================
% \subsection{\label{}Optimization problem}
\subsection{\label{sec:likelihood}Likelihood function}
% \newcommand{\boldf}{\mathbf{f}}
% % \newcommand{\boldf}{\bm{f}}
We present the likelihood function in order to setup the maximum likelihood estimation (MLE) optimization problem~\cite{vrehavcek2001iterative,QuantumMLE_Hradil1997}, which is the focus of the paper.  The experimental data are the empirical frequencies $\{\{ \fil \}_{k=1}^\nEachPOVM\}_{\ell=1}^\nPOVM$ which can be represented by a $\nPOVM \times \nEachPOVM$ matrix.  To avoid both subscripting and superscripting, we will often use linear indexing of the frequencies and corresponding POVM matrices, i.e., writing $f_i$ for $\fil$ and $A_i$ for $\Ail$, where $i = k + (\ell-1)M$, for $i=1,\ldots,\nEachPOVM\nPOVM$. In particular, $\boldf=(f_k)_{i=1}^{\nTotalPOVM}$.

The likelihood for the measured data $\boldf$ generated by a state $\rho$ (assuming independent  measurements) is
\begin{equation*}
\mathcal{P}(\boldf \vert \rho) =  \prod_{i=1}^{\nTotalPOVM} p_i^{N f_i},\quad p_i = \trace(A_i\rho),
\end{equation*}
and hence the negative log-likelihood (dropping the scaling factor of $N$ for simplicity) is 
% % \newcommand{\likelihood}{\ell} % NEW, or \mathcal{L{}}
% \newcommand{\likelihood}{\mathcal{L}}
\begin{equation}\label{eq:MLEobjective}
    \likelihood(\rho) = -\sum_{i=1}^{\nTotalPOVM} f_i \log( \trace(A_i \rho) ).
\end{equation}
The domain of $\likelihood$ is a subset of $\C^{d\times d}$, so it is not immediately obvious that the output is real valued, but this is true since $A_i$ and $\rho$ are both Hermitian positive semi-definite. We assume that $A_i \neq 0$ since $A_i=0$ is not physically interesting and would mean $\likelihood$ has empty domain; we have $f_i \ge 0$ and assume not all $f_i = 0$.
The gradient of the negative log-likelihood is easy to derive via calculus:
\begin{equation}\label{eq:gradJ}
    \nabla \likelihood: \H^d \to \H^d,\quad 
    \nabla \likelihood(\rho) = -\sum_{i=1}^{\nTotalPOVM} \frac{f_i}{\trace(A_i \rho)} A_i.
\end{equation}

The MLE optimization problem is to maximize the likelihood, or equivalently minimize the negative log-likelihood, subject to being a physically realizable quantum state, i.e., 
\begin{align} 
    & %\underset{\rho}{\min}
    \min_{\rho\in\H^d}
    \quad \likelihood(\rho) \label{cvx-obj} \\ %\tag{MLE} \\
    & \text{subject to} \quad \rho \succeq \mat{0}, \ \trace(\rho) = 1. \notag % \label{cvx-cons}.
\end{align}
This is a convex optimization problem and is similar to a semi-definite program (SDP), though not exactly a SDP due to the $\log$ terms in the objective. For moderate dimensions, it can be solved to high accuracy using off-the-shelf optimization software.  Note that the trace constraint is essential: one cannot solve the problem without it and then, say, project or normalize afterwards.

When we know that the physical state is low-rank, we may also be interested in the following variant of the MLE (which is non-convex if $r<d$):
\begin{align} 
    & %\underset{\rho}{\min}
    \min_{\rho\in\H^d}
    \quad \likelihood(\rho) \label{eq:MLE-rankr} \\ %\tag{$r$-MLE}\\
    & \text{subject to} \quad \rho \succeq \mat{0}, \ \trace(\rho) = 1, \ \rank(\rho)\le r. \notag % \label{cvx-cons}.
\end{align}

% \begin{equation}
%     \mathcal{P}(\boldf \vert \rho) = \underset{i}{\Pi} \ p_i^{f_i/N} =  \underset{i}{\Pi} \ \trace(A_i \rho)^{f_i/N}
% \end{equation}
% where, $i \in \{1, 2, \dots 2d^2\}$. The negative log-likelihood $F(\rho) = -\frac{1}{N} \mathcal{P}(M \vert \rho) $ is, 

% \begin{equation}
%     F(\rho) = - \sum_i  f_i \log(\<A_i \mid  \rho\>)
% \end{equation}
% where, $\langle \sigma \mid \rho \rangle := \trace( \sigma^* \rho )$ and $\langle \sigma \mid A \mid  \rho \rangle := \trace( \sigma^* A \rho )$. The objective is to maximize the likelihood function $\mathcal{P}(M \vert \rho) $ or equivalently, minimize the negative log-likelihood function $F(\rho)$. Therefore, the objective function becomes as follows,
% \begin{align}
%     & \underset{\rho}{\min} \quad F(\rho) \label{cvx-obj} \\
%     & \text{subject to} \quad \rho \succeq \mat{0}, \ \trace(\rho) = 1 \label{cvx-cons}.
% \end{align}

% \begin{remark}
% One cannot solve Eq.~\eqref{cvx-obj} by ignoring the $\trace(\rho)=1$ constraint, finding a ``solution'' $\rho^\star$, and then adjusting $\rho^\star$ to be unit trace (by either scaling or projecting).   TODO...

% TODO
% \end{remark}

Below we state a useful proposition that helps clarify when we expect the MLE solution to match the true physical state.
\begin{prop} [Solution of the MLE with infinite measurements] \label{prop:1}
If there is a physical quantum state $\rho_0$ such that $f_i = \trace(A_i \rho_0)$ (i.e., $N\to\infty$) for all $i=1,\ldots,\nTotalPOVM$, then $\rho_0$ is a solution to the MLE optimization problem. If the set of POVMs is informationally complete, then furthermore $\rho_0$ is the unique solution.
\end{prop}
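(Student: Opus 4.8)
The plan is to exploit the fact that the negative log-likelihood \eqref{eq:MLEobjective} decomposes into a sum of cross-entropy terms, one per POVM, and then invoke Gibbs' inequality (equivalently, the nonnegativity of the Kullback--Leibler divergence between probability vectors). Writing $p_i = \trace(A_i\rho)$ for a candidate feasible $\rho$, I would first observe that within each POVM $\ell$ the completeness relation $\sum_{k=1}^{\nEachPOVM}\Ail = \mat{I}$ together with $\trace(\rho)=1$ forces the model probabilities $p_k^{(\ell)} := \trace(\Ail\rho)$ to satisfy $\sum_{k=1}^{\nEachPOVM} p_k^{(\ell)} = 1$, while the same relation applied to $\rho_0$ gives $\sum_{k}\fil = 1$; moreover $\rho\succeq\mat{0}$ and $\Ail\succeq\mat{0}$ give $p_k^{(\ell)}\ge 0$. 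Hence for each $\ell$ both $\{\fil\}_k$ and $\{p_k^{(\ell)}\}_k$ are genuine probability distributions on $\{1,\dots,\nEachPOVM\}$.

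With this in hand, the second step is to rewrite $\likelihood(\rho) = \sum_{\ell=1}^{\nPOVM} \left(-\sum_{k=1}^{\nEachPOVM}\fil\log p_k^{(\ell)}\right)$ and to bound each inner cross-entropy term below by the entropy $-\sum_k \fil\log\fil$ via Gibbs' inequality, with equality exactly when $p_k^{(\ell)}=\fil$ on the support of $\fil$. Summing yields a global lower bound $\likelihood(\rho)\ge -\sum_{\ell,k}\fil\log\fil$ valid for \emph{every} feasible $\rho$. Because $\rho_0$ is by hypothesis a physical (hence feasible) state with $\trace(A_i\rho_0)=f_i$, it attains this bound exactly and is therefore a global minimizer; notably this argument does not rely on convexity of \eqref{cvx-obj}.

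For uniqueness under informational completeness, I would start from an arbitrary optimizer $\rho^\star$, which must also attain the lower bound and hence achieve equality in every Gibbs term, giving $\trace(\Ail\rho^\star)=\fil$ for all $k$ with $\fil>0$. The remaining step is to upgrade this to all indices: since $\sum_{k:\,\fil>0} p_k^{(\ell)} = \sum_{k:\,\fil>0}\fil = 1$ and the full vector $\{p_k^{(\ell)}\}_k$ sums to $1$ with nonnegative entries, the coordinates with $\fil=0$ must vanish as well, so $\trace(A_i\rho^\star)=f_i=\trace(A_i\rho_0)$ for every $i$. The definition of informational completeness (the span of the POVM elements being all of $\H^d$) then forces $\rho^\star=\rho_0$.

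The main obstacle I anticipate is precisely this treatment of vanishing frequencies: the cross-entropy terms impose no constraint on $p_k^{(\ell)}$ where $\fil=0$, so the equality case of Gibbs' inequality alone does not pin down those probabilities, and it is only the positive-semidefiniteness together with the per-POVM normalization that forces them to zero. I would also take care to confirm that $\likelihood(\rho_0)$ is finite so that the bound is genuinely attained rather than merely approached: the $\fil>0$ terms have $\trace(\Ail\rho_0)=\fil>0$, and terms with $\fil=0$ contribute zero under the standard convention $0\log 0 = 0$.
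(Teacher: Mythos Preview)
Your proposal is correct and takes essentially the same approach as the paper: both arguments reduce to Gibbs' inequality applied to the probability vectors $\{f_i\}$ and $\{p_i=\trace(A_i\rho)\}$, with the paper handling multiple POVMs by concatenating them (normalized by $1/\nPOVM$) into a single POVM rather than summing per-POVM cross-entropy bounds as you do. Your treatment of the vanishing-frequency edge case and the finiteness of $\likelihood(\rho_0)$ is more explicit than the paper's, which simply invokes the equality condition of Gibbs' inequality directly.
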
 
\begin{proof}
We start with the case of a single POVM. 
For any physical state $\rho$, the vector $\vec{p}$ with $p_i = \trace(A_i\rho)$ defines a valid discrete probability distribution due to the properties of POVMs.  Since $\vec{p}$ and $\vec{f}$ are probability distributions, it holds that $-\sum_i f_i \log f_i \le -\sum_i f_i \log p_i$ via Gibbs' inequality, which implies that $p_i=f_i$ is optimal, hence $\rho_0$ is a minimizer. For multiple POVMs, we can concatenate them together and normalize by $1/\nPOVM$ to make a new larger effective POVM, and hence the same reasoning applies.  For uniqueness, we use that Gibbs' inequality is an equality if and only if $p_i=f_i$ for all $i$, hence $\vec{f}$ is the unique minimizer, and thus if the set of POVMs is informationally complete, no other physical state could generate the same set of measurements $\vec{f}$.
\end{proof}

\begin{remark}[Error metrics]
Some works in the literature on solving MLEs only report the (in)fidelity with the target state, but if $N <\infty$ then the physical state that generated the measurement is \emph{not} the solution to the MLE, hence the infidelity should not be $0$, and in this case measuring the fidelity conflates the optimization error with the statistical error. %Given $N<\infty$ shots, the MLE estimator is not the same as the physical state (the statistical or estimation error).
Since our focus is on computing the MLE, we either compare with the true MLE solution (computed using an established reference solver), or take $N=\infty$ and compare with the true physical state that generated the measurements, or compare the values of $\likelihood$.
% use a metric that disentangles the optimization error from this statistical error.  TODO... (depends: do we just take $N=\infty$??)
\end{remark}

% 5/19/2025, let's put this in the appendix; we can also then defer mentioning hmat until later

% \newcommand{\vecrho}{\vec{x}} % choose something better>
% It will also be convenient to examine $\likelihood$ when we vectorize the input. Letting $\vecrho = \hvec(\rho)$ and $\vec{a}_i = \hvec(\overline{A_i})\in\R^{d^2}$, we define
% \begin{equation}\label{eq:grad_l}
% \ell(\vecrho) = \likelihood(\hmat(\vecrho)) = -\sum_{i=1}^{\nEachPOVM\nPOVM} f_i \log( \vec{a}_i^\top\vecrho ), \quad \text{hence}\; \nabla \ell:\R^{d^2}\to\R^{d^2}, \;
%     \nabla \ell(\vecrho) = -\sum_{i=1}^{\nEachPOVM\nPOVM} \frac{f_i}{\vec{a}_i^\top\vecrho} \vec{a}_i.
% \end{equation}
% The Hessian of $\likelihood$, $\nabla^2 \likelihood$, is a $d\times d\times d\times d$ tensor whereas the Hessian of $\ell$ is a $d^2\times d^2$ symmetric matrix that is easier to work with. Specifically,
% \begin{equation}\label{eq:hess_l}
%  \nabla^2 \ell:\R^{d^2}\to\R^{d^2 \times d^2}, \;
%     \nabla^2 \ell(\vecrho) = \sum_{i=1}^{\nEachPOVM\nPOVM} \frac{f_i}{\left(\vec{a}_i^\top\vecrho\right)^2} \vec{a}_i\vec{a}_i^\top \succeq \mat{0}.
% \end{equation}    
% Since $\nabla^2 \ell(\vecrho) \succeq \mat{0}$, $\ell$ is convex, and since $\likelihood(\rho)=\ell(\hvec)$ is the composition of a convex function with a linear function, $\likelihood$ is also convex.

% =====================================================================================
% =====================================================================================
\subsection{Burer-Monteiro factorization}
% Let's add this here before the gradient calculation
% TODO, introduce basic idea

% original  \cite{Burer2003ANP,burer2005local}
% Recent works: \cite{levin2025effect,cifuentes2021burer,boumal2016non,WaldspurgerBM,AravkinBM}
% and https://arxiv.org/abs/2107.03877 (Kileel)

The MLE problem in Eq.~\eqref{cvx-obj} is similar to a SDP, and can be solved by similar techniques. Algorithms for solving general-purpose SDPs have been continually improving, but generally scale superlinearly in $d^2$ and are not efficient enough for QST when $n\gtrsim 10$.
% In the last few years, there have been really efficient algorithms to solve SDPs. But they do suffer when the problem is large and not ideal for many practical applications such as QST. 
For improved speed and decreased memory requirements, we turn to 
the factorization technique introduced by Burer and Monteiro \cite{Burer2003ANP,burer2005local} which naturally takes care of the PSD constraint by construction. The technique parameterizes the density matrix as $\rho = \mat{U}\mat{U}^\adjoint$ for $\mat{U} \in \mathbb{C}^{d \times r}$; choosing $r\ll d$ imposes the prior knowledge of the density matrix being low-rank. As we will see, when $r$ is small, this factorization leads to significant reduction in computation as well as memory storage. 
The MLE objective defined in Eq.~\eqref{eq:MLEobjective}, modified with the Burer-Monteiro factorization, is 
\begin{align} \label{eq:J}
    \f: \C^{d\times r} \to \R,\quad
    \f(\U) &= \likelihood(\U\U^\adjoint) \\ 
    &=  -\sum_{i=1}^{\nTotalPOVM} f_i \log( \trace(\U^\adjoint A_i \U ) ),
\end{align}
and has gradient 
\begin{align} \label{eq:J_grad}
    \nabla\f: \C^{d\times r} \to \C^{d\times r},\quad
    \nabla \f(\U) &=   -2\sum_{i=1}^{\nTotalPOVM} \frac{f_i}{ \trace(\U^\adjoint A_i \U ) } A_i \U \\
    &= 2\nabla \likelihood(\U\U^\adjoint)\U
\end{align}
where use the circular invariant property of trace to rewrite
$\trace(A_i \U\U^\adjoint)$
in the more computationally efficient form $\trace(\U^\adjoint A_i \U)$.
% The Hessian is a little unwieldy to work with, but for reference we state it below for the special case when $r=1$ (and write $\vec{u}$ rather than $\U$ to indicate this):
% \begin{equation*}
%     \nabla^2\f(\vec{u}) =   2\sum_{i=1}^{\nTotalPOVM} f_i\left( -\frac{1}{\vec{u}^\adjoint  A_i\vec{u}} A_i + \frac{2}{(\vec{u}^\adjoint  A_i  \vec{u})^{2}} A_i\vec{u} (\vec{u}^\adjoint  A_i) \right) \in \C^{d\times d}.
% \end{equation*}

% TODO (elsewhere): review those two new papers we found- added in the related work section.

The Burer-Monteiro factorization always destroys convexity, since if $\U$ is optimal, %meaning that $\rho=\U\U^\adjoint$ is an optimal point for the original convex problem,
then so is $\U\mat{Q}$ for any unitary matrix $\mat{Q}\in\C^{d\times d}$, but the set of unitary matrices is not convex. However, this non-convexity is typically benign.
%, and an illustrative example is considering the 1D case, when $u$ and $A_i$ are scalars, and hence $A_i\ge 0$. Each term in the sum is $-f_i\log(A_i|u|^2)$  % THIS EXAMPLE IS COMPLICATED SINCE u CAN BE COMPLEX...

Under this reparameterization, we recast the MLE problem of Eq.~\eqref{cvx-obj} to the following (non-convex) optimization problem
\begin{align} \label{eq:MLE-BM}
    & %\underset{\rho}{\min}
    \min_{\U\in\C^{d\times r}}
    \quad \f(\U) \\ %\label{cvx-obj} \\
    & \text{subject to} \quad \|\U\|_F = 1 \notag % \label{cvx-cons}.
\end{align}
where we use the change of variables $\rho=\U\U^\adjoint$ which enforces $\rho \succeq \mat{0}$ automatically, and $\trace(\rho)=\|\U\|_F^2=1$. 
Note that, if desired, we can relax the non-convex constraint $\|\U\|_F^2=1$ to $\|\U\|_F^2\le 1$ since if $\|\U\|_F<1$ then letting $c=\|\U\|_F$ we have $\f(\U/c) = J(\U) + 2\sum_{i=1}^{\nTotalPOVM} f_i\log(c) < \f(\U)$ since $c<1$, meaning that $\|\U\|_F=1$ will happen automatically at global optimality.  
If $r=d$, then the convex formulation Eq.~\eqref{cvx-obj} and the non-convex formulation Eq.~\eqref{eq:MLE-BM} are equivalent problems, in the sense that a global minimizer of one can be used to easily find the global minimizer of the other.

% If $r=d$ (or under conditions similar to $r\lesssim\sqrt{d}$, cf.~CITE-TODO), any global minimizer $\U$ of the above problem leads to a global minimizer of the original convex MLE in Eq.~\eqref{cvx-obj} by setting $\rho=\U\U^\adjoint$.
The foundational results of Burer-Monteiro theory show that under mild conditions, this non-convex formulation has no spurious local minima and yields global optima when the rank is sufficiently large \cite{burer2005local, boumal2016non,WaldspurgerBM},
i.e., if $r\gtrsim\sqrt{d}$, then any global minimizer $\U$ of the above problem leads to a global minimizer of the original convex MLE in Eq.~\eqref{cvx-obj}. Recent work has extended this guarantee to broader classes of SDPs \cite{AravkinBM,cifuentes2021burer}, and further clarified the rank bounds necessary for optimality \cite{WaldspurgerBM}. More generally, the impact of smooth reparameterizations—such as those used in the Burer–Monteiro method—on optimization landscapes has been theoretically characterized, showing they often improve the geometry of non-convex problems by removing undesirable critical points \cite{levin2025effect}.  
An example of a typical result in the literature is the following theorem (proved for the real-valued case, but easily extended to the complex case):
\begin{theorem}[Burer-Monteiro style result, simplified version of Thm.~4.1 in \cite{AravkinBM}]
\label{thm:BM}
Consider an optimization problem of the form
\begin{align} \label{eq:Arvakin1}
    &\min_{\rho\succeq \mat{0}} \; \likelihood(\rho) \tag{*}\\ 
    &\text{subject to }\; h(\rho)=0, \;\rank(\rho)\le r \notag 
\end{align}
where $\rho\in\R^{d\times d}$ is positive semidefinite and $\likelihood$ and $h$ are continuous. Consider also the reparameterization
\begin{align} \label{eq:Arvakin2}
    &\min_{\U\in\R^{d\times r}} \; \likelihood(\U\U^\adjoint) \tag{**}\\ 
    &\text{subject to }\; h(\U\U^\adjoint)=0.\notag 
\end{align}
Let $\rho^\star =\U^\star\U^{\star\adjoint}$ where $\U^\star$ is any feasible point for Eq.~\eqref{eq:Arvakin2}. Then $\rho^\star$ is a local minimum of Eq.~\eqref{eq:Arvakin1} if and only if $\U^\star$ is a local minimum of Eq.~\eqref{eq:Arvakin2}.
\end{theorem}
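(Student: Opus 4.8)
The plan is to analyze everything through the factorization map $\varphi\colon\U\mapsto\U\U^\adjoint$, which sends $\R^{d\times r}$ onto $\mathcal{S}_r=\{\rho\succeq\mat{0}:\rank(\rho)\le r\}$. The feasible set of \eqref{eq:Arvakin2} is exactly $\varphi^{-1}(\mathcal{C})$, where $\mathcal{C}=\{\rho\in\mathcal{S}_r:h(\rho)=0\}$ is the feasible set of \eqref{eq:Arvakin1}, and the two objectives are related by $\likelihood(\U\U^\adjoint)=(\likelihood\circ\varphi)(\U)$. So the statement reduces to showing that local minimality transfers in both directions across $\varphi$. One direction is immediate from continuity of $\varphi$; the other requires a local lifting property of $\varphi$, and that is where the real work lies.

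For ``$\rho^\star$ a local min $\Rightarrow\U^\star$ a local min,'' I would argue directly from continuity. Every $\U$ feasible for \eqref{eq:Arvakin2} and close to $\U^\star$ maps to a feasible point $\varphi(\U)$ of \eqref{eq:Arvakin1} close to $\rho^\star$, so local minimality of $\rho^\star$ gives $\likelihood(\varphi(\U))\ge\likelihood(\rho^\star)$, i.e.\ $(\likelihood\circ\varphi)(\U)\ge(\likelihood\circ\varphi)(\U^\star)$, and hence $\U^\star$ is a local minimum of \eqref{eq:Arvakin2}.

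For the converse the key ingredient is a \emph{lifting lemma}: if $\rho_n\to\rho^\star$ with each $\rho_n\in\mathcal{C}$, then there exist $\U_n\to\U^\star$ with $\U_n\U_n^\adjoint=\rho_n$, so each $\U_n$ is feasible for \eqref{eq:Arvakin2}. Granting this, the converse is a one-line contradiction: were $\rho^\star$ not a local minimum of \eqref{eq:Arvakin1}, there would be feasible $\rho_n\to\rho^\star$ with $\likelihood(\rho_n)<\likelihood(\rho^\star)$, and lifting to $\U_n\to\U^\star$ yields feasible points with $(\likelihood\circ\varphi)(\U_n)<(\likelihood\circ\varphi)(\U^\star)$, contradicting local minimality of $\U^\star$. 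To prove the lemma I would combine two facts. The first is an algebraic description of the fibers of $\varphi$: for $\U,\mat{W}\in\R^{d\times r}$ one has $\U\U^\adjoint=\mat{W}\mat{W}^\adjoint$ if and only if $\mat{W}=\U\mat{Q}$ for some orthogonal $\mat{Q}$ (this follows by noting $\U$ and $\mat{W}$ share the column space of $\rho^\star$, reducing to matrices with orthonormal rows spanning the same space, and completing these to full orthogonal matrices). The second is compactness: since $\|\U\|_F^2=\trace(\U\U^\adjoint)$, the fibers $\varphi^{-1}(\rho_n)$ are uniformly bounded, so any choice of factorizations $\mat{W}_n$ of $\rho_n$ has a convergent subsequence $\mat{W}_n\to\mat{W}$ with $\mat{W}\mat{W}^\adjoint=\rho^\star$; aligning via $\mat{W}=\U^\star\mat{Q}_0$ and setting $\U_n=\mat{W}_n\mat{Q}_0^\adjoint$ gives $\U_n\U_n^\adjoint=\rho_n$ with $\U_n\to\U^\star$ along the subsequence, which upgrades to $\operatorname{dist}(\U^\star,\varphi^{-1}(\rho_n))\to0$ for the full sequence.

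I expect the lifting lemma to be the main obstacle, specifically the case where $\rho^\star$ is rank-deficient so that $\U^\star$ lacks full column rank. There the gauge group $O(r)$ acts with nontrivial stabilizer, and a naive continuous section—say, building $\U_n$ from the top-$r$ eigenpairs of $\rho_n$—can fail to be continuous exactly where eigenvalues collide or cross zero. The orthogonal-alignment trick above sidesteps this by never constructing a global continuous section and only needing convergence of a single aligned subsequence, which is what makes the bound hold uniformly in the rank. The complex case asserted in the statement is routine: replace transposes by conjugate transposes and $O(r)$ by the unitary group $U(r)$ throughout, leaving every step unchanged.
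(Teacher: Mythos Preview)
The paper does not supply its own proof of this theorem; it is stated as a simplified version of a result from an external reference and then invoked as a black box in the theorem that follows. There is therefore nothing in the paper to compare your argument against.

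Your argument is correct and self-contained. The forward direction via continuity of $\varphi(\U)=\U\U^\adjoint$ is immediate. For the converse, the lifting lemma is exactly the right ingredient, and your compactness-plus-orthogonal-alignment proof handles it cleanly, including the rank-deficient case you rightly flag as the delicate point. The fiber description $\varphi^{-1}(\rho)=\{\U\mat{Q}:\mat{Q}\in O(r)\}$ for any fixed $\U$ with $\U\U^\adjoint=\rho$ is standard, and your sketch---reduce to matrices with orthonormal rows via the column space of $\rho$, then complete to full orthogonal matrices---is sound. The upgrade from subsequential to full-sequence convergence of $\operatorname{dist}(\U^\star,\varphi^{-1}(\rho_n))\to 0$ is also right: were the distance bounded below by some $\epsilon>0$ along a subsequence, the same compactness-and-alignment argument applied to that subsequence would produce a further subsequence on which the distance tends to zero, a contradiction. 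Since the fibers are compact, the infimum defining the distance is attained, so you can extract genuine lifts $\U_n\to\U^\star$ as needed for the final contradiction.
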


In our context, this leads to the following theorem.
\begin{theorem}
Any local minimum of Eq.~\eqref{eq:MLE-BM} corresponds to a local minimum of Eq.~\eqref{eq:MLE-rankr}.
In particular, if there is an optimal solution $\rho^\star$ of Eq.~\eqref{cvx-obj} with rank $r^\star$, then any local minimum of Eq.~\eqref{eq:MLE-BM} for $r\ge r^\star$ corresponds to a global minimizer of Eq.~\eqref{cvx-obj}.
\end{theorem}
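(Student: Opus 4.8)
The plan is to deduce the first sentence directly from Theorem~\ref{thm:BM} and then layer convexity on top for the ``in particular'' claim. For the first part, I would instantiate Theorem~\ref{thm:BM} with the constraint map $h(\rho)=\trace(\rho)-1$, so that Eq.~\eqref{eq:MLE-rankr} becomes exactly the problem~\eqref{eq:Arvakin1} and Eq.~\eqref{eq:MLE-BM} becomes its reparameterization~\eqref{eq:Arvakin2}: the constraint $\|\U\|_F=1$ is identical to $h(\U\U^\adjoint)=\|\U\|_F^2-1=0$ since the Frobenius norm is nonnegative, and the rank bound $\rank(\U\U^\adjoint)\le r$ is automatic from the factorization. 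I would first verify the hypotheses: $h$ is linear, hence continuous, and $\likelihood$ is continuous on its domain $\{\rho:\trace(A_i\rho)>0 \text{ whenever } f_i>0\}$, so Theorem~\ref{thm:BM} (in its complex form, as noted in the excerpt) applies and yields that $\U^\star$ is a local minimum of Eq.~\eqref{eq:MLE-BM} if and only if $\rho_{\mathrm{BM}}:=\U^\star\U^{\star\adjoint}$ is a local minimum of Eq.~\eqref{eq:MLE-rankr}. This gives the ``corresponds to'' statement in both directions.

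For the ``in particular'' claim, the first observation is that Eq.~\eqref{cvx-obj} is convex, so every local minimum is global and the optimal value equals $\likelihood(\rho^\star)$. The second observation is that whenever $r\ge r^\star$ the rank-$r^\star$ optimizer $\rho^\star$ is feasible for Eq.~\eqref{eq:MLE-rankr}; since the feasible set of Eq.~\eqref{eq:MLE-rankr} is contained in that of Eq.~\eqref{cvx-obj}, the two problems share the same optimal value $\likelihood(\rho^\star)$. Hence the claim reduces to showing that a local minimum $\rho_{\mathrm{BM}}$ of the nonconvex problem Eq.~\eqref{eq:MLE-rankr} (equivalently, via the first part, of Eq.~\eqref{eq:MLE-BM}) is in fact a global minimizer of Eq.~\eqref{cvx-obj}.

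To bridge the local-to-global gap I would pass through the stationarity conditions in the factored space. Writing the Lagrange condition for the norm constraint and using $\nabla\f(\U^\star)=2\nabla\likelihood(\rho_{\mathrm{BM}})\U^\star$ from Eq.~\eqref{eq:J_grad}, a first-order stationary point satisfies $\nabla\likelihood(\rho_{\mathrm{BM}})\U^\star=\mu\U^\star$ for a real multiplier $\mu$, which upon right-multiplication by $\U^{\star\adjoint}$ gives the complementary-slackness identity $(\nabla\likelihood(\rho_{\mathrm{BM}})-\mu\mat{I})\rho_{\mathrm{BM}}=\mat{0}$. By convexity, $\rho_{\mathrm{BM}}$ is then a global minimizer of Eq.~\eqref{cvx-obj} as soon as the remaining KKT condition $\nabla\likelihood(\rho_{\mathrm{BM}})-\mu\mat{I}\succeq\mat{0}$ holds. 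The standard Burer--Monteiro argument supplies exactly this: if the factor $\U^\star$ is rank-deficient ($\rank(\U^\star)<r$), then the second-order optimality of $\U^\star$ can be probed along the ``unused'' columns of $\U^\star$ to certify $\nabla\likelihood(\rho_{\mathrm{BM}})-\mu\mat{I}\succeq\mat{0}$, and the hypothesis $r\ge r^\star$ is what is meant to guarantee enough slack in the factorization for this rank-deficiency to be available.

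I expect the main obstacle to be precisely this last PSD certification, namely controlling the case in which the local minimizer's factor is full rank. When $\rank(\U^\star)=r$ the empty-column directions are unavailable, and the rank constraint can in principle trap a factored point where $\nabla\likelihood(\rho_{\mathrm{BM}})-\mu\mat{I}$ fails to be positive semidefinite; ruling this out is the delicate step, and it is where the cited Burer--Monteiro landscape results and the precise role of $r$ relative to $r^\star$ must be invoked carefully rather than treated as automatic.
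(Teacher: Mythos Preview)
Your first paragraph is exactly the paper's proof: invoke Thm.~\ref{thm:BM} with $h(\rho)=\trace(\rho)-1$, check that $\likelihood$ and $h$ are continuous, and note $\trace(\U\U^\adjoint)=\|\U\|_F^2$. That is literally all the paper writes.

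For the ``in particular'' clause your proposal diverges, but only because the paper gives no argument for it whatsoever---the three-line proof stops after instantiating Thm.~\ref{thm:BM}. Your KKT-plus-second-order route (stationarity $\nabla\likelihood(\rho_{\mathrm{BM}})\U^\star=\mu\U^\star$, then certifying $\nabla\likelihood(\rho_{\mathrm{BM}})-\mu\mat{I}\succeq\mat{0}$ via the extra columns) is the standard Burer--Monteiro mechanism and is a legitimate way to try to justify the claim, but you are right to flag the full-rank obstacle. Thm.~\ref{thm:BM} only equates local minima of Eq.~\eqref{eq:MLE-BM} and Eq.~\eqref{eq:MLE-rankr}; it does \emph{not} assert that every local minimum of the nonconvex rank-constrained problem Eq.~\eqref{eq:MLE-rankr} is globally optimal for Eq.~\eqref{cvx-obj} merely because some rank-$r^\star\le r$ optimizer exists. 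Ruling out spurious full-rank local minima genuinely requires the deeper landscape results the paper cites (e.g.\ \cite{burer2005local,boumal2016non,WaldspurgerBM}) rather than Thm.~\ref{thm:BM} alone, and those results typically need either a rank-deficiency hypothesis on the local minimizer or a rank bound tied to the number of constraints, not simply $r\ge r^\star$. So your instinct that this step is where the real work hides---and that it is not automatic---is correct; the paper elides it.
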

\begin{proof}
We invoke Thm.~\ref{thm:BM}, identifying $h(\rho) = \trace(\rho)-1$. Both $\likelihood$ and $h$ are continuous on their domains, and $\trace(\U\U^\adjoint)=\|\U\|_F^2=1$. %Furthermore, we can relax $\|\U\|_F^2=1$ to $\|\U\|_F^2\le 1$ without changing any minimizers as follows: we first note that $\U=\mat{0}$ is not a feasible point, so we can assume $\U\neq \mat{0}$. Then if $\|\U\|_F<1$, take any $c\in [\|\U\|_F, 1]$ so that $\U/c$ is still feasible. Then 
% $\f(\U/c) = J(\U) + 2\sum_{i=1}^{\nTotalPOVM} f_i\log(c) < \f(\U)$ since $c<1$. Taking $c\to 1$ we thus see that $\U$ cannot have been a local minimizer.
% 
% We can relax the non-convex constraint $\|\U\|_F^2=1$ to $\|\U\|_F^2\le 1$ since if $\|\U\|_F<1$ then letting $c=\|\U\|_F$ we have $\f(\U/c) = J(\U) - 2\sum_{i=1}^{\nTotalPOVM} f_i\log(c^{-2}) < \f(\U)$ since $c<1$, meaning that $\|\U\|_F=1$ will happen automatically at optimality.  
% 
\end{proof}

% For QST, we will use $r\ll \sqrt{d}$, so the  theoretical guarantees from the literature do not apply, which motivates our {\em a posteriori} error estimates which we discuss later.

% =====================================================================================
% =====================================================================================
% % \subsubsection{\textbf{Gradient Calculation}}
% % \subsection{Efficient Function and Gradient Calculations} \label{sec:efficientCalculation}
% \section{Efficient Function and Gradient Calculations} \label{sec:efficientCalculation}
% \input{section_efficientGradient}  % Wed, May 14, putting this in a new file so easy to move around

% =====================================================================================
% =====================================================================================
\section{\label{sec:Algo}Algorithm and derivations}
The key to our efficient algorithm is rewriting the MLE optimization problem as an unconstrained optimization problem. We present the theory about this equivalence, then derive a useful error bound, and finally present some algorithms to solve the unconstrained problem.

% https://q.uiver.app/#q=WzAsNCxbMCwwLCJcXHRleHR7Y29udmV4IGNvbnN0cmFpbmVkIE1MRX0iXSxbMCwyLCJcXHRleHR7bm9uLWNvbnZleCBjb25zdHJhaW5lZCBNTEV9Il0sWzIsMCwiXFx0ZXh0e2NvbnZleCB1bmNvbnN0cmFpbmVkIE1MRX0iXSxbMiwyLCJcXHRleHR7bm9uLWNvbnZleCB1bmNvbnN0cmFpbmVkIE1MRX0iXSxbMCwxLCJcXHRleHR7QnVyZXItTW9udGVpcm8gcGFyYW1ldGVyaXphdGlvbn0iLDFdLFsyLDMsIlxcdGV4dHtCdXJlci1Nb250ZWlybyBwYXJhbWV0ZXJpemF0aW9ufSIsMV0sWzAsMiwiXFx0ZXh0e1RobS4gVEJEfSJdLFsxLDMsIiIsMSx7InN0eWxlIjp7ImJvZHkiOnsibmFtZSI6ImRvdHRlZCJ9fX1dXQ==
% https://q.uiver.app/#q=WzAsNCxbMCwwLCJcXHRleHR7Y29udmV4IGNvbnN0cmFpbmVkIE1MRX0iXSxbMCwyLCJcXHRleHR7bm9uLWNvbnZleCBjb25zdHJhaW5lZCBNTEV9Il0sWzIsMCwiXFx0ZXh0e2NvbnZleCB1bmNvbnN0cmFpbmVkIE1MRX0iXSxbMiwyLCJcXHRleHR7bm9uLWNvbnZleCB1bmNvbnN0cmFpbmVkIE1MRX0iXSxbMCwxLCJcXHRleHR7QnVyZXItTW9udGVpcm8gcGFyYW1ldGVyaXphdGlvbn0iLDFdLFsyLDMsIlxcdGV4dHtCdXJlci1Nb250ZWlybyBwYXJhbWV0ZXJpemF0aW9ufSIsMV0sWzAsMiwiXFx0ZXh0e1RobS4gVEJEfSJdLFsxLDMsIiIsMSx7InN0eWxlIjp7ImJvZHkiOnsibmFtZSI6ImRvdHRlZCJ9fX1dXQ==
\begin{figure*}[t]
\[\begin{tikzcd} %[cramped]
	{\text{convex constrained MLE, Eq.~}\eqref{cvx-obj}} && {\text{convex unconstrained MLE, Eq.~}\eqref{cvx-obj-unconstrained}} \\
	\\
	{\text{non-convex constrained MLE, Eq.~}\eqref{eq:MLE-BM}} && {\text{non-convex unconstrained MLE, Eq.~}\eqref{eq:Lag-U}}
	\arrow["{\text{Thm.\ \ref{thm:equivalence}}}", red,leftrightarrow,from=1-1, to=1-3]
	\arrow["{\text{Burer-Monteiro parameterization}}"{description}, from=1-1, to=3-1]
	\arrow["{\text{Burer-Monteiro parameterization}}"{description}, from=1-3, to=3-3]
	\arrow[dotted, from=3-1, to=3-3]
\end{tikzcd}\]
    \caption{Diagram of our approach to solving Eq.~\eqref{cvx-obj} by solving Eq.~\eqref{eq:Lag-U}.}
    \label{fig:commutativeDiagraom}
\end{figure*}
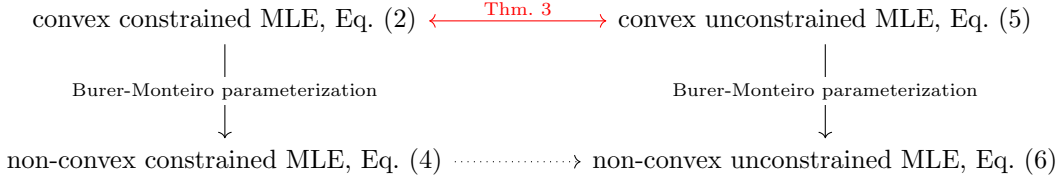

\subsection{Reformulation of MLE as an unconstrained problem}
\begin{comment}
The motivation for solving the QST problem using Burer-Monteiro factorization is the scalability issue for large-scale QST problems. For low-rank estimate of the density matrix, one could choose $U^{d \times r}$, where, $r \ll d$. This is a huge improvement because we have much fewer parameters to optimize over and we can store the iterates in the form of $U$ instead of $\rho$. Due to the merits of the BM factorization, the optimization in \eqref{cvx-obj} - \eqref{cvx-cons} can be posed in a lower dimension by applying Burer-Monteiro: $\rho = UU^\adjoint$, \ $U\in \C^{n \times r}$
\begin{equation}
\min_{U, \; \|U\|_F=1 }\, \sum_{i=1}^m -f_i \log(  \<U \mid A_i \mid U\> )
\end{equation}
By construct, the BM factorization takes care of the positive semi-definite constraint and promotes low-rankness, and the unit trace constraint translates to unit norm constraint. This factorization renders our new objective function to be non-convex which generally requires some special structure on $A_i$ for gradient descent based methods to find the global minima along with a good initialization point near to the optimal set \cite{doi:10.1137/18M1224738}.
\end{comment}

% \begin{align} \label{eq:MLE-BM}
%     & %\underset{\rho}{\min}
%     \min_{\U\in\C^{d\times r}}
%     \quad \f(\U) \\ %\label{cvx-obj} \\
%     & \text{subject to} \quad \|\U\|_F \le 1. \notag % \label{cvx-cons}.
% \end{align}

The non-convex formulation of the MLE, Eq.~\eqref{eq:MLE-BM}, is parameterized so that the positive semi-definiteness constraint is encoded automatically, but it still requires $\|\U\|_F = 1$ to ensure normalization, i.e., $\trace(\rho)=1$. Below we present a fully unconstrained version of the MLE problem that replaces the constraint $\trace(\rho)=1$ with a penalty $\lambda \trace(\rho)$, or in the factored case, replaces 
$\|\U\|_F = 1$ with a penalty $\lambda\|\U\|_F^2$. 
% with a penalty:
% \begin{equation} \label{eq:Lag-U}
% % \Lagrangian(\U, \lambda) = 
% \min_{\U\in\C^{d\times r}} \underbrace{\f(\U) + \lambda \|\U\|_F^2}_{\f_\lambda(\U)}.
% \end{equation}
In optimization theory, this penalized objective is called the Lagrangian, and $\lambda$ is known as the dual variable or Lagrange multiplier.  For our particular problem, it turns out we can set the value of $\lambda$ {\em a priori}.

\begin{theorem}[Unconstrained MLE] \label{thm:equivalence}
Consider
\begin{align} 
    & %\underset{\rho}{\min}
    \min_{\rho\in\H^d}
    \quad \likelihood(\rho) + \lambda\trace(\rho) \label{cvx-obj-unconstrained} \\ %\tag{MLE} \\
    & \text{subject to} \quad \rho \succeq \mat{0}. \notag % \label{cvx-cons}.
\end{align}
with $\lambda = \sum_{i=1}^{\nTotalPOVM} f_i$. Then, like Eq.~\eqref{cvx-obj}, this optimization problem is convex; and $\rho$ is an optimal solution for Eq.~\eqref{cvx-obj} if and only if $\rho$ is an optimal solution for Eq.~\eqref{cvx-obj-unconstrained}, i.e., they are equivalent problems.
\end{theorem}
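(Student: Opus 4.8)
The plan is to exploit the specific logarithmic scaling of the likelihood together with the calibrated choice $\lambda = \sum_i f_i$. The convexity claim is the easy part: I would note that $\rho \mapsto \trace(A_i\rho)$ is linear, $-\log$ is convex on $(0,\infty)$, so each summand $-f_i\log\trace(A_i\rho)$ is convex (using $f_i \ge 0$), the added term $\lambda\trace(\rho)$ is linear, and $\{\rho \succeq \mat{0}\}$ is a convex set; hence Eq.~\eqref{cvx-obj-unconstrained} is a convex program, just like Eq.~\eqref{cvx-obj}.

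The core of the equivalence rests on one identity. For any $c>0$ and any $\rho\succeq\mat 0$ in the domain of $\likelihood$, pulling the scalar out of each logarithm gives $\likelihood(c\rho) = \likelihood(\rho) - \big(\sum_i f_i\big)\log c = \likelihood(\rho) - \lambda\log c$. I would then make the change of variables $\rho = t\sigma$ with $t=\trace(\rho)>0$ and $\sigma = \rho/\trace(\rho)$, which is a bijection between nonzero PSD matrices and pairs $(\sigma,t)$ where $\sigma$ is feasible for Eq.~\eqref{cvx-obj} (PSD, unit trace) and $t>0$. Writing $G(\rho) = \likelihood(\rho)+\lambda\trace(\rho)$ for the unconstrained objective and applying the scaling identity (together with $\trace(\sigma)=1$) yields the separated form $G(t\sigma) = \likelihood(\sigma) + \lambda\,(t - \log t)$. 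The crucial point is that the right-hand side decouples: the dependence on $\sigma$ and on $t$ is additive.

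With this in hand the conclusion is immediate. The scalar function $t\mapsto t-\log t$ is strictly convex on $(0,\infty)$ and minimized uniquely at $t=1$, where it equals $1$. Hence minimizing $G$ over nonzero PSD $\rho$ is equivalent to minimizing $\likelihood(\sigma)$ over feasible $\sigma$ while independently choosing $t=1$; that is, $\rho^\star$ minimizes $G$ if and only if $\trace(\rho^\star)=1$ and $\rho^\star$ minimizes $\likelihood$ over the unit-trace PSD set, which is exactly optimality for Eq.~\eqref{cvx-obj}. This establishes both directions of the ``if and only if'' simultaneously and pins down the optimal scaling automatically, explaining why $\lambda$ can be fixed a priori.

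The only real subtlety to address is the boundary case $\rho=\mat 0$ (equivalently $t\to 0^+$): there $\likelihood$ is $+\infty$ while the penalty $\lambda\trace$ vanishes, so $\rho=\mat 0$ is never optimal and can be excluded from the change of variables without affecting the argument. I would also remark that the elementary bound $t-\log t\ge 1$ is precisely where the value $\lambda=\sum_i f_i$ enters: any other choice of $\lambda$ would relocate the minimizer of $t\mapsto \lambda t - \lambda\log t$ away from $t=1$ and break the equivalence. I expect the main conceptual obstacle to be recognizing the decoupling $G(t\sigma)=\likelihood(\sigma)+\lambda(t-\log t)$; once that identity is in place, the remainder is a one-line convex scalar minimization.
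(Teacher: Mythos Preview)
Your argument is correct and takes a genuinely different route from the paper. The paper proceeds via KKT conditions: it writes the Lagrangian for Eq.~\eqref{cvx-obj} with dual variables $\sigma\succeq\mat 0$ and $\mu\in\R$, observes Slater's condition holds, and then shows that the KKT systems for Eq.~\eqref{cvx-obj} and Eq.~\eqref{cvx-obj-unconstrained} coincide. The key computation is to take the stationarity equation for the penalized problem, multiply on the right by the optimal $\rho_\lambda$, and take the trace; the logarithmic gradient produces $-\sum_i f_i$, complementary slackness kills the $\trace(\sigma_\lambda\rho_\lambda)$ term, and one is left with $\lambda\trace(\rho_\lambda)=\sum_i f_i$, forcing $\trace(\rho_\lambda)=1$.

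Your approach bypasses duality entirely by exploiting the exact scaling law $\likelihood(c\rho)=\likelihood(\rho)-\lambda\log c$ and the resulting separation $G(t\sigma)=\likelihood(\sigma)+\lambda(t-\log t)$. This is more elementary (no KKT, no Slater) and makes the role of $\lambda$ completely transparent as the unique penalty that pins the scalar minimizer at $t=1$. The paper's KKT framework, on the other hand, is what they reuse downstream: the same ``multiply by $\rho$ and trace'' trick reappears in the proof of Thm.~\ref{thm:nonconvex-equivalence}, and the Lagrangian and KKT variables are exactly the ingredients for the duality-gap error bound in Prop.~\ref{prop:suboptimality}. So your proof is cleaner for this single theorem, while theirs amortizes machinery across several results.

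One small imprecision worth fixing: in your closing remark, the scalar function for a generic penalty $\lambda'$ is $t\mapsto \lambda' t - \big(\sum_i f_i\big)\log t$, whose minimizer is $t=\big(\sum_i f_i\big)/\lambda'$; writing it as ``$\lambda t-\lambda\log t$'' with the same symbol in both places obscures the point, since that function is minimized at $t=1$ for any positive constant.
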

We delay the proof for a moment.
\newcommand{\Ustar}{\U_{\lambda}}
Given the equivalence of the MLE, we can now apply the Burer-Monteiro parameterization to the unconstrained version Eq.~\eqref{cvx-obj-unconstrained}, giving  %Eq.~\eqref{eq:Lag-U}. 
\begin{equation} \label{eq:Lag-U}
% \Lagrangian(\U, \lambda) = 
\min_{\U\in\C^{d\times r}} \underbrace{\f(\U) + \lambda \|\U\|_F^2}_{\f_\lambda(\U)}.
\end{equation}

The reparameterization does not affect the constraint, which we confirm in the following theorem.
% ================ Theorem ===============
\begin{theorem}\label{thm:nonconvex-equivalence}
Set $\lambda =   \sum_{i=1}^{\nTotalPOVM} f_i$ and let $\Ustar$ be any non-zero stationary point for Eq.~\eqref{eq:Lag-U} with this value of $\lambda$.  Then $\|\Ustar\|_F^2 = 1$, i.e., if $\rho=\Ustar\Ustar^\adjoint$ then $\trace(\rho)=1$.
% Then $\Ustar$ is a KKT point for the constrained problem Eq.~\eqref{eq:MLE-BM}.
\end{theorem}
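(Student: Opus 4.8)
The plan is to exploit the stationarity condition together with the special log-homogeneity of the data term, and the cleanest route is to pair the gradient with the iterate $\U$ itself. First I would write down the gradient of the penalized objective $\f_\lambda$. Using the gradient of $\f$ from Eq.~\eqref{eq:J_grad} and the fact that $\nabla(\lambda\|\U\|_F^2) = 2\lambda\U$, a non-zero stationary point $\Ustar$ must satisfy
\begin{equation*}
    \nabla \f_\lambda(\Ustar) = -2\sum_{i=1}^{\nTotalPOVM} \frac{f_i}{\trace(\Ustar^\adjoint A_i \Ustar)} A_i \Ustar + 2\lambda \Ustar = \mat{0}.
\end{equation*}
Before proceeding I would note that for this gradient to be well-defined, $\Ustar$ must lie in the domain of $\f$, i.e.\ $\trace(\Ustar^\adjoint A_i \Ustar) > 0$ for every $i$ with $f_i>0$; this holds because each $A_i\succeq\mat{0}$ and $\Ustar\neq\mat{0}$, so none of the denominators vanish.

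The key step is to take the Frobenius inner product of this equation with $\Ustar$, equivalently to compute $\trace(\Ustar^\adjoint \nabla\f_\lambda(\Ustar)) = 0$. The point is that the factor $A_i\Ustar$ paired against $\Ustar$ reproduces exactly the denominator $\trace(\Ustar^\adjoint A_i \Ustar)$, which then cancels, so each summand collapses to the scalar $f_i$. The computation therefore telescopes to
\begin{equation*}
    -2\sum_{i=1}^{\nTotalPOVM} f_i + 2\lambda\,\trace(\Ustar^\adjoint\Ustar) = 0,
    \qquad\text{i.e.}\qquad
    \lambda\,\|\Ustar\|_F^2 = \sum_{i=1}^{\nTotalPOVM} f_i.
\end{equation*}
Substituting the prescribed value $\lambda = \sum_{i=1}^{\nTotalPOVM} f_i$ and dividing by $\lambda$ yields $\|\Ustar\|_F^2 = 1$, which is the claim; since $\rho=\Ustar\Ustar^\adjoint$ gives $\trace(\rho)=\|\Ustar\|_F^2$, the trace-one property follows immediately.

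Two small points need to be verified to make this rigorous. First, the division by $\lambda$ requires $\lambda>0$, which is guaranteed by the standing assumption that $f_i\ge 0$ with not all $f_i$ equal to zero. Second, in the complex setting the Frobenius pairing is $\langle\mat{A},\mat{B}\rangle = \operatorname{Re}\trace(\mat{A}^\adjoint\mat{B})$; here every term $\trace(\Ustar^\adjoint A_i \Ustar)$ is already real and positive because $A_i$ is Hermitian PSD, so taking the real part is harmless and the cancellation goes through unchanged. I do not expect a genuine obstacle: the entire argument is really an Euler-type identity reflecting that $\f(t\Ustar) = \f(\Ustar) - 2\lambda\log t$, so the only care required is confirming the stationary point lies in the interior of the domain and that $\lambda$ is strictly positive.
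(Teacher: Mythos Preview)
Your proof is correct and essentially identical to the paper's: both write out the stationarity equation, multiply on the left by $\Ustar^\adjoint$, take the trace so that each summand collapses from $f_i\,\trace(\Ustar^\adjoint A_i\Ustar)/\trace(\Ustar^\adjoint A_i\Ustar)$ to $f_i$, and then divide by $\lambda=\sum_i f_i>0$. One minor slip in your side remark: $A_i\succeq\mat{0}$ and $\Ustar\neq\mat{0}$ alone do \emph{not} force $\trace(\Ustar^\adjoint A_i\Ustar)>0$ (take $\Ustar$ with columns in $\ker A_i$); the correct justification is simply that a stationary point is by definition a point where $\f_\lambda$ is differentiable, hence already in the domain.
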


Now we return to the proofs of the above theorems.
\begin{proof}[Proof of Thm.~\ref{thm:equivalence}]
We start by examining the optimality conditions for Eq.~\eqref{cvx-obj}. We note that $\rho=I$ is a strictly feasible point for the constraints, hence Slater's conditions hold, implying that the Karush–Kuhn–Tucker (KKT) conditions are necessary and sufficient for optimality~\cite{boyd2004convex}.  To derive the KKT conditions, we introduce the dual variables $\sigma\in\H^d$ and $\mu\in\R$ and the Lagrangian 
% The corresponding Lagrangian is:
\begin{equation} \label{eq:lagrangian}
    \Psi(\rho, \sigma, \mu) = -\sum_{i=1}^{\nTotalPOVM} f_i \log( \trace(A_i \rho) ) -
    % \langle\rho , \sigma \rangle 
    \trace(\rho \sigma )
    + \mu(\trace(\rho)-1).
\end{equation}
The KKT conditions for the above Lagrangian are to find $\rho,\sigma\in\H^d,\;\mu\in\R$ that satisfy 
\begin{subequations}\label{eq:KKT1}
\begin{align}
% \text{find }\rho,\sigma\in\H^d,\;\mu\in\R \text{ s.t. } \quad
\nabla_\rho \Psi(\rho, \sigma, \mu) = \mat{0}, \quad\text{i.e.,}\quad
\mat{0} &= 
    % \underbrace{
    -\sum_{i=1}^{\nTotalPOVM} \frac{f_i}{\trace(A_i \rho)} A_i%}_{\nabla \likelihood(\rho)} 
    - \sigma + \mu I  \label{KKT1a} \\ %\tag{stationarity}\\
    \rho &\succeq \mat{0}, \;\trace(\rho)=1\label{KKT1b}  \\% \tag{primal feasiblity}\\
    \sigma &\succeq \mat{0} \label{KKT1c} \\% \tag{dual feasibility}\\
    %\<\rho, \sigma \> 
    \trace(\rho \sigma )
    &= 0 \label{KKT1d} % \tag{complementary slackness}.
\end{align}
\end{subequations}
known as stationarity, primal feasiblity, dual feasiblity and complementary slackness, respectively.
\noeqref{eq:KKT1,KKT1a,KKT1b,KKT1c,KKT1d,KKT2a,KKT2b,KKT2c,KKT2d}

Similarly, the problem Eq.~\eqref{cvx-obj-unconstrained} also satisfies Slater's condition, and its KKT conditions are to find $\rho_\lambda,\sigma_\lambda\in\H^d$ that satisfy 
\begin{subequations} \label{eq:KKT2}
\begin{align}
\mat{0} &= 
    % \underbrace{
    -\sum_{i=1}^{\nTotalPOVM} \frac{f_i}{\trace(A_i \rho_\lambda)} A_i%}_{\nabla \likelihood(\rho)} 
    - \sigma_\lambda + \lambda I  \label{KKT2a} \\
    \rho_\lambda &\succeq \mat{0} \label{KKT2b} \\
    \sigma_\lambda &\succeq \mat{0}\label{KKT2c} \\
    %\<\rho_\lambda, \sigma_\lambda \> 
    \trace(\rho_\lambda \sigma_\lambda )
    &= 0.\label{KKT2d} 
\end{align}
\end{subequations}
Suppose $\rho_\lambda,\sigma_\lambda$ satisfy Eq.~\eqref{eq:KKT2} for $\lambda = \sum_{i=1}^{\nTotalPOVM} f_i$.  We claim that setting $\rho=\rho_\lambda$, $\sigma=\sigma_\lambda$ and $\mu = \lambda$ will satisfy Eq.~\eqref{eq:KKT1}. The only equation that is not obviously satisfied is $\trace(\rho_\lambda)=1$ from Eq.~\eqref{KKT1b}. Take Eq.~\eqref{KKT2a} and multiply on the right by $\rho_\lambda$ and then take the trace:
\begin{align*}
0 &= -\sum_{i=1}^{\nTotalPOVM} \frac{f_i}{\trace(A_i \rho_\lambda)} \trace(A_i \rho_\lambda)
    - \trace(\sigma_\lambda\rho_\lambda ) + \lambda \trace(I \rho_\lambda) \\ 
    &= -\sum_{i=1}^{\nTotalPOVM} f_i - 0 + \lambda \trace(\rho_\lambda) \\
    &= \lambda(-1 + \trace(\rho_\lambda) )
\end{align*}
hence $\trace(\rho_\lambda)=1$, where we used $\lambda= \sum_{i=1}^{\nTotalPOVM} f_i$ and $\lambda\neq 0$.

Conversely, if $\rho,\sigma$ and $\mu$ satisfy Eq.~\eqref{eq:KKT1}, by the same calculation we see that $\mu=\lambda$, and hence $\rho_\lambda=\rho$ and $\sigma_\lambda=\sigma$ are primal and dual optimal for Eq.~\eqref{eq:KKT2}.
\end{proof}

\begin{proof}[Proof of Thm.~\ref{thm:nonconvex-equivalence}]
By definition, a stationary point of Eq.~\eqref{eq:Lag-U} satisfies $\mat{0} = \nabla \f(\Ustar) + 2\lambda \Ustar$, i.e., 
\begin{equation} \label{eq:stationary-b}
\mat{0} = -2\sum_{i=1}^{\nTotalPOVM} \frac{f_i}{ \trace(\Ustar^\adjoint A_i \Ustar ) } A_i \Ustar + 2\lambda \Ustar.
\end{equation}
Multiplying 
the above equation 
% Eq.~\eqref{eq:stationary-b}
on the left by $\U^\adjoint$ and taking the trace, then via linearity of multiplication and trace, 
\begin{align*} %\label{eq:stationary_norm_2}
0 &= -2\sum_{i=1}^{\nTotalPOVM} \frac{f_i}{ \trace(\Ustar^\adjoint A_i \Ustar ) } \trace(\Ustar^\adjoint A_i \Ustar) + 2\lambda \trace(\Ustar^\adjoint \Ustar) \\
&=  -2\sum_{i=1}^{\nTotalPOVM} f_i + 2\lambda \|\Ustar\|_F^2
\end{align*}
hence our choice of 
% \begin{equation}
    $\lambda=\sum_{i=1}^{\nTotalPOVM} f_i $
% \end{equation}
gives $\|\Ustar\|_F^2=1$.
% Looking at the constrained problem Eq.~\eqref{eq:MLE-BM}, the constraint is equivalent to $\|\U\|_F^2 = 1$, so we identify $g(\U) = \|U\|_F^2 - 1$, and hence $\nabla g(\U) = 2 \U$.  This setting $\nu^\star = \frac{1}{2}\lambda$, we see that $\Ustar$ satisfies Eq.~\eqref{eq:stationarity1}, and furthermore it is feasible as shown above, hence it is a KKT point.
\end{proof} 

Note that there is no similar trick that works for the least-squares formulation, as we exploited the properties of $\log$ and its gradient that the least-squares objective does not have.
%
%We have shown that the Burer-Monteiro approach is unconstrained and the same is \emph{not} true for the least-squares MLE formulation.
% The Lagrangian formulation also leads to another advantage over the leading strategy in \cite{PhysRevA.95.062336}, where during the CG run, unit trace is ensured by normalizing once the state is updated in the factored domain. On the contrary, the Lagrange multiplier $\lambda$ ensures that the updated factor $\U$ has unit norm and therefore, $\rho = \U\U^\adjoint$ has unit trace without any explicit operation required. 

% =====================================================================================
% =====================================================================================
% \subsubsection{\textbf{Gradient Calculation}}
% \subsection{Efficient Function and Gradient Calculations} \label{sec:efficientCalculation}
\subsection{Efficient Function and Gradient Calculations} \label{sec:efficientCalculation}
% \input{section_efficientGradient_v2}  % Wed, May 14, putting this in a new file so easy to move around

% Sept 30 2025, putting the section_efficientGradient_v2 here
All the optimization methods we consider require calculating the function and gradient of the objective function, either $\likelihood : \H^d \to \R$ or $\f: \C^{d\times r} \to \R$.  Because $\nTotalPOVM =  \nEachPOVM\nPOVM$ is often large, a dominant cost of these computations is computing either $\rho \mapsto \trace(A_i \rho)$ or $\vec{U} \mapsto \trace(\vec{U}^\adjoint A_i \vec{U})$, for all $i\in 1,\ldots,\nTotalPOVM$.  For each $i$, computing $\trace(A_i \rho)$ by first multiplying $A_i \rho$ and then taking the trace will cost $\mathcal{O}(d^3)$ flops. This can be reduced to $\mathcal{O}(d^2)$ flops by exploiting $\trace(A_i^\top \rho) = \hvec(A_i)^\top \hvec(\rho)$ (see Appendix \ref{sec:vectorization} for details on $\hvec$), but this is still expensive.  If $\nTotalPOVM\approx 4^n=d^2$ (as it is when the measurements are informationally complete), this leads to $\mathcal{O}(d^4)$ flops for every function or gradient evaluation.  The low-rank case with $\mat{U}$ is similar (and we will discuss its subtleties below).  Since $\mathcal{O}(d^4)$ is unacceptably slow, we discuss tricks that exploit structure in order to speed up the calculation.  We divide these strategies into two groups, depending on the style of the problem. The first strategy is suitable for moderate numbers of qubits (roughly $n \lesssim 16$) where $d=2^n$ is not too large and it is also reasonable to collect $\nTotalPOVM\approx d^2$ measurements.  The second strategy focuses on very large dimensions $n \gtrsim 17$ where $\nTotalPOVM\ll d^2$, and will only be applicable to the factored formulation, $\vec{U} \mapsto \trace(\vec{U}^\adjoint A_i \vec{U})$. In this second setting, $d$ may be so large that the computer memory cannot store the explicit $d\times d$ matrix $\vec{U}\vec{U}^\adjoint$, so we avoid explicitly creating the matrix and hence refer to the resulting technique as the ``low memory'' technique, since the memory requirements will be lower than $d^2$.

% , and while we will consider several variants of cost functions, they all build on the objective function $J$ defined in Eq.~\eqref{eq:MLEobjective}. 

% The gradient of $J$ itself is straightforward to calculate using the chain rule:
% \begin{equation}\label{eq:gradJ}
%     \nabla \likelihood(\rho) = -\sum_{i=1}^{\nTotalPOVM} \frac{f_i}{\trace(A_i \rho)} A_i.
% \end{equation}
% ... However, calculating this efficiently is nuanced.

% TBD

\subsubsection{Efficient calculation for moderate dimensions: Quantum Measurement Transform}
\label{sec:QMT}
We first consider the setting when $n$ is not too large so that we can explicitly store the $d\times d$ matrix $\rho$ on the computer (i.e., 16 qubits or smaller, since a 16 qubit density matrix requires 34.3 GB if stored in double precision), and we also assume we have a structured set of POVMs created by tensor products of single qubit POVMs, like the Pauli and tetrahedral schemes discussed in Sec.~\ref{sec:meas}. The two keys to faster computation are (1) exploiting this tensor product structure, and (2) computing the value of $\trace(A_i \rho)$ for all $i=1,\ldots,\nTotalPOVM$ {\em simultaneously} in order to re-use computations.  This approach has been used before in the literature (e.g., \cite{PhysRevA.95.062336, PhysRevResearch.6.033034}), and we roughly follow the exposition in \cite{PhysRevA.95.062336} where it is called the Quantum Measurement Transform (QMT).

The product structure leverages the fact that measurements 
% and quantum states
can often be expressed as tensor products across multiple subsystems (qubits). By decomposing the POVMs into individual single-qubit measurements that operate on individual qubits, the computational complexity is significantly reduced compared to computing probabilities using full density matrices. 
Consider the case of Pauli POVMs, where each matrix $\Ail$ has the structure of $\frac12(I \pm W_\ell)$ for some tensor Pauli matrix $W_\ell$. Because the operation $\trace(\Ail \rho)$ is linear in $\Ail$, clearly the dominant cost reduces to computing $\trace(W_\ell \rho)$.

Instead of constructing $W_\ell = \bigotimes_{j=1}^n w_j^{(\ell)}$, 
where each $w_j^{(\ell)} \in \mathcal{W} = \{I, \sigma_x, \sigma_y, \sigma_z \}$, 
% explicitly (which would be computationally prohibitive),
the algorithm iterates over subsystems, applying $w_j^{(\ell)}$ to each subsystem within the full density matrix. In the process, it also reuses the partial trace computation of one subsystem for subsequent subsystem, leading to computational savings. %More details on the formulation and implementation of this efficient gradient calculation technique, termed the Quantum Measurement Transform (QMT), can be found in \cite{PhysRevA.95.062336, PhysRevResearch.6.033034}. 
The algorithm is described in the pseudocode Algo.~\ref{alg:QMT} \cite{PhysRevA.95.062336} (see Appendix \ref{appendix:qmt}), and then Algo.~\ref{alg:grad} shows how to use it to calculate the gradient $\nabla\likelihood$, all specialized to the case of Pauli POVMs.  
The computation complexity of the QMT calculation is brought down to $\mathcal{O}(nK^{n+1})$ instead of the naive complexity $\mathcal{O}(K^{n} 4^n)$ \cite{PhysRevA.95.062336}, where $K=|\mathcal{W}|$. For an $n$-qubit system with Pauli measurements, $K = 4$, so the QMT is $\mathcal{O}(\log(d) d^2)$ rather than $\mathcal{O}(d^4)$.

% ==========================
\begin{algorithm}[ht]%\onehalfspacing
\DontPrintSemicolon
\caption{Gradient calculation of $\likelihood$ using QMT for Pauli POVMs}\label{alg:grad}
\kwInput{$\rho \in \mathbb{C}^{d \times d}, \vec{f} \in \mathbb{R}^{\nTotalPOVM}, n$} 
$\vec{x} \gets \texttt{QMT} (\rho)$, $\quad x_{\ell} = \trace(W_\ell \rho )\;\forall\ell=1,\ldots,\nPOVM=4^n$ 
%\Comment*[r]{via Algo.~\ref{alg:QMT}} % No \; after comments
 \tcp{Via Algo.~\ref{alg:QMT}}
Reshape: $\vec{f}_{\nEachPOVM\times \nPOVM} \gets \vec{f}_{\nTotalPOVM\times 1}$\;
$\mat{G}=0^{d\times d}$\;
\For {$\ell = 1 \dots \nPOVM $}{
$\mat{G}\gets\mat{G}+ \frac{f_{1}^{\ell}}{\frac12(1+x_\ell)}\frac12(I+W_\ell)$ \tcp{$\APOVM{1}{\ell}=\frac12(I+W_\ell)$}
$\mat{G}\gets\mat{G}+ \frac{f_{1}^{\ell}}{\frac12(1-x_\ell)}\frac12(I-W_\ell)$ \tcp{$\APOVM{2}{\ell}=\frac12(I-W_\ell)$}
}
% \Ail}{\APOVM{k}{\ell}}
% $\nabla \likelihood \gets \textit{\textbf{Adjoint-QMT}} \ (f_i, \trace(A_i x), n)$\;

% \kwReturn{$ \nabla \likelihood = - \sum_{i=1}^{\nTotalPOVM} \frac{f_i}{\trace(A_i \rho)} A_i$}
\kwReturn{gradient $\mat{G}=\nabla \likelihood(\rho)\in\mathbb{C}^{d\times d}$ \tcp{$\nabla \likelihood(\rho) = -\sum_{i=1}^{\nTotalPOVM} \frac{f_i}{\trace(A_i \rho)} A_i$} }

% \textbf{Subroutine:} \textit{\textbf{Forward-QMT}}

% $\mat{P} \gets [\vec{I}, \vec{\sigma_x}, \vec{\sigma_y},  \vec{\sigma_z}] \in\C^{4 \times 4}$

% Reshape: $x_{\underbrace{\scriptstyle 2 \times 2 \times \cdots \times 2}_{2n \text{ times}}} \gets x_{\scriptstyle d \times d}$

% % \begin{bmatrix}
% % \vline height 0.5ex & \vline height 0.5ex & \vline height 0.5ex & \vline height 0.5ex \\
% % I & \sigma_x & \sigma_y & \sigma_z \\
% % \vline height 0.5ex & \vline height 0.5ex & \vline height 0.5ex & \vline height 0.5ex
% % \end{bmatrix}

% \For {$i = 1 \dots n $}{
% Reshape: $x_{\scriptstyle 4 \times 2^{n-2}}$

% $x \gets \mat{P}^\adjoint x$

% $x \gets x^\top$
  
% }

\end{algorithm}

Similar tricks can be done to calculate the function value $\likelihood$ itself, and also to extend to the tetrahedral POVMs.
When doing the Burer-Monteiro factorization, due to the extreme advantage of the QMT, we simply calculate $\f(\U)$ by forming $\rho=\U\U^\adjoint$ (at a cost of $\mathcal{O}(rd^2)$ flops) and then compute $\likelihood(\rho)$, and for $\nabla \f(\U)$ we re-use $\rho=\U\U^\adjoint$ and calculate $\nabla \f(\U) =   2\nabla \likelihood(\U\U^\adjoint)\U$ via the chain rule.

% ================================================================
% ================================================================
\subsubsection{Low memory MLE with BM }\label{sec:lowmem}
% ================================================================

The QMT method \cite{PhysRevA.95.062336} is very efficient for calculation of probabilities $\trace(A_i \rho)$ for moderate size quantum systems where one considers the complete set of operators of a POVM, but it is easily seen that as the number of qubits increases in a system, the QMT method and a na\"ive  storage of measurement operators, in addition to the density matrix, become very memory intensive, having $\mathcal{O}(d^2)$ storage requirement per density matrix or POVM element, hence an overall  $\mathcal{O}(\nTotalPOVM d^2)$ storage requirement. For $n=15$, a single state matrix takes $8.57$ GB to store in double precision; $n=16$ takes $34.3$ GB; and $n=17$ takes $137.2$ GB. Hence for $n\gtrsim 16$ or $17$, one cannot even store the state matrix on a consumer grade laptop, and for $n\gtrsim 20$ one cannot do it on a supercomputer either. 
In this regime, the convex formulation Eq.~\eqref{cvx-obj} is intractable to solve with standard algorithms, % due to the requirement that it forms the density matrix, 
so we focus exclusively on the non-convex form Eq.~\eqref{eq:MLE-BM}.

Furthermore, if $\nTotalPOVM\approx d^2$ (i.e., informationally complete), then for large $n$,  not only is this a physically unrealistic number of measurements to collect, but the runtime and memory requirements becomes unreasonable as well. 
Hence, we approach the very large dimensional case with the assumption that we have a fixed number of measurements $\nTotalPOVM$ which does \emph{not} increase proportionally with $d^2$ but instead is either fixed or grows with $\mathcal{O}(d)$; such settings have been discussed in the context of compressed sensing~\cite{liu2011universal} and direct fidelity estimation~\cite{PhysRevLett.106.230501}. In this setting, we cannot use the QMT since it needlessly calculates all $d^2$ measurements and because it requires direct formation of the density matrix $\rho$.

% As an example, consider $n=15$. In this case, we are looking at a Hilbert space where each of such matrix can take up around $8$GB of memory. In order to carry out measurements at these scales,

To deal with these challenges, we exploit structure and are also careful to never form the density matrix $\rho=\U\U^\adjoint$ explicitly. We describe it below for Pauli POVMs, though it can be extended to other tensor-product measurements.
%we propose a low-memory version of our MLE with Burer Monteiro (BM-MLE) approach for the case of Pauli operators.
As for the QMT approach, the computational bottleneck is computing $\trace(W \U \U^\adjoint)=\trace(\U^\adjoint W \U )$
for some Pauli matrix $W = \bigotimes_{j=1}^n w_j$, 
where each $w_j \in \mathcal{W} = \{I, \sigma_x, \sigma_y, \sigma_z \}$. We write this as $W = \sigma_1 \otimes \sigma_2 \otimes \dots \otimes \sigma_n$ (i.e., $\sigma_i = w_i$) to make it clear that each product matrix is a $2\times 2$ Pauli matrix. In particular, we need to calculate the matrix product $W \U$. To keep things more understandable, we show the derivation below for the case $\U=\vec{u}$, i.e., $r=1$ so that $\U$ only has one column. For $r>1$ columns, one can either loop explicitly over the columns or use tensors.

%  Our low-memory version is built on the following interpretation of the $\trace(A_i \rho)$ where $A_i$ is a Pauli operator of the form $A_i = \bigotimes_{j=1}^n w_j, w_j \in \mathcal{W}$.

% Consider the case of Pauli POVMs, where each matrix $\Ail$ has the structure of $\frac12(I \pm W_\ell)$ for some tensor Pauli matrix $W_\ell$. Because the operation $\trace(\Ail \rho)$ is linear in $\Ail$, clearly the dominant cost reduces to computing $\trace(W_\ell \rho)$.

% Instead of constructing $W_\ell = \bigotimes_{j=1}^n w_j^{(\ell)}$, 
% where each $w_j^{(\ell)} \in \mathcal{W} = \{I, \sigma_x, \sigma_y, \sigma_z \}$, 
% the algorithm iterates over subsystems, applying $w_j^{(\ell)}$ 

Crucial to the derivation is the fact that for a tensor (aka Kronecker) product,  matrix multiplication satisfies the identity $\left( \mat{A} \otimes \mat{B} \right) \vec{x} = \Vec\left( \mat{B} \mat{X} \mat{A}^\top\right)$ where $\mat{X}$ is a  matrix version of the vector $\vec{x}$, reshaped to appropriate dimensions in Fortran ordering.  The pseudocode for this is presented in Algo.~\ref{alg:lowmemau2}.
%Hence, letting $\vec{u}_1 = \vec{u}$, 
% and `$\Vec$' and `$\Mat$' be reshaping operations of appropriate sizes, then 
% \newcommand{\uu}{\vec{u}}

% \textcolor{red}{Trying to fix it...}
% \begin{align*}
% W \uu_1 = (\sigma_1 \otimes \sigma_2 \otimes \dots \otimes \sigma_n)\, \Vec(\uu_1)
% &= (\sigma_1 \otimes (\sigma_2 \otimes \dots \otimes \sigma_n))\, \Vec(\uu_1) \\
% &= \Vec\left((\sigma_2 \otimes \dots \otimes \sigma_n)\, \Mat(\uu_1)\, \sigma_1^\top\right) \\
% &= \Vec\left((\sigma_3 \otimes \dots \otimes \sigma_n)\, \Mat(\uu_2)\, \sigma_2^\top \sigma_1^\top\right) \\
% &\vdots\\
% &= \Vec\left(\sigma_n\, \Mat(\uu_{n-1})\, \sigma_{n-1}^\top \cdots \sigma_1^\top \right)
% \end{align*}
% where $\uu_k := \Mat(\uu_{k-1})\, \sigma_{k-1}^\top$

% \textcolor{red}{Is this right? Using $\sigma^2=I$ ??}
% \begin{align*}
% W \uu_1 = (\sigma_1 \otimes \sigma_2 \otimes \dots \otimes \sigma_n)\, \Vec(\uu_1)
% &= (\sigma_1 \otimes (\sigma_2 \otimes \dots \otimes \sigma_n))\, \Vec(\uu_1) \\
% &= \Vec\left((\sigma_2 \otimes \dots \otimes \sigma_n)\, \Mat(\uu_1)\, \sigma_1^\top\right) \\
% &= \Vec\left((\sigma_3 \otimes \dots \otimes \sigma_n)\, \Mat(\uu_2)\, \sigma_2^\top \sigma_1^\top\right) \\
% &\vdots\\
% &= \Vec\left(\sigma_n\, \Mat(\uu_{n-1})\, \sigma_{n-1}^\top \cdots \sigma_1^\top \right)
% \end{align*}
% where, $\uu_k := \Mat(\uu_{k-1})\, \sigma_{k-1}^\top$

% \textcolor{blue}{Alternate algorithm for low-memory implementation: To be discussed}

\begin{algorithm}[H]
\DontPrintSemicolon
\caption{\textsc{Lowmem} — Apply Pauli product operator on $\U$ to get $W_\ell\U$}
\label{alg:lowmemau2}
\KwIn{\( \U \in \mathbb{C}^{d \times r} \), measurement index \( \ell \in \{0, \dots, 4^n - 1\}\)}
%\KwOut{\( W u \), \( W = \sigma_{m_1} \otimes \cdots \otimes \sigma_{m_n} \)}

% \Begin{

    \( n \gets \log_2(d) \) \tcp*{Number of qubits}

    %Convert $M$ to base-4: \( \mathbf{p} \gets (p_0, p_2, \dots, p_{n-1}) \in \{0,1,2,3\}^n \) 
    $ \vec{q} \gets (q_0, q_1, \dots, q_{n-1}) = \base_{4} (\ell) $ \tcp*{\( W_\ell = \sigma_{q_1} \otimes \cdots \otimes \sigma_{q_n} \)}
    
    \For{ $k = 0$  \KwTo  $n - 1$ }{
         % $p \gets p_{n - k - 1}$  \tcp*{Least-significant qubit first}

        % $\U \gets$ \texttt{PauliFcn[$p$]}$(\U, d, k)$\;
        $\U \gets$ \texttt{PauliFcn[$q_{n - k - 1}$]}$(\U, d, k)$\tcp*{Least-significant qubit first}
    }
% \( \U := W\U \)    
% }
\kwReturn {$ \U $}
\end{algorithm}

Algorithm~\ref{alg:lowmemau2} is designed to calculate the product $W \U$ and represents the action of $n$-fold tensor product based POVMs on a state vector or any $\U \in \mathbb{C}^{d \times r}$. The algorithm achieves this by recursive application of the above identity, which helps us match the tensor product to a sequence of matrix multiplications. This formulation avoids explicit calculation of tensor products and only requires operation with single-qubit operators at any step, for example, $\sigma_x \U$. In the particular case of the Pauli POVMs, application with Pauli matrices are simple operations of scaling and/or swapping rows which can be performed solely by indexing having $\mathcal{O}(dr)$ complexity. 
%Using this fact, we further take a step to completely eliminate any matrix-matrix multiplications encountered for single-qubit operations. 
This implementation  achieves $\mathcal{O}(rd\log d)$ complexity for a single Pauli measurement on $n (=\log_2 d)$ qubits, and hence $\mathcal{O}(\nTotalPOVM rd\log d)$ complexity for $\nTotalPOVM$ measurements
which is better than the QMT if $\nTotalPOVM < d/(r \log d)$; see also Table~\ref{tab:complexity}. The \texttt{PauliFcn} in Algo.~\ref{alg:lowmemau2} denotes a function mapping that applies the local Pauli matrix  $\sigma_q \in \{I, Z, X, Y\}$  to $\U$, acting on a specific qubit (indexed by $k$) within an $n$-qubit system. Rather than forming full Kronecker products, each function uses a structured transformation:
\begin{align*}
q &= 0 \ (\sigma_I): \text{returns  $\U$  unchanged} \\
q &= 1 \ (\sigma_Z): \text{\parbox[t]{0.7\linewidth}{multiplies appropriate rows of $\U$ by a $\pm1$ sign pattern}} \\
q &= 2 \ (\sigma_X): \text{\parbox[t]{0.7\linewidth}{permutes appropriate rows of $\U$ using a bit-flip pattern on the $k$-th qubit}} \\
q &= 3 \ (\sigma_Y): \text{\parbox[t]{0.7\linewidth}{combines $\sigma_Z$ and $\sigma_X$ operations since $Y = iXZ$.}}
\end{align*}
More details related to the implementation are provided in  Appendix~\ref{appendix:apply-paulis}.

%We find that the computational complexity of the proposed low memory version is $\mathcal{O}(d \log d)$ and the storage complexity is $\mathcal{O}(rd)$ where $r$ is the rank of the factor $\U$ used to estimate $\rho = \U \U^\adjoint$.

\begin{figure}[ht]
    \centering
    \begin{subfigure}[h]{0.45\textwidth} 
        \includegraphics[width=\textwidth,trim=0.4cm 0.2cm 1.6cm 1.4cm,clip]{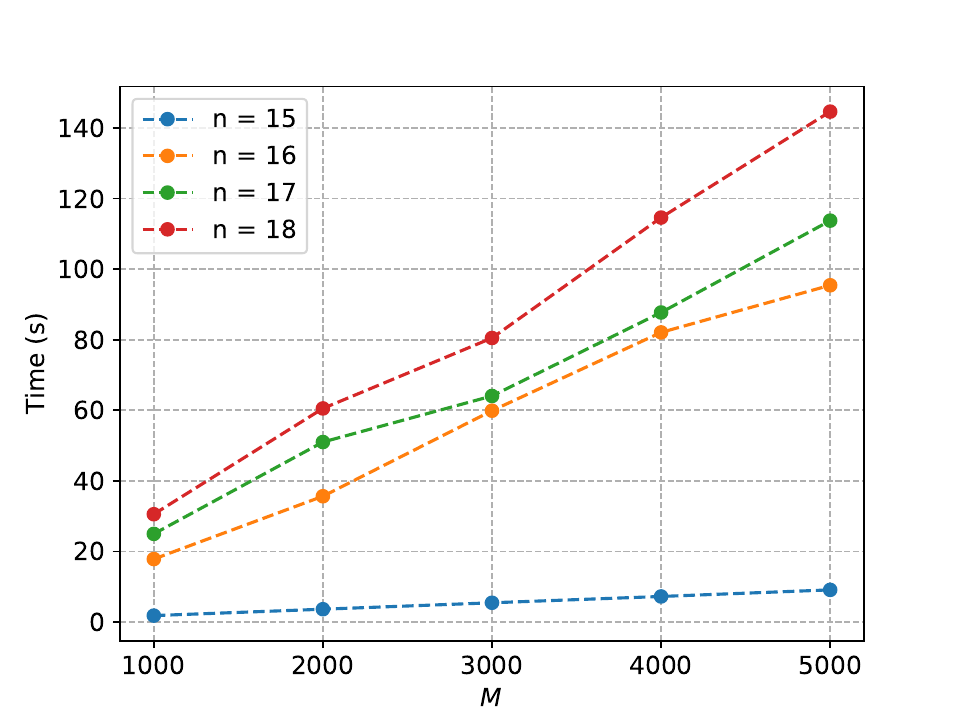}
        \caption{}
        %\caption{Time taken for calculation of probability values as a function of $M$ with varied number of qubits $n$.}
        \label{fig:timeVsM}
    \end{subfigure}
    \hfill
    \begin{subfigure}[h]{0.46\textwidth}
        \includegraphics[width=\textwidth,trim=0.5cm 0.7cm 0.4cm 0.4cm,clip]{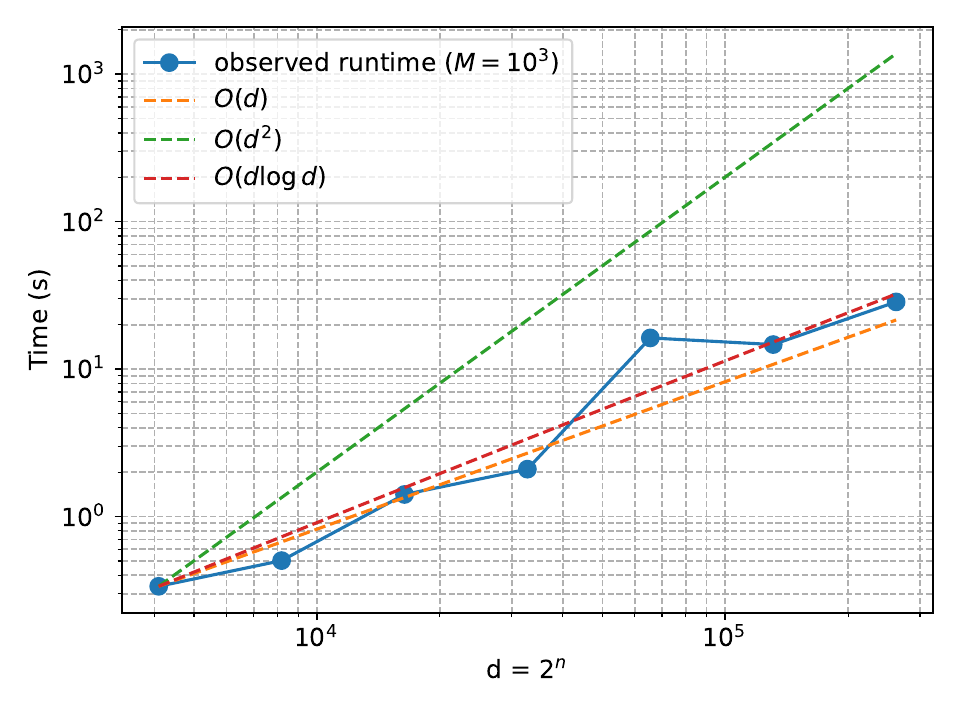}
        \caption{}
        %\caption{Observed runtime complexity of the proposed algorithm as a function of Hilbert space dimension $d$; $\mathcal{O}(d)$, $\mathcal{O}(d\log d)$ and $\mathcal{O}(d^2)$ scalings shown for reference.}
        \label{fig:runtimeComplexity}
    \end{subfigure}
    \caption{Runtime performance of the proposed low-memory BM-MLE for large qubit systems. (a) Time taken for calculation of probability values as a function of $M$ with varied number of qubits $n$. (b) Observed runtime complexity of the proposed algorithm as a function of Hilbert space dimension $d$; $\mathcal{O}(d)$, $\mathcal{O}(d\log d)$ and $\mathcal{O}(d^2)$ scalings shown for reference.}                                  
    \label{fig:runtime}
\end{figure}

We provide some results related to our low-memory implementation of MLE with BM factorization. Fig.~\ref{fig:timeVsM} shows the time taken by the low-memory algorithm to compute probabilities for a $W$-state (rank-1) when the number of Pauli observables, denoted by $M$, is varied and are chosen randomly. The plot confirms the linear relationship between computation time and $M$. %We can also observe that calculating probabilities for an $18$ qubit system for $5000$ operators takes $\sim 2.5$ minutes. 
Fig.~\ref{fig:runtimeComplexity} compares the observed runtime complexity of our algorithm against $d$ for a fixed number of samples $M$. %$\mathcal{O}(d^2)$.
The results in the figure are consistent with our theoretical runtime analysis of   $\mathcal{O}(d\log d)$, in contrast to the $\mathcal{O}(d^2)$ complexity of the QMT. %other non-BM, high-memory methods.
These results show that if the number of measurements is fixed, then our low-memory algorithm provides significant improvements over the QMT when $n$ is large.

\subsection{\label{sec:full-algo}Full Algorithm}
Our proposed approach is to solve Eq.~\eqref{eq:Lag-U} using a standard unconstrained optimization algorithm. Specifically, we focus on L-BFGS and Accelerated Gradient Descent (AccGD) as these are both excellent time-tested solvers. These solvers are designed for optimizing over vectors, not matrices, so when $r>1$, we convert $\U\in\C^{d\times r}$ into $\uu\in\C^{dr\times 1}$ via the reshaping operator $\uu=\rvec(\U)$ and then get back the final answer via $\U=\rmat(\uu)$ where $\rmat=\rvec^{-1}$. We define $\fvec_\lambda(\uu) = \f_\lambda(\rmat(\uu))$ and hence $\nabla \fvec_\lambda(\uu) = \rvec( \nabla \f_\lambda(\rmat(\uu)) )$.

L-BFGS is a limited-memory variant of the quasi-Newton method BFGS, using a computationally cheap approximation of the inverse Hessian in order to approximate Newton's method, and also includes a linesearch in order to stabilize the method~\cite{nocedal2006numerical}. Pseudocode of using L-BFGS to minimize $\fvec_\lambda(\uu)$ is in Algo.~\ref{algo:L-BFGS}. In practice, we either use the Python library \verb|scipy.optimize.minimize| with method \verb|L-BFGS-B| which is equivalent to L-BFGS when there are no constraints, or we use the version of L-BFGS in \verb|torch|.

\begin{algorithm}[ht]
\DontPrintSemicolon
\caption{L-BFGS to solve $\min_{\uu}\; \fvec_\lambda(\uu)$}\label{algo:L-BFGS}
\kwInput{$\uu_1 \in \C^{dr \times 1}$, max iterations $T$, memory size $h \ge 1$}
\For {$t = 1, 2, \dots, T$}{
% \For {$k = 1, 2, \dots, k_{\max}$}{
Update inverse Hessian approximation $\mat{H}_t$ with memory size $h$ \tcp*{see \cite{nocedal2006numerical} for details}
Compute search direction $\vec{p}_t \gets -\mat{H}_t \nabla \fvec_\lambda(\uu_t)$\;
Update $\uu_{t+1}\gets \uu_t + \alpha_t \vec{p}_t$ via linesearch \tcp*{stepsize $\alpha_t$ chosen to satisfy Wolfe conditions}
}
% \kwReturn{$\uu_{k_{\max}}$ and, optionally, $\rho = \U\U^\adjoint$ where $\U=\rmat(\uu_{k_{\max}})$}
\kwReturn{$\uu_{T}$ and, optionally, $\rho = \U\U^\adjoint$ where $\U=\rmat(\uu_{T})$}
% $\U \gets \texttt{gradientBasedSolver}\textit{($u_0$, \textbf{func()}, \textbf{grad()}, $r$, **args)}$
% \tcp*{e.g.: L-BFGS, Acc.\ GD}
% % \quad \Comment{e.g.: L-BFGS, Acc. GD}
% $\rho \gets \U\U^\adjoint$\;
\end{algorithm}

The second method we consider has more rigorous convergence guarantees in the convex case, even though it is often slower in practice. It is similar to gradient descent but has an auxiliary variable, slightly analogous to ``momentum'' techniques such as Polyak's heavy ball method.
The first accelerated method was presented by Nesterov in \cite{Nesterov83} and there are now many variants, cf.~\cite{FISTA,BeckBook2017}. It reduces to gradient descent if $\theta_t = 0$. As the implementation is simpler than L-BFGS, we use our in-house open source code~\footnote{\protect\url{https://github.com/stephenbeckr/convex-optimization-class/blob/main/utilities/firstOrderMethods.py}}.

\begin{algorithm}[ht]
\DontPrintSemicolon
\caption{Accelerated Gradient Descent (AccGD) to solve $\min_{\uu}\; \fvec_\lambda(\uu)$}  \label{algo:AGD}
\kwInput{$\uu_1 \in \C^{dr \times 1}$, max iterations $T$, stepsize $\eta>0$}
$\vec{v}_1 \gets \uu_1$\; 
\For {$t = 1, 2, \dots, T$}{
% \For {$k = 1, 2, \dots, k_{\max}$}{
$\uu_{t+1} \gets \vec{v}_t - \eta \nabla \fvec_\lambda(\vec{v}_t)$\;
$\vec{v}_{t+1} = \uu_{t+1} + \theta_k\left( \uu_{t+1} - \uu_{t}\right)$ \tcp*{$\theta_t = \frac{t}{t+3}$ or other choice from literature}
}
% \kwReturn{$\uu_{k_{\max}}$ and, optionally, $\rho = \U\U^\adjoint$ where $\U=\rmat(\uu_{k_{\max}})$}
\kwReturn{$\uu_{T}$ and, optionally, $\rho = \U\U^\adjoint$ where $\U=\rmat(\uu_{T})$}
% $\U \gets \texttt{gradientBasedSolver}\textit{($u_0$, \textbf{func()}, \textbf{grad()}, $r$, **args)}$
% \tcp*{e.g.: L-BFGS, Acc.\ GD}
% % \quad \Comment{e.g.: L-BFGS, Acc. GD}
% $\rho \gets \U\U^\adjoint$\;
\end{algorithm}

\paragraph{Why solve an unconstrained problem?}
\begin{figure*}[ht]

    \centering
    \includegraphics[width=1.0\linewidth, trim=0.2cm 0.2cm 0.2cm 0.2cm,clip]{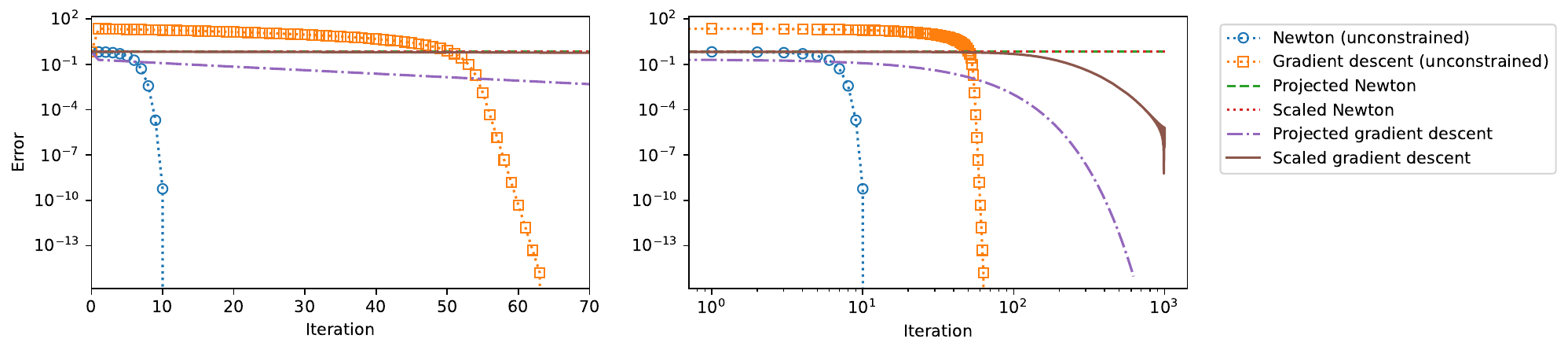}
    \caption{Solving the toy problem
with parameters $a=1.1$, $b=c=10^4$ and $d=1$, starting at $\vec{u}_0 = (.99,.01)$; the optimal solution is at $(0.500005, 0.499995)$. Left: linear $x$-scale. Right: logarithmic $x$-scale.
The two unconstrained solvers solve Eq.~\eqref{problem:toy_unconstrained} and converge quickly, whereas the constrained solvers that solve the (equivalent) problem Eq.~\eqref{problem:toy} converge slowly or not at all. 
% stepsizeList = np.logspace(-5,2,100)
% $ $
    % [0.500005 0.499995
% {'N': None, 'PN': None, 'SN': None, 'GD': np.float64(0.24201282647943834), 'PGD': np.float64(0.01291549665014884), 'SGD': np.float64(37.64935806792464)}
% [0.99 0.01]
% 1.1 10000.0 10000.0 1
    }
    \label{fig:toyExample}
\end{figure*}
Algorithms for unconstrained optimization are much more powerful than algorithms for constrained optimization, which we illustrate explicitly on a small problem. 
Consider a 2-dimensional toy problem over the variable $\vec{u}=(u_1,u_2)$
% $\vec{u}=\begin{bmatrix}
%     u_1 & u_2 
% \end{bmatrix}^\top$
\begin{align} \label{problem:toy}
&\min_{\vec{u}\in\R^2,\;\vec{u} \ge 0}\; %j(\vec{u}) = 
-\log( a u_1 + b u_2 ) - \log( c u_1 + d u_2 ) \\
    &\text{subject to }\; u_1 + u_2 \le 1  \notag
\end{align}
which is of similar form to the MLE problem, and we can similarly find that (under some conditions) the $u_1+u_2\le 1$ constraint can be removed and replaced with a penalty of $\lambda\cdot (u_1+u_2)$ in the objective, for $\lambda=2$:
\begin{align} \label{problem:toy_unconstrained}
\min_{\vec{u}\in\R^2,\;\vec{u} \ge 0}\; %j(\vec{u}) = 
-\log( a u_1 + b u_2 ) & - \log( c u_1 + d u_2 ) \\
& + 2(u_1+u_2)
    % &\text{subject to }\; \vec{u} \ge 0. \notag
\end{align}
% The scripts for the toy problem are here: https://colab.research.google.com/drive/1qGPSPNiZ3grPJMVgNq2mnuFge_FWGsCz?usp=sharing

We choose this example because it is simple enough to derive a closed form expression for the optimal solution~\footnote{
specifically, $u_1^\star = \frac{-1}{\lambda}\left(
\frac{d}{c-d} - \frac{b}{b-a} \right)$ and $u_2^\star=\frac{-1}{\lambda}\left(
\frac{a}{b-a} - \frac{c}{c-d}\right)$.} and because we can adjust the parameters $a,b,c$ and $d$ to make the problem ill-conditioned and hard to solve. This ill-conditioning can come in the form of coupling one variable to another, or from different ``length-scales'' of the variables, and will slow down methods like gradient descent and nonlinear conjugate gradient.  Newton's method, which modifies gradient descent by accounting for the curvature, is barely affected by the ill-conditioning and will converge much faster.  However, Newton's method does not easily extend to problems with non-trivial constraints, especially for large-scale problems; properly extending it involves solving an expensive subproblem at every step~\cite{ProxNewton_LeeSunSaunders}.

The results in Fig.~\ref{fig:toyExample} show an example of this. Newton's method (and even gradient descent) converge quite rapidly when applied to the unconstrained problem Eq.~\eqref{problem:toy_unconstrained}.  We can attempt to adapt these methods to handle the constraints in Eq.~\eqref{problem:toy} by either projecting or scaling/normalizing the iterates to remain feasible, but for Newton's method, this leads to a method that does not converge, and for gradient descent, the method converges slowly, even for an optimally chosen stepsize~\footnote{
All the gradient descent solvers need a stepsize, unlike Newton's method. For each variant of gradient descent, we separately perform a grid search over $100$ stepsizes ranging between $10^{-5}$ and $10^2$ and use the best one.}.
The scaled gradient descent is similar to the scaled non-linear conjugate gradient approach taken by \cite{PhysRevA.95.062336} for solving the MLE.

For the MLE problem, we cannot use any variant of Newton's method because of the large dimensions and hence the high cost of inverting the Hessian matrix. We can, however, use efficient and very effective quasi-Newton methods like L-BFGS, which are nearly as good.  Such methods do not apply straightforwardly to constrained problems; e.g., the constrained solver L-BFGS-B is like L-BFGS but only handles box constraints so it cannot handle a constraint like $\trace(\rho)=1$ and even with box constraints it relies on repeatedly guessing for the ``active'' constraints.

We note other issues with scaling and projection methods. Scaled gradient descent (e.g., renormalizing the variable to satisfy a constraint, such as $\rho \gets \rho/\trace(\rho)$) is a heuristic method that typically does not enjoy any formal convergence guarantees.  Projected gradient descent typically does have convergence guarantees, but the projection can be expensive. For projecting $\rho$ onto the set of unit trace psd Hermitian matrices, one must take an eigenvalue decomposition~\cite{boyd2004convex}, as done in \cite{PhysRevA.95.062336,Guta_2020}, which takes $\mathcal{O}(d^3)$ flops.

% =============================================
\subsection{\label{sec:error-bound}Error Bound}
For real data, there is no ``ground truth'' answer for which to verify that the optimization is indeed working, but instead we can use optimization duality theory to give a bound on the sub-optimality of a candidate solution $\rho$ with respect to the optimal maximum likelihood estimate.

% TODO; mention B-M, split into two methods

% Due to the lack of knowledge of the true quantum state, the performance of different algorithms cannot be compared with respect to the noiseless ground truth. Instead, we can check how far the estimated state is from the true $optimization$ solution. Let's consider the convex MLE formulation. It is possible to upper bound the optimization error using the Burer-Monteiro factorization in MLE objective for quantum state tomography.

% ----------------------
\newcommand{\rhohat}{\tilde{\rho}}
\begin{prop} \label{prop:suboptimality}
Let $\rhohat\in\H^d$ be any feasible point for the MLE problem Eq.~\eqref{cvx-obj}, i.e., $\rhohat \succeq \mat{0}$ and $\trace(\rhohat)=1$.
Then the optimization error can be upper bounded by 
\begin{align}
    &\likelihood(\rhohat)-\likelihood^\star \leq 
    \trace\left( \nabla \likelihood(\rhohat)  \rhohat \right) + \mu, \\
    % \< \nabla \likelihood(\rho) + \mu I, \rho \>
    & \mu = -\min\left\{ 0, \mineig(\nabla \likelihood(\rhohat))\right\}
\end{align}
where %$\mu = \lambda_{\min}(\nabla \likelihood(\rho))$.
% $\mu$ is the minimum eigenvalue of $\nabla \likelihood(\rho)$ 
$\mineig(\cdot)$ is the minimum eigenvalue of a matrix, 
and $\likelihood^\star$ is the optimal value of Eq.~\eqref{cvx-obj}.
\end{prop}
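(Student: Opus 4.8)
The plan is to produce a cheaply computable lower bound on the optimal value $\likelihood^\star$ from the first-order convexity inequality and then subtract it from $\likelihood(\rhohat)$. Recall that $\likelihood$ is convex (each summand $-f_i\log(\trace(A_i\rho))$ is a nonnegative multiple of the convex function $-\log$ composed with the affine map $\rho\mapsto\trace(A_i\rho)$) and that its gradient $\nabla\likelihood(\rhohat)$ from Eq.~\eqref{eq:gradJ} is Hermitian. First I would write the supporting-hyperplane inequality in the Hilbert--Schmidt inner product $\langle A,B\rangle=\trace(AB)$: for every $\rho$ in the domain of $\likelihood$,
\begin{equation*}
\likelihood(\rho)\ \ge\ \likelihood(\rhohat) + \trace\big(\nabla\likelihood(\rhohat)(\rho-\rhohat)\big),
\end{equation*}
which is real-valued since both factors are Hermitian.

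Next I would apply this at an arbitrary feasible point $\rho$ (so $\rho\succeq\mat{0}$ and $\trace(\rho)=1$) and take the infimum of both sides over the feasible region, which yields
\begin{equation*}
\likelihood^\star\ \ge\ \likelihood(\rhohat) - \trace\big(\nabla\likelihood(\rhohat)\rhohat\big) + \inf_{\rho\succeq\mat{0},\,\trace(\rho)=1}\trace\big(\nabla\likelihood(\rhohat)\rho\big).
\end{equation*}
Because enlarging the constraint set only decreases an infimum, I would then replace the feasible region by the superset $\{\rho:\rho\succeq\mat{0},\ \trace(\rho)\le 1\}$; this gives a valid (if slightly weaker) lower bound in which the last term becomes the relaxed minimum $\min_{\rho\succeq\mat{0},\,\trace(\rho)\le 1}\trace(\nabla\likelihood(\rhohat)\rho)$. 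Rearranging then leaves me to evaluate this relaxed minimum.

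Finally I would compute the relaxed minimum in closed form using the key eigenvalue fact: for Hermitian $G$ and any $\rho\succeq\mat{0}$, the matrix $G-\mineig(G)\mat{I}$ is PSD, so $\trace\big((G-\mineig(G)\mat{I})\rho\big)\ge 0$, i.e.\ $\trace(G\rho)\ge \mineig(G)\trace(\rho)$; minimizing the right-hand side over $\trace(\rho)=t\in[0,1]$ gives $\min\{0,\mineig(G)\}$, attained at $\rho=\mat{0}$ when $\mineig(G)\ge 0$ and on the minimal-eigenvector projector (with $t=1$) otherwise. Taking $G=\nabla\likelihood(\rhohat)$ and substituting back produces exactly $\likelihood(\rhohat)-\likelihood^\star\le \trace(\nabla\likelihood(\rhohat)\rhohat)+\mu$ with $\mu=-\min\{0,\mineig(\nabla\likelihood(\rhohat))\}$. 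The step needing the most care is the choice of relaxation: using the equality $\trace(\rho)=1$ would give the tighter correction $-\mineig(G)$, but deliberately relaxing to $\trace(\rho)\le 1$ yields the clean, always-nonnegative $\mu\ge 0$ stated in the proposition. A minor caveat I would flag is differentiability --- I assume $\rhohat$ lies where $\trace(A_i\rhohat)>0$ for all $i$ with $f_i>0$, so that $\likelihood(\rhohat)$ and $\nabla\likelihood(\rhohat)$ are finite and well-defined (otherwise the bound is vacuous).
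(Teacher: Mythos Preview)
Your proof is correct. Both your argument and the paper's arrive at the same bound via the same underlying inequality, but the packaging differs: the paper works in the Lagrangian/KKT framework already set up for Theorem~\ref{thm:equivalence}, constructing dual-feasible variables $\sigma=\nabla\likelihood(\rhohat)+\mu I\succeq\mat{0}$ and $\mu$, and then uses that $\rhohat$ minimizes the Lagrangian (by stationarity and convexity) together with $\trace(\rho^\star\sigma)\ge 0$ to obtain $\likelihood(\rhohat)\le\likelihood^\star+\trace(\rhohat\sigma)$. You instead invoke the first-order convexity inequality directly and then explicitly minimize the linear term $\trace(\nabla\likelihood(\rhohat)\rho)$ over a relaxed feasible set. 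The two are equivalent---your minimization step is exactly what the dual-feasibility choice of $\mu$ and the bound $\trace(\rho^\star\sigma)\ge 0$ encode---but your route is more self-contained (no reference to KKT conditions or the earlier Lagrangian), while the paper's reuses infrastructure and makes the duality-gap interpretation explicit. Your remark that keeping $\trace(\rho)=1$ would give the slightly tighter correction $-\mineig(\nabla\likelihood(\rhohat))$ is a nice observation not stated in the paper.
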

% ----------------------
% label{eq:KKT1}
% ----------------------
\begin{proof}
We will construct dual variables $\sigma$ and $\mu$ based on the KKT equations Eq.~\eqref{eq:KKT1} for the optimization problem Eq.~\eqref{cvx-obj} and then compute the duality gap. The stationarity condition Eq.~\eqref{KKT1a} can be equivalently written as
\begin{equation}
    \vec{0} = \nabla \likelihood(\rhohat) - \sigma + \mu I.
\end{equation}
We set $\sigma =  \nabla \likelihood(\rhohat) + \mu I$ and choose $\mu = -\min\left\{ 0, \mineig(\nabla \likelihood(\rhohat))\right\}$ to ensure dual feasibility, $\sigma \succeq \mat{0}$.  Let $\Psi$ be the Lagrangian from Eq.~\eqref{eq:lagrangian}, and let $\rho^\star$ be an optimal point, i.e., $\rho^\star$ is feasible and $\likelihood(\rho^\star)=\likelihood^\star$.

Thus by stationarity and by convexity of $\Psi$, we have that $\rhohat$ minimizes $\Psi(\rho,\sigma,\mu)$, hence 
\begin{align*}
% \likelihood(\rhohat) - \trace(\rhohat \sigma )+ \mu(\underbrace{\trace(\rhohat)-1}_{=0})
% = 
   \Psi(\rhohat, \sigma, \mu) &= \min_{\rho\in\H^d} \Psi(\rho,\sigma,\mu) \\
   &\le \Psi(\rho^\star,\sigma,\mu) 
\end{align*}
and $\Psi(\rhohat, \sigma, \mu) \le \Psi(\rho^\star, \sigma, \mu)$ implies
\begin{align*}
\likelihood(\rhohat) - \trace(\rhohat \sigma )+ \mu(\underbrace{\trace(\rhohat)-1}_{=0})
&\le 
\likelihood(\rho^\star) - \underbrace{\trace(\rho^\star \sigma )}_{\ge 0} \\
& + \mu(\underbrace{\trace(\rho^\star)-1}_{=0}) \\
&\le \likelihood(\rho^\star)
\end{align*}
using $\trace(\rho^\star)=1$ and $\rho^\star,\sigma \succeq \mat{0}$.
Rearranging, this gives $\likelihood(\rhohat) \le \likelihood^\star + \trace(\rhohat \sigma)$, and substituting in $\sigma =  \nabla \likelihood(\rhohat) + \mu I$ gives the result.
\end{proof}

% \begin{algorithm}[ht]
% \caption{Error Bound}\label{alg:errbound}
% \kwInput{$\U \in \C^{d \times r}, y$}
% \kwReturn{\textit{Error-Bound}}

% $\rho \gets \U\U^\adjoint$ 

% $\mu \gets -\lambda_{\text{min}}(\nabla f(\rho))$

% $\sigma = \nabla f(\rho) + \mu I$

% $\textit{Error-Bound} \gets \< \rho, \sigma \>$

% \end{algorithm}

The following corollary is immediate and lets us have a rigorous {\em a posteriori} error estimate even when solving a non-convex problem.
\begin{corollary} \label{corollary:errorbound}
Given any $\U \in \C^{d\times r}$ with unit Hilbert-Schmidt norm (e.g., obtained by solving the MLE objective using the BM technique), $\rhohat = \U\U^\adjoint$ forms a feasible density matrix and hence the bound of Prop.~\ref{prop:suboptimality} applies.
\end{corollary}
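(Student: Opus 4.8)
The plan is to verify directly that the two feasibility requirements of Prop.~\ref{prop:suboptimality}---namely $\rhohat\succeq\mat{0}$ and $\trace(\rhohat)=1$---hold for $\rhohat=\U\U^\adjoint$, after which the proposition applies verbatim. No new estimates are needed; the content is purely a matter of confirming that the Burer--Monteiro factorization, combined with the unit-norm hypothesis, produces a genuine density matrix.

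First I would establish positive semi-definiteness. For any $\U\in\C^{d\times r}$ and any $\vec{x}\in\C^d$, we have $\vec{x}^\adjoint\U\U^\adjoint\vec{x}=\|\U^\adjoint\vec{x}\|^2\ge 0$, and $\U\U^\adjoint$ is manifestly Hermitian, so $\rhohat\succeq\mat{0}$ holds automatically by construction, independently of the norm of $\U$. This is exactly the structural advantage of the factorization emphasized in Sec.~\ref{sec:preliminaries}.

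Second I would confirm the trace constraint. Using the circular invariance of the trace, $\trace(\rhohat)=\trace(\U\U^\adjoint)=\|\U\|_F^2$, which equals $1$ precisely under the unit Hilbert--Schmidt norm hypothesis---the same identity already invoked in the proof of Thm.~\ref{thm:nonconvex-equivalence}. Together with the previous step, $\rhohat$ is therefore feasible for Eq.~\eqref{cvx-obj}, so Prop.~\ref{prop:suboptimality} yields the stated bound with $\mu=-\min\{0,\mineig(\nabla\likelihood(\rhohat))\}$.

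There is no real obstacle here; essentially all the content is carried by the phrase ``of unit Hilbert--Schmidt norm.'' The one point worth flagging for use in practice is that an arbitrary iterate of the unconstrained solver for Eq.~\eqref{eq:Lag-U} need not satisfy $\|\U\|_F=1$ before convergence (unit norm is guaranteed only at stationary points, by Thm.~\ref{thm:nonconvex-equivalence}); to apply the bound to such an intermediate iterate one first rescales $\U\mapsto\U/\|\U\|_F$, which restores feasibility of $\rhohat$ without otherwise altering the argument.
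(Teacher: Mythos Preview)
Your proposal is correct and matches the paper's own treatment: the corollary is stated as ``immediate'' with no written proof, since feasibility of $\rhohat=\U\U^\adjoint$ follows directly from the Burer--Monteiro construction ($\rhohat\succeq\mat{0}$) together with $\trace(\rhohat)=\|\U\|_F^2=1$. Your remark about rescaling intermediate iterates also mirrors the paper's comment following the corollary.
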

We note that at convergence our method guarantees $\U$ to have unit norm, but for intermediate iterations this need not be the case, hence we use the variable $\U/\|\U\|_F$ when applying the corollary.
Figure \ref{fig:comparisons} in the numerical results section illustrates the tightness and usefulness of this bound.  Since the computation of the explicit matrix $\rhohat$ and the explicit matrix $\nabla \likelihood( \rhohat )$ can be expensive, as well as the eigenvalue computation, in practice we suggest only computing the error bound occasionally, such as every 20 iterations.

% % =====================================================================================
% % =====================================================================================
% % \subsubsection{\textbf{Gradient Calculation}}
% % \subsection{Efficient Function and Gradient Calculations} \label{sec:efficientCalculation}
% \subsection{Efficient Function and Gradient Calculations} \label{sec:efficientCalculation}
% \input{section_efficientGradient}  % Wed, May 14, putting this in a new file so easy to move around

\subsection{Complexity analysis}\label{sec:complexity}
Table \ref{tab:complexity} outlines the total memory requirements and flops for our proposed method using either Algo.~\ref{algo:L-BFGS} or Algo.~\ref{algo:AGD}, as well as that of the state-of-the-art method CG-APG for comparison.
%L-BFGS and CG-APG algorithms.
CG-APG operates in the full Hilbert space of dimension $d$, resulting in a memory requirement of $\mathcal{O}(d^2)=\mathcal{O}(4^n)$.
It is a two-part method, first running non-linear CG %(similar to L-BFGS)
on a factored version of the MLE but using $r=d$, then running a APG (a projected variant of AGD) but on the original convex MLE Eq.~\eqref{cvx-obj}. Since this second phase uses projection of the density matrix, it needs an eigenvalue decomposition at a cost of $\mathcal{O}(d^3)=\mathcal{O}(8^n)$ flops.
% Additionally, the eigenvalue decomposition in the APG phase incurs a computational cost of $\mathcal{O}(d^3)$ or $\mathcal{O}(8^n)$ flops.
While feasible for small quantum systems, these requirements become prohibitive as the system size increases. 
%As opposed to these, L-BFGS solver for BM-MLE requires $\mathcal{O}(hr2^n)$ memory and $\mathcal{O}(rmd\log d)$ FLOPS when using our low-memory implementation. 
In contrast, our method %requires only $\mathcal{O}(rmd\log d)$ per iteration, and
uses much less memory and computation.

\begin{table*}[ht]
\caption{
% use Acc-GD or AGD?
Memory and flop cost per iteration, showing leading order terms only. 
$h$ is the history size in L-BFGS; for AGD, $h=1$.
We list complexity in terms of both qubits $n$ and dimension $d=2^n$, and denote  the total number of POVM elements as $\nTotalPOVM=\nEachPOVM\nPOVM$.
%The number of qubits is denoted as $n$ and $d=2^n$, and the total number of POVM elements is $\nTotalPOVM=\nEachPOVM\nPOVM$.
The two groups of columns ($\nTotalPOVM=d^2$ vs $\nTotalPOVM\ll d^2$) refer to calculations as described in Section~\ref{sec:QMT} or Section~\ref{sec:lowmem}, resp.  
\label{tab:complexity}}
% \begin{ruledtabular}
\begin{tabular}{p{6.5cm}*{4}{p{1.8cm}}}
% \begin{tabular}{p{5.2cm}*{3}{p{2.8cm}}p{4.1cm}}
\toprule 
 &\multicolumn{2}{c}{QMT}&\multicolumn{2}{c}{low-memory}\\
 &\multicolumn{2}{c}{$\nTotalPOVM = \mathcal{O}(4^n)$ i.e., $\mathcal{O}(d^2)$}&\multicolumn{2}{c}{$\nTotalPOVM = \text{const} (\ll d^2)$}\\
 \cmidrule(r){2-3} \cmidrule(l){4-5}
 Method & memory & flops & memory & flops \\  %\hline
\midrule 
 %\hline 
 % CG-APG~\cite{PhysRevA.95.062336}&$\mathcal{O}(4^n)$&$\mathcal{O}(8^n)$ &$\mathcal{O}(4^n)$&$\mathcal{O}(8^n)$ \\
 % Proposed BM-MLE (Algo.~\ref{algo:L-BFGS} or Algo.~\ref{algo:AGD})&$\mathcal{O}(hr2^n)$&$\mathcal{O}(4^{n})$&$\mathcal{O}(hr2^n)$&$\mathcal{O}(r\tilde{m}n2^n)$\\
% 
 %  CG-APG~\cite{PhysRevA.95.062336}&$\mathcal{O}(4^n)$ / $\mathcal{O}(d^2)$ & $\mathcal{O}(8^n)$ / $\mathcal{O}(d^3)$  & $\mathcal{O}(4^n)$ / $\mathcal{O}(d^2)$ & $\mathcal{O}(8^n)$ / $\mathcal{O}(d^3)$ \\
 % Proposed BM-MLE (Algo.~\ref{algo:L-BFGS} or \ref{algo:AGD})&$\mathcal{O}(hr2^n)$ / $\mathcal{O}(hrd)$ &$\mathcal{O}(4^{n})$ / $\mathcal{O}(d^2)$ & $\mathcal{O}(hr2^n)$ / $\mathcal{O}(hrd)$ & $\mathcal{O}(r\tilde{m}n2^n)$ / $\mathcal{O}(r\tilde{m}d\log d)$\\
% 
\multirow{2}{6.5cm}{CG-APG~\cite{PhysRevA.95.062336}} 
   &$\mathcal{O}(4^n)$&$\mathcal{O}(8^n)$ &$\mathcal{O}(4^n)$&$\mathcal{O}(8^n)$ \\
   &$\mathcal{O}(d^2)$&$\mathcal{O}(d^3)$ &$\mathcal{O}(d^2)$&$\mathcal{O}(d^3)$ \\
   \cmidrule{2-5}
 \multirow{2}{6.5cm}{Proposed BM-MLE (Algo.~\ref{algo:L-BFGS} or \ref{algo:AGD})}&$\mathcal{O}(hr2^n)$&$\mathcal{O}(n 4^{n})$&$\mathcal{O}(hr2^n)$&$\mathcal{O}(r\nTotalPOVM n2^n)$\\
 &$\mathcal{O}(hrd)$ & $\mathcal{O}(d^2 \log d)$ & $\mathcal{O}(hrd)$ & $\mathcal{O}(r\nTotalPOVM d\log d)$\\
 \bottomrule
\end{tabular}
% \end{ruledtabular}
\end{table*}

% ========================================
% ========================================
\section{\label{sec:results}Numerical Results}

\subsection{Comparison to other methods}
For the purpose of analyzing the performance of the proposed algorithm, we compare with the following state-of-the-art alternatives:
\begin{description}
    \item[CG-APG] Conjugate-Gradient--Accelerated Projected-Gradient (CG-APG) \cite{PhysRevA.95.062336}.  The CG-APG algorithm starts with implementing a nonlinear conjugate gradient (CG) algorithm in the factored domain (with $r=d$) and switches to accelerated projected gradient (APG) when the Hessian of the objective function starts to settle down and operates directly on $\H^d$, i.e., non-factored domain.
    APG is a variant of accelerated gradient descent, Algo.~\ref{algo:AGD}, that incorporates projection steps; in this case, the projection is onto the set of positive semi-definite trace 1 Hermitian matrices.
    We use their QMT algorithm for all algorithms whenever possible. %; the only algorithm that cannot fully exploit the QMT is $B$-sample LB-SDA, and we 
    % We adopt their QMT algorithm to reduce the computational cost and speed across all the aforementioned algorithms except B-sample LB-SDA. We
    % will later discuss implications of that on its performance.
    \item[LB-SDA] $B$-sample Stochastic Dual Averaging with the Logarithmic Barrier (LB-SDA) \cite{pmlr-v238-tsai24a}. $B$-sample LB-SDA is a mini-batch stochastic first-order method, where $B$ denotes the mini-batch size.
    The mini-batch consists of uniform random selection from among all possible $\nTotalPOVM$ measurements $f_i$.
    %consisting of uniformly random
    %sample correspond to the number of measurements randomly chosen to estimate the gradient.
    It uses the proposition that the gradient of the MLE objective is bounded in the dual local norm associated with the logarithmic barrier, helping provide guarantees for convergence to an  $\epsilon$-optimal solution. From the results of \cite{pmlr-v238-tsai24a}, 1-sample LB-SDA and $d$-sample LB-SDA ($B=d$) and MD have good performance in terms of fidelity and optimization error, so we include them here for comparison purposes.
    $B$-sample LB-SDA is the only algorithm that cannot fully exploit the speed of the QMT, and we will discuss the consequences of this later.
    %We refer to 1-sample LB-SDA as simply ``LB-SDA''.
    \item[MEG] Low-rank Matrix Exponentiated Gradient (MEG) \cite{10.1287/moor.2022.1332}. Low-rank MEG performs a low-rank SVD-based update at each iteration. Under a strict complementarity condition and with a suitably chosen initialization, the method locally converges to the same solution as its full-rank counterpart by effectively exploiting the low-rank structure.
    \item[MD]  Mirror Descent (with Armijo line search) \cite{li2018general}. The MD algorithm with Armijo line search establishes convergence guarantees for the QST objective by relying on a local relative smoothness condition, which is necessary because the objective is neither globally smooth nor has a bounded gradient. %Its performance is compared against algorithms like 
    The simulations of \cite{li2018general} show that MD performs better than other applicable methods such as 
    diluted $\text{R}\rho \text{R}$ \cite{PhysRevA.75.042108}, SCOPT \cite{JMLR:v16:trandihn15a}, and the modified Frank-Wolfe \cite{odor2016frank}. %where it shows better performance than these algorithms. %Therefore, in our work, we include MD with Armijo line search as one of the algorithms for comparison purposes.
    \item[MGD] Momentum-Inspired Factored Gradient Descent \cite{photonics10020116}. As described in Section \ref{sec:related}, MGD is an accelerated version of the algorithm in \cite{kyrillidis2017provable} that uses compressed sensing along with the Burer-Monteiro factorization to promote a low-rank density matrix under Gaussian measurement error.
    \item[iMLE] Iterative MLE \cite{Lvovsky_2004}. iMLE is an iterative maximum-likelihood algorithm to reconstruct quantum optical states from homodyne tomography data and applies an Expectation-Maximization (EM) method to update the density matrix iteratively.
    \item[UGD] Unified (factored and projected) Gradient Descent. 
    In this approach, the density matrix is parameterized through a Cholesky-based factorization, and optimization is carried out using momentum-accelerated gradient descent \cite{PhysRevResearch.6.033034}. It does not provide explicit control over the rank of the reconstructed state. The paper \cite{PhysRevResearch.6.033034} mentions several methods and we use their $\text{Fac}_\text{H}$ parameterization and MRprop optimizer, as these were shown to be the most effective.
\end{description}

We refer to our method, i.e., solving Eq.~\eqref{eq:Lag-U}, as the BM-MLE approach, with two particular instantiations,  ``L-BFGS'' (Algo.~\ref{algo:L-BFGS}) and ``Acc-GD'' (Algo.~\ref{algo:AGD}).
The Python-based implementation of our work is available at \cite{RigorousMLE}. 
Third party algorithms that were not already written in Python were reimplemented in Python, running unit tests on subroutines to verify correctness as well as confirming convergence on small qubit problems.
% We converted most of these algorithms in Python, preserving the original logical flow of their implementations.
All parameters were set to their default values as specified in the original sources, and we used their default initializations; for our methods, we initialized our guess for $\U$ as a draw of a complex valued Gaussian vector that was then normalized. 
Some of the third party algorithms were tailored to specific POVMs, so we test one subset of algorithms on our Pauli POVM example, and another subset on the tetrahedral POVM example.

\subsection{Simulations with Pauli POVMs on the CPU}  \label{sec:CPU}
In the first set of simulations, we use POVMs constructed from Pauli operators to generate measurement. %These measurements yield probabilities corresponding to the outcomes one would observe when applying the Pauli operators to the unknown quantum state. \sout{For the first set of experiments, we use the POVMs derived from Pauli operators to obtain measurements representing the associated probability when these operators are measured on the unknown quantum state.} %An $n$ qubit system constitutes $4^n$ Pauli matrices and therefore, $2 \times 4^n$ POVMs corresponding to the positive and negative outcomes which are the expected value of these operators for the unknown state $\rho$. 
% The setup for the experiments is as follows.
We compare different algorithms by varying the number of qubits $n$ ranging from $2$ to $10$. 
The data are generated by measuring a quantum state consisting of the W-state with $10\%$ depolarizing noise.
%We create the original quantum state to be a W-state and introduce depolarizing noise equal to $0.1$.
We take the number of shots to be $N=100$ for each POVM, and use all $\nPOVM=d^2-1$ non-trivial POVMs. 
We compare our proposed BM-MLE framework, in particular both the ``L-BFGS'' (Algo.~\ref{algo:L-BFGS}) and ``Acc-GD'' (Algo.~\ref{algo:AGD}) instantiations, against relevant third party solvers.
We use the heuristic choice of $r=d/4$ to set the rank of our BM-MLE approach (and use the same $r$ for the low-rank MEG algorithm). This is likely a conservative choice and tuning $r$ would lead to better performance for our method. For L-BFGS, we use the the \texttt{scipy.optimize.minimize} implementation with method \texttt{L-BFGS-B} and memory size \texttt{maxcor} set to $2$.

% The proposed BM-MLE algorithm takes a rank parameter $r$ as an input that imposes the low-rank structure in the solution. We take this parameter to be $r=d/4$ based on a heuristic choice. In a similar way, low-rank MEG takes a rank parameter for performing the rank-$r$ SVD during estimation, with $r$ set to match the value used in BM-MLE. The value of this parameter along with the complementarity property determines the quality of the estimate. 

% ======= Pauli ==========

\begin{figure}[]
    \centering
    \begin{subfigure}[b]{0.45\textwidth}
        \centering
        \includegraphics[width=\textwidth,trim=0cm 0cm 1.6cm 1.4cm,clip]{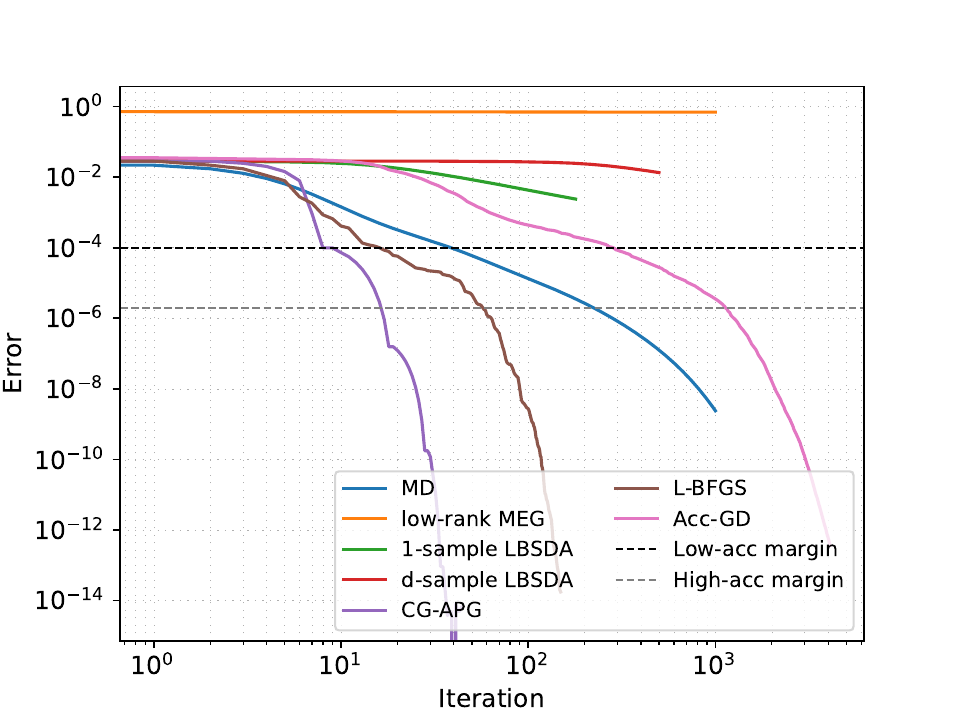}
        \caption{}
        % \caption{$n=4$}
        \label{fig:comparisons-Pauli-error_a}
    \end{subfigure}
    \hfill
    \begin{subfigure}[b]{0.45\textwidth}
        \centering
        \includegraphics[width=\textwidth,trim=0cm 0cm 1.6cm 1.4cm,clip]{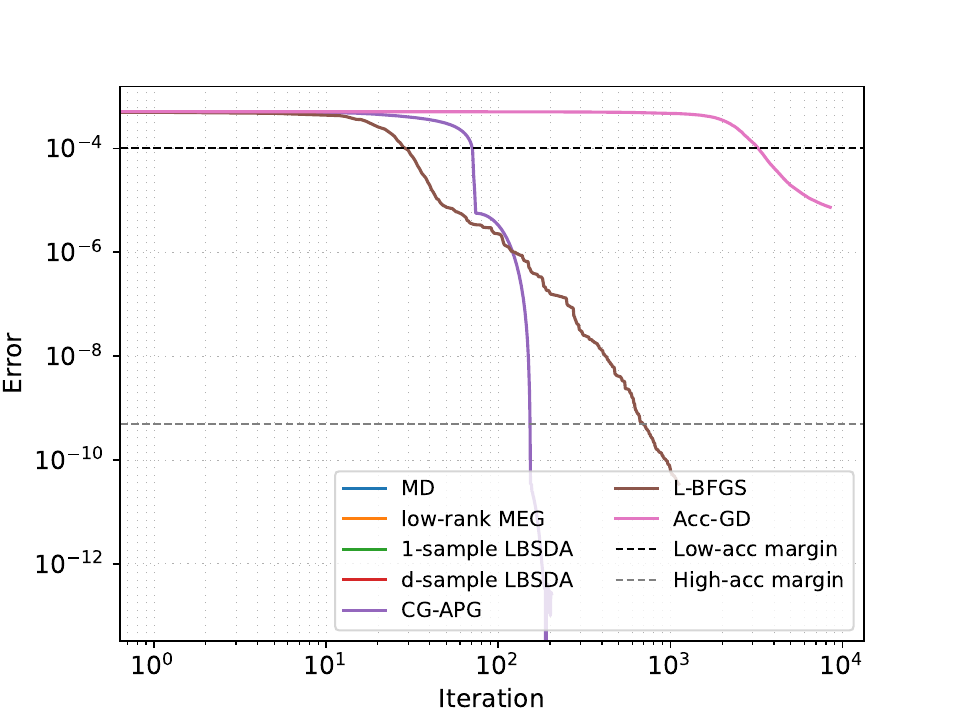}
        \caption{}
        %\caption{$n=10$}
        \label{fig:comparisons-Pauli-error_b}
    \end{subfigure}
    
    \caption{Error performance of different methods for complete Pauli POVM measurements under $10\%$ depolarizing noise on the $W$ state for \textemdash{} (a) $n=4$ and (b) $n=10$. Note that the number of iterations is not equivalent to runtime; the per-iteration runtimes are listed in Table~\ref{tab:timing}, where our algorithm demonstrates better performance than the others.}
    \label{fig:comparisons-Pauli-error}
\end{figure}

Figure~\ref{fig:comparisons-Pauli-error} shows error in the objective $\likelihood$ as a function of iteration. We do not know the true minimum of the likelihood, so we approximate it by using the smallest value observed in the simulation across all methods; we denote this value as $\likelihood_{\text{best}}$. We look at a small qubit example ($n=4$, left) and a larger example ($n=10$, right). For $n=4$ qubits, most solvers start to converge other than low-rank MEG, though $1$-sample LBSDA and $d$-sample LBSDA are slow.  For $n=10$ qubits, we only show the algorithms that make non-trivial improvement; these are CG-APG and our two methods.  

The CG-APG method is consistently faster than our method in the sense that it requires fewer iterations. However, it requires much more time per step.  Table~\ref{tab:timing} shows the average time per step of the algorithms, and we see that for $10$ qubits, CG-APG takes over $6\times$ longer per step than L-BFGS.

\begin{table*}[ht]
    \centering
    \newcommand{\NA}{\textcolor{gray}{\texttt{N/A}}}
    %\begin{tabular}{lcccc}
    %\begin{tabular}{p{5.2cm}*{4}{p{.8cm}}}
    \begin{tabular}{p{5.2cm}@{\hskip 0.6cm}p{.8cm}@{\hskip 0.6cm}p{.8cm}@{\hskip 0.6cm}p{.8cm}@{\hskip 0.6cm}p{.8cm}}
    \toprule
    & \multicolumn{4}{c}{number of qubits, $n$}  \\ 
     \cmidrule(){2-5}
     & $7$ & $8$ & $9$ & $10$ \\
     \midrule
    MD (Mirror Descent) &  $0.37$ & $3.64$ & \mbox{$>1$hr} & \mbox{$>1$hr} \\
    low-rank MEG & $0.41$ & $3.21$  & \NA & \NA \\
    $1$-sample LB-SDA & \NA & \NA & \NA & \NA \\
    $d$-sample LB-SDA & \NA & \NA & \NA & \NA \\
    CG-APG & $0.11$ & $3.04$ & $8.73$ & $26.43$\\
    L-BFGS: our proposed Algo.~\ref{algo:L-BFGS} & $0.07$ & $0.30$ & $1.12$ & $\hphantom{2}4.29$\\
    Acc-GD: our proposed Algo.~\ref{algo:AGD} & $0.03$ & $0.08$ & $0.24$ & $\hphantom{2}0.85$\\
    \bottomrule
    \end{tabular}
    \caption{Average time per iteration (in seconds) for the Pauli POVM CPU simulation. \NA\ denotes that the algorithm was not run for this setting because of slow convergence.
    }
    \label{tab:timing}
\end{table*}

Since $n \lesssim 10$ is relatively small, we use the QMT whenever applicable, as it is fast and does not require explicit formation of the POVMs.
% The QMT algorithm \cite{PhysRevA.95.062336} allows us to not form the POVMs explicitly and store them. 
% Though, s
Since the B-sample LB-SDA is a stochastic mini-batch gradient algorithm that requires selection of random operators in each iteration to approximate the gradient, we form the POVMs explicitly in our code for $n \leq 6$ since computing the full QMT is wasteful. However, that means it does not exploit the cost savings of the QMT and it runs slowly---basically, there is little advantage to subsampling when using the QMT. For this reason, the algorithm runs slowly and 
% Due to this reason B-sample LB-SDA is the slowest algorithm and
fails to attain low optimization error in time comparable to other algorithms. Similar observations can be made from the original paper \cite{pmlr-v238-tsai24a}, where these algorithms took more than $10^4$ seconds to attain error value $10^{-4}$ for $n=6$ which also confirms the slow convergence. The low-rank MEG fails to converge, which could be for a few reasons. Firstly, the algorithm requires the objective function to be Lipschitz continuous and smooth and the QST objective does not meet this requirement \cite{li2018general}. Secondly, the strict complementarity condition may not be met. Lastly, the value of rank parameter for SVD also determines the time to convergence where overestimating the true rank slows down the convergence to the desired error margin.

To make systematic comparisons as a function of the number of qubits, we look at the time it takes to reach a given error threshold, excluding algorithms that take longer than $10^3$ seconds. We use two thresholds, a low-accuracy threshold which is achieved when a solver has a negative log-likelihood within $10^{-4}$ of $\likelihood_{\text{best}}$ (i.e., absolute error), and a high-accuracy threshold which is achieved when the the solver has a likelihood ratio (in comparison to the best solution) above $95\%$ (i.e., relative error). These thresholds correspond to the upper and lower horizontal lines, respectively, shown in both plots of Figure~\ref{fig:comparisons-Pauli-error}.

These results are shown in Fig.~\ref{fig:comparison-Pauli}. For lower accuracy results, shown in the left plot, our proposed method using L-BFGS is the clear winner, running an order-of-magnitude faster than the nearest solver at $n=10$ qubits. Our other method, Acc-GD, as well as CG-APG, do well.  The next best method is MD, which shows a very steep slope as a function of $n$, and is not able to run for $n \ge 9$ within the time budget.  All other solvers do not converge to the desired accuracy within the time budget, so they are not shown.

The higher accuracy setting shows that our method L-BFGS and the method CG-APG are approximately equally good. CG-APG converges more rapidly, which is useful for high-accuracy, but this is offset by the larger cost of the iterations.  We note that we chose $r \propto d$ for these simulations, which prevents our method from having a large advantage: because $r<d$, we cannot achieve extremely high accuracy (since the true solution is unlikely to be low-rank, due to the depolarizing noise); while on the other hand, because $r$ does grow with $d$, we do not see extreme speed benefits.  
We will see more of the benefit of our method in Section~\ref{sec:lowmemory} where we choose $d$ very large and keep $r$ small.

\begin{figure}[ht]
    \centering
    
    \begin{subfigure}[b]{0.45\textwidth}
        \centering
        \includegraphics[width=\textwidth,trim=0cm 0cm 1.6cm 1.4cm,clip]{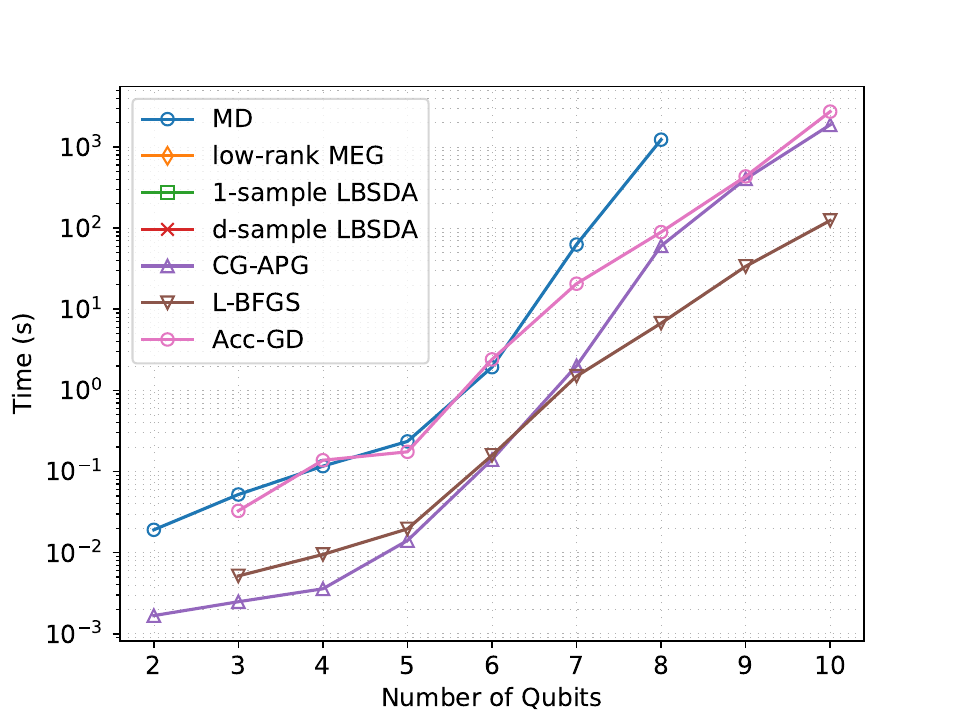}
        \caption{}
        % \caption{Low-accuracy regime}
        \label{low-acc}
    \end{subfigure}
    \hfill
    \begin{subfigure}[b]{0.45\textwidth}
        \centering
        \includegraphics[width=\textwidth,trim=0cm 0cm 1.6cm 1.4cm,clip]{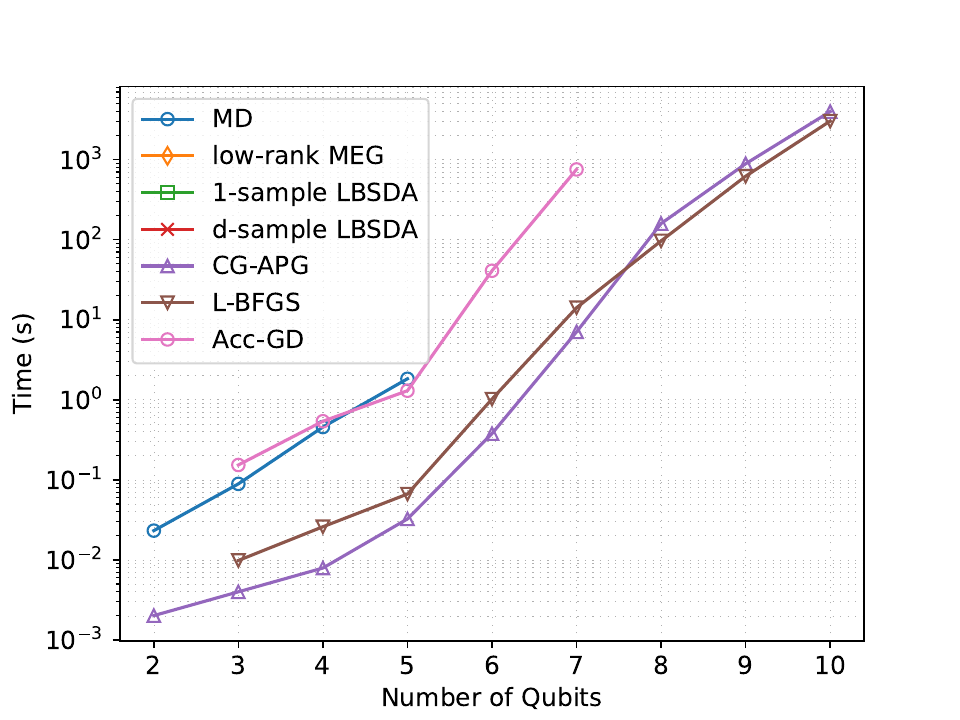}
        \caption{}
        % \caption{High-accuracy regime}
        \label{high-acc}
    \end{subfigure}
    
    \caption{Comparison of different methods in terms of time taken to reach the two accuracy regimes \textemdash{} (a) Low-accuracy regime and (b) High-accuracy regime, under complete Pauli POVM measurements and $10\%$ depolarizing noise on the $W$ state.
    ``L-BFGS'' and ``Acc-GD'' are our proposed methods, Algo.~\ref{algo:L-BFGS} and Algo.~\ref{algo:AGD}, resp. % the actual instantiations of BM-MLE
    Algorithms ``LBSDA'' and ``MEG'' never converge to the required accuracy, so they are not shown.
    }
    \label{fig:comparison-Pauli}
\end{figure}

%On observing Fig. \ref{low-acc} carefully, we find that for the particular problem QST, only 2 past gradient histories are sufficient to get faster convergence compared to CG-APG. A history size of $5$ also gives very similar performance to history size of $2$, but has the disadvantage of requiring more memory to store the additional gradient and higher computational cost per iteration. Moreover, we find that our proposed algorithm also converges slightly faster in the high accuracy regime compared to CG-APG when history size equals $2$ and almost comparably when history size equals $5$. From Fig. \ref{low-acc}, \ref{high-acc}, we can reason why history size of $1$ and $10$ do not perform that well. The history size of $1$ is faster but not accurate when it comes to Hessian approximation and the size of $10$ might be more accurate for Hessian but slower due to more computations required per iteration. Overall, these results show the benefit of using the Burer-Monteiro approach for higher number of qubits and as observed from the simulations, can provide a good alternate formulation of original QST problem when a slightly less accurate solution is acceptable. 

% ======= Duality gap ==========
\begin{figure}[ht]
\centering
\includegraphics[width=0.45\textwidth,trim=0cm 0cm 1.6cm 1.4cm,clip]{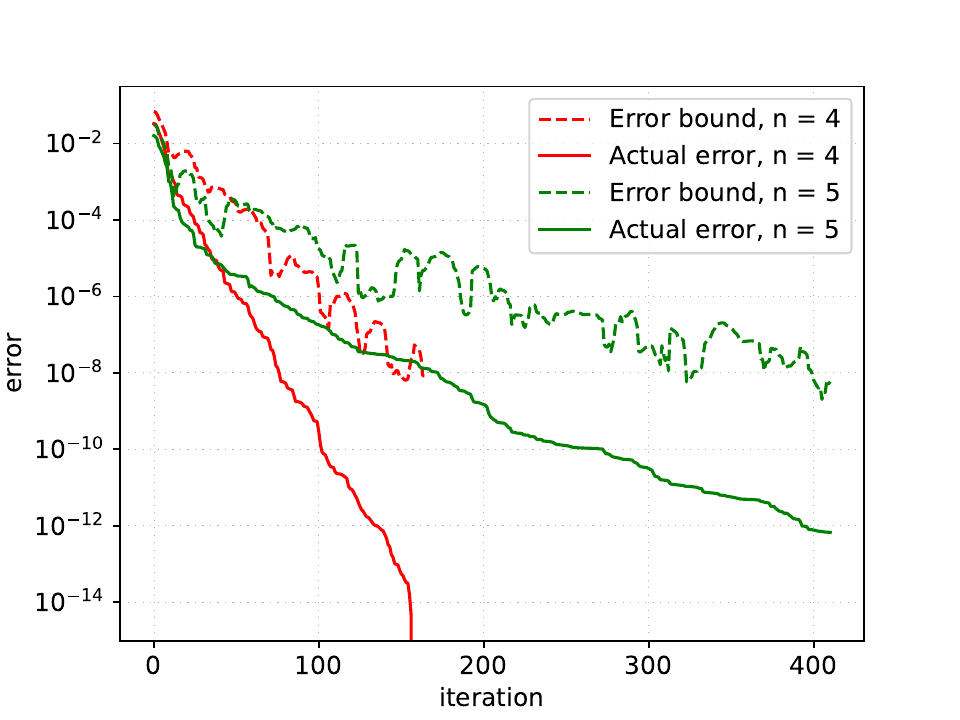}
\caption{Error bound curve for our method, computed using Corollary~\ref{corollary:errorbound}. The dotted and the solid lines represent the error bound and the optimization error, respectively, for example cases of $n= 4, 5$.}                           
\label{fig:comparisons}
\end{figure}

In Figure~\ref{fig:comparisons}, we plot the duality gap bound computed from Corollary~\ref{corollary:errorbound} using L-BFGS and the true optimization error 
$\likelihood(\U\U^\adjoint) - \likelihood^\star$
together to confirm the results of the Corollary and demonstrate that it is reasonably tight.
% show that the duality gap truly upper bounds the optimization error.
The $\likelihood^\star$ value is determined by optimizing the MLE objective using a trusted third-party convex optimization software CVXPY \cite{diamond2016cvxpy, agrawal2018rewriting} using the \texttt{SCS} solver and tolerance set to $10^{-10}$. Since CVXPY can only handle optimizing small quantum systems, we verify our error bound only for $n\le 5$. The trend of the curve for optimization error indicates that the error decreases more or less exponentially with each iteration. The plots indicate that the proposed algorithm converges to a near optimal solution in function value and that our theoretical upper bound on the error is tight enough to be useful.

% ========================================
\subsection{Simulations with tetrahedral POVM with GPU acceleration} The next set of simulations compares our proposed method with the methods in \cite{Lvovsky_2004, photonics10020116, PhysRevResearch.6.033034, PhysRevA.95.062336}. To enable comparison with a broader range of QST algorithms, including few more recent ones, we adapt our simulation setup to align with that of \cite{PhysRevResearch.6.033034}. This also allows us to evaluate the performance of our algorithm under a different measurement setting, extending beyond Pauli POVMs. In Fig. \ref{fig:comparison-tetra4}, we present a performance comparison between our BM-MLE method and other state-of-the-art approaches, focusing specifically on GPU-accelerated execution using \texttt{PyTorch}. For this simulation, the target state to be recovered was a $W$-state under two scenarios---pure state and with $10\%$ depolarizing noise. The measurements for the tomographic procedure were obtained using tetrahedral POVM with $N=100$ shots per POVM. The rank parameter $r$ for $\U$ was heuristically set as $d/4$. 
Since our CPU simulations showed that L-BFGS was faster than Acc-GD, we now focus exclusively on our L-BFGS instantiation. 
We use the \texttt{PyTorch} routine \texttt{torch.optim.LBFGS} for the code, and a memory size of $10$.
All solvers use the QMT for efficiency (implemented on the GPU), since we only test to $n=11$ qubits. We use the QMT implementation from \cite{PhysRevResearch.6.033034}, available at \url{https://github.com/foxwy/QST-UGD}. 
Going much larger than $11$ qubits results in memory issues, especially for GPUs~\footnote{
Specifically, we use the \texttt{qmt\_torch} function from their repository, 
% \protect\url{https://github.com/foxwy/QST-UGD/blob/main/Basis/Basic\_Function.py}
%that was developed with version 2.0.1 of PyTorch, and
which automatically switches from GPU to CPU when $n\ge 12$ because ``torch does not support more dimensional operations,'' though the details of this limitation are perhaps due to memory limitations rather than dimension limits.}.

% May 8, 2023

% The code at 
% The implementation cannot go beyond $12$ qubits due to 

% ======= Tetrahedral ==========

\begin{figure}[ht]
    \centering
    
    \begin{subfigure}[b]{0.455\textwidth}
        \centering
        \includegraphics[width=\textwidth,trim=0cm 0cm 1.4cm 1.4cm,clip]{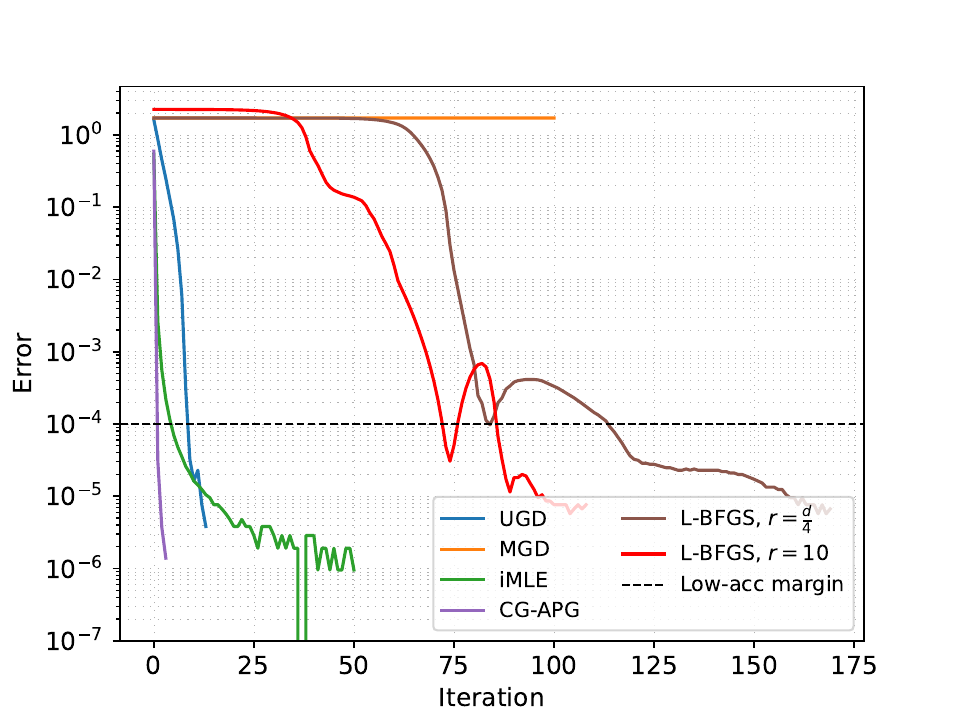}
        \caption{}
        % \caption{No depolarizing noise.}
        \label{fig:tet_errora}
    \end{subfigure}
    \hfill
    \begin{subfigure}[b]{0.45\textwidth}
        \centering
        \includegraphics[width=\textwidth,trim=0cm 0cm 1.6cm 1.4cm,clip]{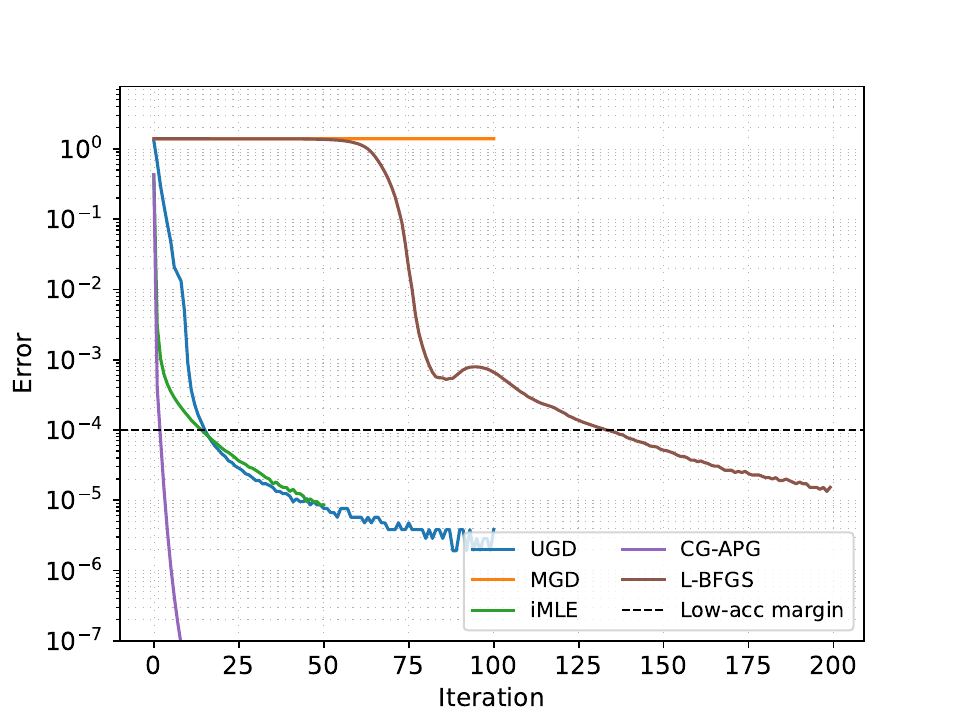}
        \caption{}
        % \caption{$10\%$ depolarizing noise.}
        \label{fig:tet_errorb}
    \end{subfigure}
    
    \caption{Error performance of different methods for tetrahedral POVM measurements on $W$-state and $n=11$ qubits under \textemdash{} (a) no depolarizing noise and (b) $10\%$ depolarizing noise. Again, note that the number of iterations is not equivalent to the per-iteration runtime; refer to Table~\ref{tab:timing}.}
    \label{fig:tet_error}
\end{figure}

Figure~\ref{fig:tet_error} shows the error per iteration for $n=11$, calculated using the same method as in Section~\ref{sec:CPU}. The case without depolarizing noise (Fig.~\ref{fig:tet_errora}) is easy for all solvers other than MGD, and since we reach the error threshold in just a handful of iterations, it is unclear if we can extrapolate this behavior to other settings; also note that while UGD converges in just a few iterations, these iterations are very expensive. Fig.~\ref{fig:tet_errorb} shows the more realistic case with some depolarizing noise, and now all solvers take longer to converge.  As in the Pauli POVM simulation, CG-APG continues to perform very well.  UGD and iMLE are the next best, and L-BFGS takes more iterations (though much less time per iteration).

\begin{figure}[ht]
    \centering
    
    \begin{subfigure}[b]{0.45\textwidth}
        \centering
        \includegraphics[width=\textwidth,trim=0cm 0cm 1.6cm 1.4cm,clip]{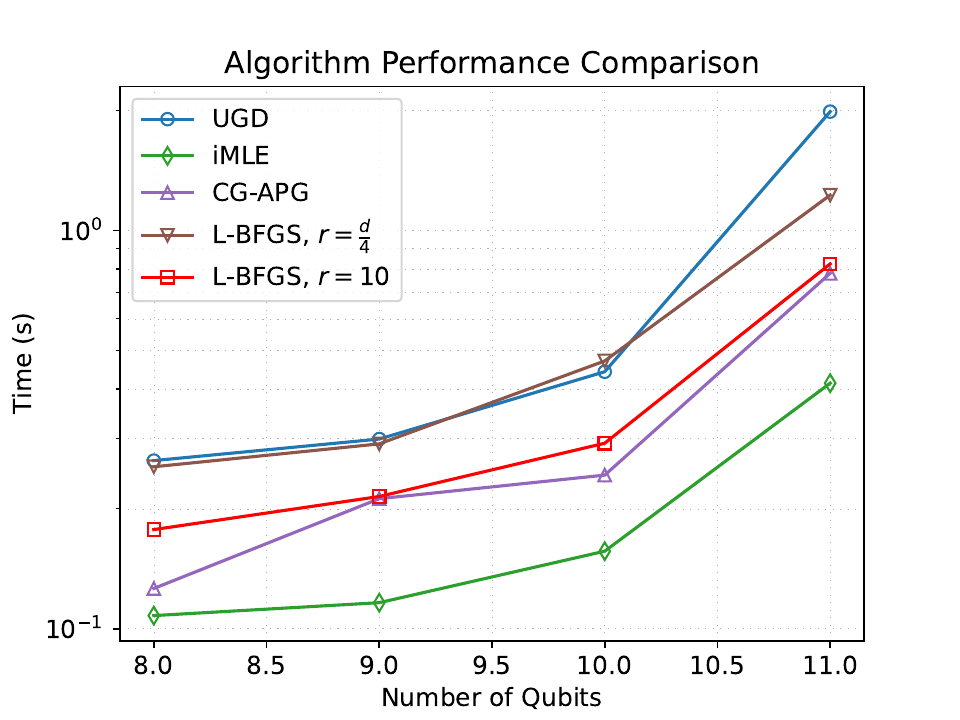}
        % \caption{Low-accuracy regime for no depolarizing noise.}
        \caption{}
        \label{fig:pure}
    \end{subfigure}
    \hfill
    \begin{subfigure}[b]{0.45\textwidth}
        \centering
        \includegraphics[width=\textwidth,trim=0cm 0cm 1.6cm 1.4cm,clip]{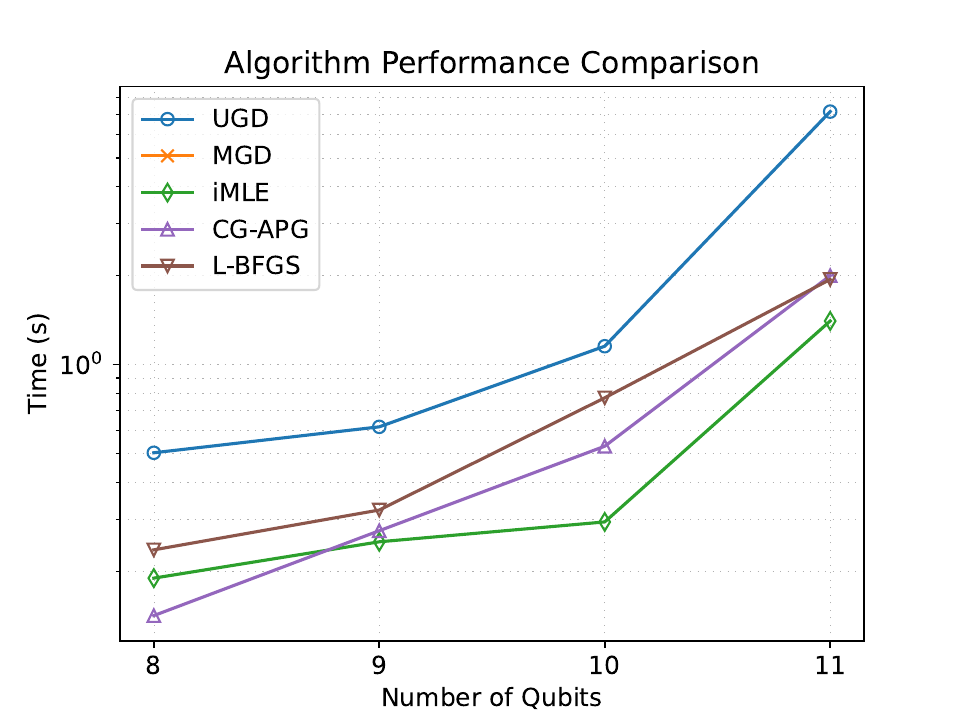}
        \caption{}
        % \caption{Low-accuracy regime for $10\%$ depolarizing noise.}
        \label{fig:noise}
    \end{subfigure}
    
    \caption{Comparison of different methods in terms of time taken to reach the low accuracy regime for tetrahedral POVM measurements on $W$-state under \textemdash{} (a) no depolarizing noise and (b) $10\%$ depolarizing noise.}
    \label{fig:comparison-tetra4}
\end{figure}

As in Section~\ref{sec:CPU}, we also record the time until we hit an error tolerance in order to display performance as a function of the number of qubits. The results are shown in Figure~\ref{fig:comparison-tetra4}.  For the case without depolarizing noise (Fig.~\ref{fig:pure}), all solvers (except MGD) reach the desired accuracy within about 1 second, so they are all adequate and it is hard to speculate about relative performance since the time scales are so short.  Fig.~\ref{fig:noise} shows the case with depolarizing noise. The iMLE method is the fastest, but also shows the greatest slowdown when moving from $n=10$ to $n=11$. The CG-APG method and our L-BFGS method are both equally fast for $n=11$, though our L-BFGS method shows a shallower slope, indicating that it is less sensitive to the size of $n$.

This simulation emphasizes that our method benefits from the computational advantages of parallel processing on the GPU and performs comparably to other methods for moderate qubit systems.
In this regime, computing the objective function is often the most expensive step, so since all methods used the QMT for this step, it is not surprising that all methods (except MGD) perform roughly equivalently.
Again, we note that choosing $r = d/4$ handicaps our methods slightly, since our theoretical advantage is obtained when $r$ is constant or sublinear in $d$. 
As expected, the choice $r=1$ indeed yields better performance in the noiseless setting. However, we emphasize that our primary motivation for including the case $r=d/4$ was to present a consistent benchmark across both noisy and noiseless regimes, since in the presence of noise, the effective rank is typically larger than one. This allows for a fairer comparison between methods and demonstrates the robustness of our approach across different scenarios.

% The main advantages of our method emerge while considering the low-memory version of BM-MLE for even larger systems. 
% TODO, mention r propto d issue again, and QMT... 

% ========================================
\subsection{Low-memory BM-MLE on large systems} \label{sec:lowmemory}

\subsubsection{Sampling a subset of POVMs}\label{sec:sample-Pauli}
For large systems, one cannot afford to take $d^2$ measurements, so we subsample from the set of Pauli POVMs.  Uniform subsampling is suboptimal, so we turn to the weighted subsampling scheme introduced in \cite{PhysRevLett.106.230501} for the purposes of ``direct fidelity estimation'' (DFE). They show that it is possible to estimate fidelity $F(\rho, \sigma)$ up to a constant additive error $\epsilon$ and failure probability $\delta$ by only measuring a constant number of Pauli observables for \emph{well-conditioned} pure target state $\rho$ including states of interest in quantum information such as GHZ and W states, stabilizer states, matrix product states, etc. This subset of Pauli observables is chosen according to a nonuniform probability distribution given by $\trace(W_i\rho)^2/d$.  The chosen measurements are either sampled from this distribution (without replacement) until the measurement budget is reached, or a common variant is to sort the probabilities and take the top largest ones.

When $n\gtrsim 18$,  computing the nonuniform probability distribution is infeasible, and one must sample from it implicitly. Details on how to do this are in Appendix~\ref{appendix:lowmem}.

\subsubsection{Simulations} The previous simulations showed that the L-BFGS instantiation of our algorithm was competitive with state-of-the-art methods (the best of which was CG-APG) on moderately sized problems up to $n=11$ qubits using $r=d/4$. Our final set of simulations is designed to showcase the regime where our method excels---and indeed there is no comparison to other methods, since none of them can handle this scenario due to memory constraints. We consider our low-memory version (i.e., using the techniques in Section~\ref{sec:lowmem} instead of the QMT), applied to 
% In the final set of experiments, we demonstrate the successful application of our low-memory version of BM-MLE applied to an 
$18$ and $20$-qubit systems, for a fixed rank $r=1$.

%In this experiment, we try to verify how close the reconstructed state is to the known physical state. 

In more detail, 
we consider a random product state (i.e., separable) as the physical state $\rho_0$ and measurements are derived using a subset of Pauli POVMs that do not form an informationally complete set. We make the above simulation choice for the reasons discussed in the beginning of this section.
%, under noisy and noiseless settings. 
It is to be noted that our low-memory algorithm applied to large-systems does \emph{not} make use of the separability of the state. The choice of a separable state is made only to simplify the computation of the probability distribution for subsampling POVMs, as used in Appendix~\ref{appendix:lowmem}. We let $N=\infty$ so that we have perfect measurements on those POVMs that we do measure.
Because this is not an informationally complete set of measurements, there is no unique solution $\rho^\star$, but since we take exact measurements, Proposition~\ref{prop:1} guarantees that the physical state $\rho_0$ is one of the global MLE solutions, hence we know $\likelihood^\star = \likelihood(\rho_0)$.

We consider two cases: using the target state $\rho_0$ directly or adding $10\%$ depolarizing noise. 
When we use $\rho_0$ directly, since it is a product state, the probability  $\trace(A_i \rho_0)$  factorizes as the product of single-qubit probabilities and more details are provided in Appendix ~\ref{appendix:lowmem}. %as explained earlier in Section~\ref{sec:sample-Pauli}. 
In implementation, we can sample operators for each qubit independently according to its distribution. We sample $\nPOVM=2559$ operators in total using the formula $\nPOVM=\lceil \log(1/\delta)/ \epsilon^{2} \rceil$ with $\delta=0.1$ and  $\epsilon=0.03$.
This formula is motivated by the \emph{direct fidelity estimation} method of  \cite{PhysRevLett.106.230501}
where  $\delta$ is a failure probability and $\epsilon$ is an additive error. Since we are doing tomography and not directly estimating fidelity, and not using their post-processing, their error bound does not apply, but we find it is a useful rule-of-thumb.
When we consider the case with depolarizing noise, the probability does not factor since the noisy state is not a product state, but for efficiency we sample from the same factored formula as in the first case. This means that the probability distribution is not exactly equivalent to the one from \cite{PhysRevLett.106.230501}, but it still gives a useful subset of samples.

We do not consider our Acc-GD instantiation since the L-BFGS version has shown to be faster. We use the \texttt{scipy.optimize} version of L-BFGS with a history size of $10$. % and initializing our guess for $\uu$ as a complex standard normal vector.
% We use the L-BFGS optimizer initialized with a vector whose entries are drawn from the standard normal distribution and the history size was set to 10.

% Although this analysis is specific to the setup of \cite{PhysRevLett.106.230501}, we found that choosing $M$ in this manner works well for our own setup too.

\begin{figure}[ht]
    \centering
    
    \begin{subfigure}[b]{0.45\textwidth}
        \centering
        \includegraphics[width=\textwidth,trim=0.2cm 0.3cm 0.2cm 0.2cm,clip]{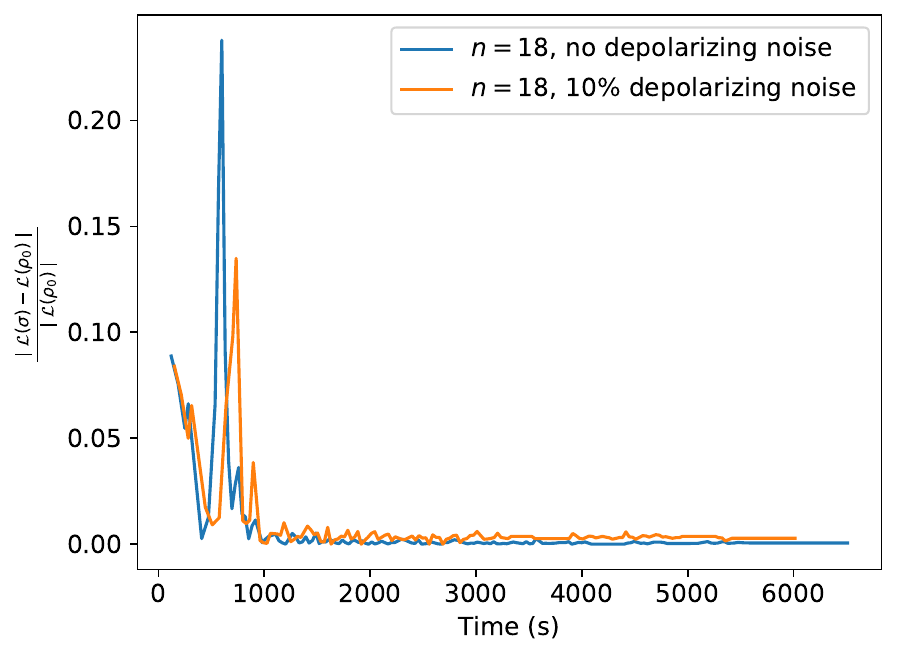}
        % \caption{$n=18$}
        \caption{}
        \label{fig:low-mem-18}
    \end{subfigure}
    \hfill
    \begin{subfigure}[b]{0.45\textwidth}
        \centering
        \includegraphics[width=\textwidth,trim=0.2cm 0.3cm 0.2cm 0.2cm,clip]{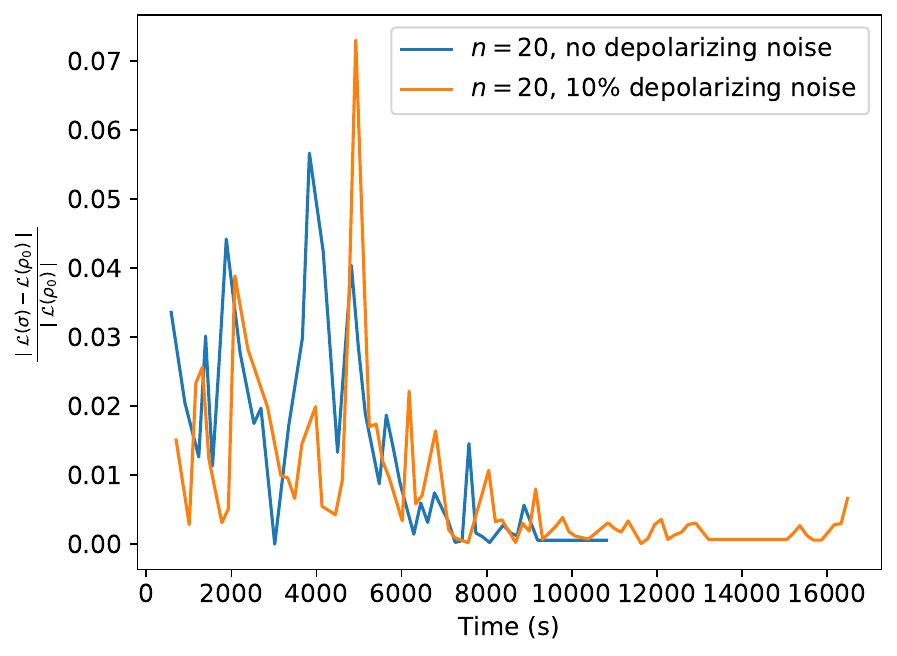}
        % \caption{$n=20$}
        \caption{}
        \label{fig:low-mem-20}
    \end{subfigure}
    \caption{Likelihood error curves obtained with low-memory BM-MLE for \textemdash{} (a) $18$ and (b) $20$-qubit systems with random product state as target.}
    \label{fig:lowmem}
\end{figure}

% Fig. \ref{fig:low-mem-18} shows the relative error in the value of likelihood function, plotted against elapsed time, for both the no-noise and $10\%$ depolarizing noise cases. We observe that the error converges in about $\sim 1.5$ hours and roughly decreases by $2.3$ orders of magnitude with a minimum error of $7.71\times 10^{-6}$. Whereas, the error decreases approximately by $1.5$ orders of magnitude in the presence of depolarizing noise with a minimum error of $9.03 \times 10^{-6}$. For the case of $20$-qubit in Fig. \ref{fig:low-mem-20}, the decrease in error is slightly lower compared to $18$-qubit with a minimum of $2.94\times 10^{-6}$, and the error fails to converge in the presence of depolarizing noise but the minimum attained is $9.86\times 10^{-5}$. 

The results are shown in Figure~\ref{fig:lowmem} and Table~\ref{tab:lowmem}.
The table shows the relative error in the negative log-likelihood of the best iterate the algorithm produced, $\rho_{\text{best}}$. 
This scenario is considerably challenging and takes on the order of a few hours to run on a contemporary laptop (we did CPU only, no GPU), but does demonstrate that tomography up to $20$ qubits is feasible.  We only look at the optimization error in the tomography; statistical accuracy will improve as we take more measurements, and the cost of running our algorithm is directly proportional to the number of measurements, cf.~the right column of Table~\ref{tab:complexity}.  To give some context, in the 18 qubit case with no depolarizing noise, our algorithm gives an estimate with a negative log-likelihood that is over $2$ orders of magnitude better than the value at the random initialization.

\begin{table*}[ht]
    \centering
    \begin{tabular}{p{2.2cm}*{4}{p{1.9cm}}}
    \toprule
     & \multicolumn{2}{c}{no depolarizing noise} & \multicolumn{2}{c}{$10\%$ depolarizing noise} \\ 
     \cmidrule(r){2-3} \cmidrule(l){4-5}
    qubits & rel.~error & time & rel.~error & time \\
    \midrule 
     18    & $7.71\times 10^{-6}$ & $1.8$ hours & $9.03 \times 10^{-6}$ & $1.7$ hours \\
      20   & $2.94\times 10^{-6}$ & $3.1$ hours & $9.86\times 10^{-5}$ & $4.6$ hours\\
      \bottomrule
    \end{tabular}
    \caption{Relative error of the best $\rho$ found by the algorithm, $\frac{\likelihood(\rho_{\text{best}}) - \likelihood^\star}{\likelihood^\star}$, as well as wall clock time for the algorithm. }
    \label{tab:lowmem}
\end{table*}

\section{Conclusion and future work}
We have described an unconstrained, non-convex reformulation for solving the problem of rank-controlled quantum state tomography using maximum likelihood estimation. We showed that with Burer-Monteiro factorization, the convex MLE becomes non-convex but this has little practical downside, and the advantage is that we are able to make the problem unconstrained. % with no optimization parameters to be set.
The rank $r$ of the factor $\U$ is directly proportional to the runtime and memory usage of the algorithm, and introduces little bias as long as the true MLE solution is well-approximated by a rank $r$ state.

Our numerical simulations show that our method is comparable with state-of-the-art methods when there are no memory limitations, i.e., up to about $11$ qubits.  For larger systems, alternative methods have memory issues. Through careful implementation and algorithm design, we are able to run our method up to $20$ qubits, achieving a relative error under $10^{-4}$ within a few hours of computation time.
% which when set close the true rank, yields good performance results. With our low-memory BM-MLE, we go beyond the results demonstrated in the related literature. We showed its performance on $18$ and $20$-qubit systems where both the relative error and the runtime of the algorithm are within accepted limits.
While reconstruction of $20$ qubit states has been done before, it relied on the MPS structure~\cite{PRXQuantum.4.040345}. Our method also requires a structural assumption (that the matrix is well-approximated by a low-rank state) but our structural assumption is in some ways milder than the MPS structure.

% A possible direction for future work can be extending low-memory implementation to more general, non-product states and to other POVM measurements.

Interesting future research can take both theoretical and practical directions. In our numerical simulations, the non-convexity of our formulation has been benign, and existing generic Burer-Monteiro theory explains some of this, but it would be interesting to specialize that theory to our particular negative log-likelihood objective and obtain stronger guarantees. 

Practically speaking, there are at least two interesting avenues of work. The first would be making algorithmic improvements, such as improving on L-BFGS, or deriving a cheap and effective line search, etc.  
The second avenue would be to target variants of MLE, motivated by uncertainty quantification (UQ).  Maximum likelihood estimates themselves are not especially useful unless they have an accompanying credible estimate of uncertainty. This uncertainty is often approximated via Monte Carlo methods, such as bootstrap resampling. These techniques require solving very similar MLE problems thousands of times, so a natural question is whether one can design an algorithm that speeds up this process (but without introducing statistical bias).

Another variant of MLE is profile likelihood estimation. As the number of qubits increases, it is not realistic to take informationally complete measurements, so the statistical problem becomes ill-posed, and the task of solving the MLE efficiently is almost irrelevant.  One interesting work-around is to target a specific metric (say, the fidelity with respect to a fixed state), and then to take the set of all estimates that have a sufficiently high likelihood, and take the minimum and maximum of this metric over that set. This gives something akin to a confidence interval for the estimate of the metric. This technique has been used in quantum tomography at smaller scales~\cite{seshadri2024theory}, and the generic statistical technique, which more generally integrates over an irrelevant nuisance parameter, is known as profile likelihood.

\appendix
% Specify following sections are appendices. Use \appendix* if there
% only one appendix.
%\appendix
%\section{}
\section{}\label{appendix:qmt}

The pseudocodes for the sub-routines of the QMT algorithm \cite{PhysRevA.95.062336} described in Section~\ref{sec:QMT} are outlined in Algo.~\ref{alg:shuffle-forward} and Algo.~\ref{alg:QMT} below.
\SetKwComment{Comment}{// }{}
\begin{algorithm}[H]
\DontPrintSemicolon
\caption{\textsc{ShuffleForward}~\cite{PhysRevA.95.062336} — Reshape and interleave tensor axes}
\label{alg:shuffle-forward}
\KwIn{$ \rho \in \mathbb{C}^{d \times d} ,  \vec{d} = (d_1, \dots, d_n)$ \tcp{$d_i = 2$ for qubits}}

    \( n \gets \text{length}(\vec{d}) \) \tcp*{Number of subsystems}
    Reshape \( \rho \in \mathbb{C}^{d \times d} \) to shape \( (d_1, \dots, d_n, d_1, \dots, d_n) \)\;
    Define axes ordering: \\
    \hspace{2em} \( \pi \gets (1, n+1, 2, n+2, \dots, n, 2n) \)\;
    Permute \( \rho \) using ordering \( \pi \)\;
    %\Return \( \rho \)\;

\kwReturn{$\rho \in \mathbb{C}^{d_1 \times d_1 \times  d_2 \times d_2 \dots \times d_n \times d_n} $ }
\end{algorithm}

\SetKwComment{Comment}{// }{}
% or, comments like \tcp{TBD} or \tcc{TBD} for \\ and \* ... * style

\begin{algorithm}[H]
\DontPrintSemicolon
\caption{Quantum Measurement Transform (QMT) \cite{PhysRevA.95.062336, PhysRevResearch.6.033034} for Pauli POVMs}\label{alg:QMT}
\kwInput{$\rho \in \mathbb{C}^{d \times d}$ \tcp{$n=\log_2(d)$ qubits}}

$\mat{P} = [\vec{I}, \vec{\sigma_x}, \vec{\sigma_y},  \vec{\sigma_z}] \in\C^{4 \times 4}$\;

$x \gets \texttt{\textsc{ShuffleForward}}(\rho)$\;
\For {$i = 1 \dots n $}{
Reshape: $\vec{x}_{\scriptstyle K \times 2^{n-2}}$ \tcp{$K = 4$} 
$\vec{x} \gets \mat{P}^\adjoint \vec{x}$\;
$\vec{x} \gets \vec{x}^\top$\;
}
% \kwReturn{$\vec{x}\in\R^{2^{n-2}\times K}$ where $x_i= \trace(W_i \rho)$  \tcp{$W_i =  \sigma_{i_1} \otimes \sigma_{i_2} \otimes \cdots \otimes \sigma_{i_n}, \base_{4}\footnote{base-4 representation.}(i) = (i_1, \dots, i_n) $} }
\kwReturn{$\vec{x}\in\R^{2^{n-2}\times K}$ where $x_i= \trace(W_i \rho)$  \tcp{$W_i =  \sigma_{i_1} \otimes \sigma_{i_2} \otimes \cdots \otimes \sigma_{i_n}, \base_{4}(i) = (i_1, \dots, i_n) $} }
\end{algorithm}

\section{}
% MISC STUFF

\subsection{Vectorization of Hermitian matrices} \label{sec:vectorization}
% On the real vector space of 
% $\H^d$
% %$\C^{d\times d}$ 
% we use the standard inner product $\langle \rho , \sigma \>_\H = \Re[ \trace(\rho^\top \sigma) ]$ where $\Re$ indicates taking the real part (though if $\rho,\sigma\succeq \mat{0}$ then $\trace(\rho^\top \sigma)\in\R$). This is isometrically isomorphic to the vector space $\R^{d^2}$ with the usual Euclidean dot product under the mapping $\hvec: \H^d \to \R^{d^2}$ which vectorizes (in any fixed manner) the $d$ real entries on the diagonal along with $\sqrt{2}$ times the $d(d-1)/2$ real and $d(d-1)/2$ imaginary entries from the upper triangular portion of the input matrix; we define $\hmat=\hvec^{-1}$ which reshapes the real vector back into the Hermitian matrix. This vectorization is useful for deriving certain results as well as for numerical implementation of algorithms and has little consequential effect on the math since it is a unitary linear transformation.

On the real vector space of complex Hermitian $d\times d$ matrices 
$\H^d$
%$\C^{d\times d}$ 
we use the standard inner product $\langle \rho , \sigma \>_\H = \Re[ \trace(\rho^\top \sigma) ]$ where $\Re$ indicates taking the real part (though if $\rho,\sigma\succeq \mat{0}$ then $\trace(\rho^\top \sigma)\in\R$). This is isometrically isomorphic to the vector space $\R^{d^2}$ with the usual Euclidean dot product under the mapping $\hvec: \H^d \to \R^{d^2}$ which vectorizes (in any fixed manner) the $d$ real entries on the diagonal along with $\sqrt{2}$ times the $d(d-1)/2$ real and $d(d-1)/2$ imaginary entries from the upper triangular portion of the input matrix; we define $\hmat=\hvec^{-1}$ which reshapes the real vector back into the Hermitian matrix. This vectorization is useful for deriving certain results as well as for numerical implementation of algorithms and has little consequential effect on the math since it is a unitary linear transformation.

\newcommand{\vecrho}{\vec{x}} % choose something better>
It is sometimes convenient to examine $\likelihood$ when we vectorize the input. Letting $\vecrho = \hvec(\rho)$ and $\vec{a}_i = \hvec(\overline{A_i})\in\R^{d^2}$, we define
\begin{align}\label{eq:grad_l}
&\ell(\vecrho) = \likelihood(\hmat(\vecrho)) = -\sum_{i=1}^{\nTotalPOVM} f_i \log( \vec{a}_i^\top\vecrho ), \\ %\quad 
&\text{hence}\; \nabla \ell:\R^{d^2}\to\R^{d^2}, \;
    \nabla \ell(\vecrho) = -\sum_{i=1}^{\nTotalPOVM} \frac{f_i}{\vec{a}_i^\top\vecrho} \vec{a}_i.
\end{align}
The Hessian of $\likelihood$, $\nabla^2 \likelihood$, is a $d\times d\times d\times d$ tensor whereas the Hessian of $\ell$ is a $d^2\times d^2$ symmetric matrix that is easier to work with. Specifically,
\begin{equation}\label{eq:hess_l}
 \nabla^2 \ell:\R^{d^2}\to\R^{d^2 \times d^2}, \;
    \nabla^2 \ell(\vecrho) = \sum_{i=1}^{\nTotalPOVM} \frac{f_i}{\left(\vec{a}_i^\top\vecrho\right)^2} \vec{a}_i\vec{a}_i^\top \succeq \mat{0}.
\end{equation}    
Since $\nabla^2 \ell(\vecrho) \succeq \mat{0}$, $\ell$ is convex, and since $\likelihood(\rho)=\ell(\hvec)$ is the composition of a convex function with a linear function, $\likelihood$ is also convex.

\subsection{More information on the factored objective function}
The Hessian of $\f$ is  a little unwieldy to work with since it is a tensor for $r>1$, but for reference we state it below for the special case when $r=1$ (and write $\vec{u}$ rather than $\U$ to indicate this):

\begin{multline}
    \nabla^2\f(\vec{u}) =   2\sum_{i=1}^{\nTotalPOVM} f_i\Bigg( \frac{2}{(\vec{u}^\adjoint  A_i  \vec{u})^{2}} A_i\vec{u} (\vec{u}^\adjoint  A_i) \\
    -\frac{1}{\vec{u}^\adjoint  A_i\vec{u}} A_i \Bigg) \in \H^d. %\C^{d\times d}.
\end{multline}

\section{Application of Pauli matrices} \label{appendix:apply-paulis}

\begin{algorithm}[H]
\DontPrintSemicolon
\caption{\textsc{ApplyX} — Apply Pauli-\(X\) to the \(i\)-th qubit}
\label{alg:applyX}
\KwIn{\( \U \in \mathbb{C}^{2^n \times r} \), qubit index \( k \), $d$}
%\KwOut{State with \( \sigma_X \) applied to qubit \( i \)}

% \Begin{
    \( n \gets \log_2(d) \)\;

    Construct index shift pattern: \\
    \hspace{2em} $ \text{order} \gets \{\{+2^k\}^{2^k}, \{-2^k\}^{2^k} \}^{2^{(n-k-1)}}$\;

    \( \text{idx} \gets [0, 1, \dots, d-1] \)\;

% }
\kwReturn {$ \U[\text{idx} + \text{order}, :] $}
\end{algorithm}

\begin{algorithm}[H]
\DontPrintSemicolon
\caption{\textsc{ApplyZ} — Apply Pauli-\(Z\) to the \(i\)-th qubit}
\label{alg:applyZ}
\KwIn{ \( \U \in \mathbb{C}^{2^n \times r} \), qubit index \( k \), \(d \)}
%\KwOut{State with \( \sigma_Z \) applied to qubit \( i \)}

% \Begin{
    \( n \gets \log_2(d) \)\;

    Construct sign pattern: \\
    \hspace{2em} $ \text{scale}  \gets \{\{+1^k\}^{2^k}, \{-1^k\}^{2^k} \}^{2^{(n-k-1)}}$\;

    $\U \gets \text{scale} \times \U$ \tcp*{row-wise scaling via broadcasting}
    
% }
\kwReturn {$ \U $}
\end{algorithm}
We use the notation $\{a\}^b$  to denote a set containing the value $a $, repeated $b$ times.

\section{Simulation details of Sec.~\ref{sec:sample-Pauli}}\label{appendix:lowmem}

In our low-memory setting with $n \gtrsim 16$, it is too computationally intensive to even calculate all $d^2$ of the probabilities needed for the weighted DFE dsampling.  One could avoid precomputing all probabilities and calculate them only as needed, in a rejection-sampling framework, but this is still inefficient since the rejection rate (using uniform sampling as the base distribution) would be too high.

To circumvent the issue, when we do experiments with very large qubit systems, we will generate the physical state $\rho_0$ as a separable state $\rho_0 = \rho^{(1)} \otimes \ldots \otimes \rho^{(n)}$, so for an $n$-qubit Pauli $W = \sigma_1 \otimes \ldots \otimes \sigma_n$, then
\begin{align*}
p = \trace(W\rho_0)^2
&= \trace( (\sigma_1 \rho^{(1)} ) \otimes \ldots 
\otimes (\sigma_n \rho^{(n)} ) )^2 \\
&= \underbrace{\trace(\sigma_1 \rho^{(1)} )^2}_{p_1} \cdot \ldots 
\cdot \underbrace{\trace(\sigma_n \rho^{(n)} ) )^2}_{p_n} \\
&= p_1 p_2\cdots p_n
\end{align*}
via properties of trace and the Kronecker product.
Since $\rho_0$ is fixed, there are only 4 values $p_i$ can take since $\sigma_i \in \{I, \sigma_x, \sigma_y, \sigma_z \}$.  Thus to sample from this distribution (that is, draw a $W = \sigma_1 \otimes \ldots \otimes \sigma_n$), we first sample $\sigma_1$ from the 4-element distribution 
$\{ \trace(I \rho^{(1)} )^2,\trace(\sigma_x \rho^{(1)} )^2,\trace(\sigma_y \rho^{(1)} )^2,\trace(\sigma_z \rho^{(1)} )^2 \}$, then sample $\sigma_2$ and the rest in a similar fashion. By building up the sampled Pauli matrix one qubit at a time, the process is very efficient.  %When $\rho_0$ is entangled, this derivation does not apply, though we hypothesize that it may be modified using similar ideas. %and form a useful base distribution for importance sampling.
Efficiently sampling from weighted distributions for entangled states in the $n \gtrsim 16$ regime is an interesting topic for further research.

% If you have acknowledgments, this puts in the proper section head.
\begin{acknowledgments}
This material is based upon work supported by the National
Science Foundation under Grant No.\ 2106834.
\end{acknowledgments}

\bibliography{ref}% Produces the bibliography via BibTeX.

\end{document}